\documentclass[reqno,10pt]{amsart}
 \oddsidemargin9mm
 \evensidemargin9mm 
 \textwidth14.6cm

\usepackage[off]{auto-pst-pdf}

\usepackage{amssymb}
\usepackage{amsmath}
\usepackage{amsthm}
\usepackage{mathrsfs}
\usepackage{pgf}
\usepackage{color}
\usepackage{graphicx}   
\usepackage{hyperref}

\usepackage{amsmath}
  \usepackage{paralist}
\usepackage{graphicx}  
\usepackage{psfrag}
\usepackage{comment}
\usepackage{esvect}

\parskip1mm


\newtheorem{theorem}{Theorem}[section]
\newtheorem{corollary}{Corollary}[section]

\newtheorem{lemma}[theorem]{Lemma}

\theoremstyle{definition}

\newtheorem{remark}[theorem]{Remark}



\newcommand{\Rz}{\mathbb{R}}
\newcommand{\Nz}{\mathbb{N}}
\newcommand{\Zz}{\mathbb{Z}}

\newcommand{\eps}{\varepsilon}






\newcommand{\UUU}{\color{black}} 
\newcommand{\NNN}{\color{black}} 
 
\newcommand{\EEE}{\color{black}}

\usepackage{latexsym,amsfonts}
\usepackage{mathrsfs}
\usepackage{centernot}
\usepackage{paralist}

\usepackage{eso-pic}  
\usepackage{pifont}
\usepackage{fixmath} 

\usepackage{metalogo}
\usepackage{booktabs}
\makeatletter

\title[Ripples  in graphene]{Ripples in graphene: A variational approach}
\author{Manuel Friedrich}  \author{Ulisse Stefanelli} 

\subjclass[2010]{Primary: 82D25}
 \keywords{ Suspended graphene, nonflatness, periodicity,  wave patterning,  configurational energy, variational perspective.  }

\author{Manuel Friedrich}
\address[Manuel Friedrich]{Applied Mathematics,  
Universit\"{a}t M\"{u}nster, Einsteinstr. 62, D-48149 M\"{u}nster, Germany}
\email{manuel.friedrich@uni-muenster.de}
\urladdr{https://www.uni-muenster.de/AMM/Friedrich/index.shtml}

\address[Ulisse Stefanelli]{Faculty of Mathematics,  University of Vienna, Oskar-Morgenstern-Platz 1, A-1090 Vienna, Austria $\&$ Istituto di Matematica Applicata e Tecnologie Informatiche ``E. Magenes'' - CNR, v. Ferrata 1, I-27100 Pavia, Italy}
\email{ulisse.stefanelli@univie.ac.at}
\urladdr{http://www.mat.univie.ac.at/~stefanelli}


\begin{document}


\begin{abstract}
  Suspended graphene samples are observed to be gently rippled
 rather than being flat. In \cite{emergence}, we have checked that this
 nonplanarity can be rigorously described within the classical
 molecular-mechanical frame of configurational-energy
 minimization.  There,  we have  identified   all ground-state
 configurations with graphene topology  with respect to   classes of next-to-nearest neighbor
 interaction energies and classified their fine nonflat geometries.

In this second paper on graphene nonflatness, we refine the analysis
further and prove the emergence of  wave
patterning.  \UUU Moving within the frame of \cite{emergence}, rippling formation
 in graphene is reduced to a two-dimensional problem for
 one-dimensional chains. \EEE Specifically, we show that  almost  minimizers of the
configurational energy develop  waves with specific wavelength,
independently of the size of the sample. This corresponds  remarkably 
to experiments and simulations.
\end{abstract}

 \maketitle

\tableofcontents 


 \section{Introduction}

Carbon forms a variety of different allotropic nanostructures.  Among these a prominent role  is played by graphene, a pure-carbon structure consisting of a one-atom thick
layer of atoms arranged in a hexagonal lattice. Its
serendipitous discovery in 2005 has been awarded the 2010 Nobel
Prize in Physics to Geim and Novoselov and has sparkled an
exponentially growing research activity.
The fascinating electronic  
 properties of graphene are believed to
 offer unprecedented opportunities for innovative
 applications, ranging from next-generation
 electronics to pharmacology, and including batteries and solar
 cells. A new branch of Materials Science dedicated to
lower-dimensional systems has developed, cutting across Physics and
Chemistry and extending from fundamental science to production~\cite{Ferrari}.

Despite the progressive growth of experimental, computational, and theoretical
understanding of graphene, the accurate description of its fine
geometry remains to date still elusive. Indeed, suspended graphene
samples are not exactly flat but gently rippled
 \cite{Bao,Meyer} and waves of approximately one hundred atom spacings have
 been  predicted computationally  \cite{Fasolino}. Such departure from planarity seems to be
 necessary in order to achieve stability at finite temperatures, in
 accordance with the limitations imposed by the classical Mermin-Wagner Theorem
 \cite{Landau2,Mermin,Mermin2}. Even in the zero-temperature limit,
 recent  computations \cite{Herrero} suggest  that 
 nonplanarity is still be expected due to quantum
 fluctuations. Note
 that, beside the academic interest, the
 fine geometry of graphene sheets is of a great applicative
  importance,  for  it is considered to be the relevant scattering mechanism
 limiting electronic mobility \cite{Katsnelson,Zwierzycki}.

The aim of this paper is to prove that the emergence of waves with a
specific, sample-size independent wavelength can be rigorously
predicted. We move within the frame of Molecular Mechanics, which consists in describing the
carbon atoms as classical particles and  in  investigating minimality
with respect to  a corresponding configurational energy. This
energy is given in terms of  classical potentials and takes 
into account both attractive-repulsive {\it two-body} interactions,
minimized at some given bond length, and {\it three-body} terms
favoring specific angles between bonds
\cite{Brenner90,Brenner02,Stillinger-Weber85,Tersoff}. With respect to quantum-mechanical models, Molecular Mechanics has the advantage of being simpler and
parametrizable, although at the expense of a certain degree of
approximation. Remarkably, it delivers the only computationally amenable option as the size of the
system scales up. In addition, it often allows for a rigorous mathematical
analysis. In particular, crystallization results for graphene in two dimensions have been
proved both in the thermodynamic limit setting \cite{E-Li09,Smereka15}
and in the case of a finite number of atoms
\cite{Davoli15,Mainini-Stefanelli12}. The fine geometry of other
carbon nanostructures has also been investigated \cite{tube,Friedrich16,cronut,Mainini15,Mainini15b,stable}.

A first step toward the understanding of rippling in
graphene is detailed in the
companion paper \cite{emergence} where we investigate ground-state
deformations of the regular hexagonal lattice with respect  to  configurational energies including
next-to-nearest-neighbor interactions.  (Note  that pure
nearest-neighbor interactions predict flat minimizers.) In such
setting, optimal hexagonal cells are not planar, see Figure \ref{pink}
left. The main
result of \cite{emergence} is a classification of all graphene ground
states into two distinct families: rolled-up and rippled 
configurations. Rolled-up structures ideally correspond to carbon
nanotubes. Their optimality   recalls remarkably  the experimental
evidence that free-standing graphene samples tend to roll up \cite{Lambin}.  
Rippled configurations, see Figure \ref{pink}, would instead correspond to {\it suspended}
graphene patches, where the rolling-up is prevented by the adhesion to
a probing frame. 
\begin{figure}[h]
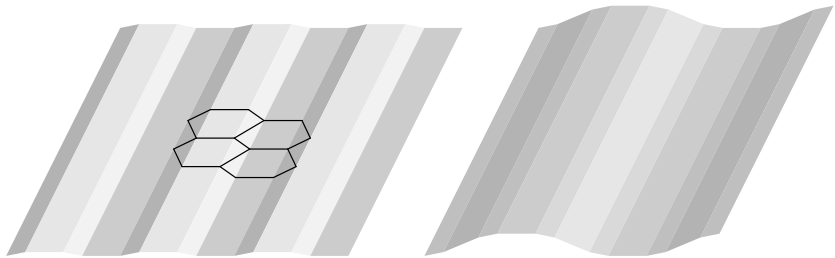

  \centering
  \pgfdeclareimage[width=110mm]{pink}{pinkfluffyunicorn} 
\pgfuseimage{pink}
\caption{Examples of rippled structures.}\label{pink}
\end{figure}

Our focus is here on rippled configurations. These are not planar and feature a
specific direction in three-dimensional space along which they are
periodic. \UUU The full three-dimensional description of rippled
configurations is hence \EEE completely determined by
orthogonal sections to such \UUU specific periodicity \EEE direction
(\UUU see \EEE the free edge at the bottom of
the samples in Figure \ref{pink}, for instance). The aim of this paper
is   to address  the geometry of such orthogonal sections (and hence of
the whole rippled configuration) from a variational viewpoint. \UUU In
fact, such sections are nothing but one-dimensional chains in 
two dimensions.\EEE 

We
 introduce an {\it effective energy} for such sections by
considering cell centers as particles and favoring a specific
distance  $\bar{b}$  between cell centers  and a specific angle  $\pi - \bar{\psi}$   between segments connecting
neighboring cell centers. Figure \ref{sections} illustrates this setting
in the case of the  rippled configuration on the left of    Figure \ref{pink}.
\begin{figure}[h]
  \centering
  \pgfdeclareimage[width=130mm]{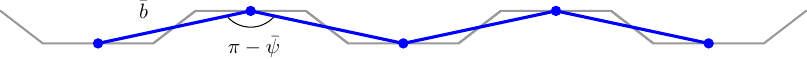}{sections} 
\pgfuseimage{sections}
\caption{Effective description of the section of the rippled
  structure on the
right of Figure \ref{pink}.}\label{sections}
\end{figure}
Specific wave patterns of the rippled structure will then correspond
to waves in the chain of cell centers, as in the case of the  rippled
configuration on the right of Figure \ref{pink}.  By slightly
abusing terminology, we  shall hence call  {\it
  \UUU particles\EEE} such cell centers and  {\it bonds}  the segments  between two
neighboring cell centers.

In the following, two choices for the effective energy are
considered. At first, we analyze the   \emph{reduced energy} \eqref{eq: reduced energy}
taking into  account  nearest- and next-to-nearest-neighbor
interactions and  favoring nonaligned consecutive bonds.  This leads to a large variety of energy minimizers with many
different geometries,  see Figure \ref{bordello}.    We then specialize the
description via the \emph{(general) energy} \eqref{eq: general energy}
taking  additionally  longer-range
interactions  into account. This second choice leads to a
finer characterization of  energy minimizers since the energy accounts  also  for
curvature changes of the chain.

Our main result (Theorem \ref{th:4}) states the possibility of finding
an optimal wavelength for energy minimizers.
More precisely, for all prescribed overall lengths of the chain one finds an
optimal wavelength $\lambda$ such that all almost minimizers of the
energy with that specific length can be viewed as compositions of $\lambda$ waves, up to
lower-order terms. Note that by fixing a given length of the
chain one actually imposes a boundary condition, which corresponds to
suspending the sample. Without such  a  boundary condition, no optimal
wavelength is to be expected, for the sample would be rolling up, an
instance which is indeed captured by our description.
The crucial point of our result is that the optimal wavelength
$\lambda$ is independent of the size of the system. This corresponds to
experimental and computational findings \cite{Fasolino,Meyer}. \UUU It
is worth at this point \NNN to emphasize \UUU that the model features no ad-hoc addition of a
mesoscopic lengthscale and that the optimal wavelength exclusively
arises from minimality. \EEE

All results of the paper are presented in Section 2. The
corresponding proofs are based on elementary arguments but
are technically very involved and are detailed in Sections 3-6. A first step is achieved in Section 3
where we consider a {\it cell
  energy} depending just on three consecutive \UUU particles. \EEE Here, convexity
allows to check that minimizers are  configurations  where the two bonds
between the \UUU particles \EEE are not aligned. 

In Section 4 we consider the \emph{single-period problem} of a chain
which changes its curvature only once. In particular, we identify the
optimal \emph{wavelength} $\lambda^l_{\rm max}$ depending on the number of  bonds  $l$
(later referred to as  \emph{\UUU discrete-wave \EEE period}). To this aim, it is instrumental to
check for the concavity  of  the mappings $l \mapsto \lambda^l_{\rm max}$ and $l
\mapsto \lambda^l_{\rm max}/l$  (see Lemma \ref{lemma: Lambda0} and Lemma
\ref{lemma: Lambda})  where $\lambda^l_{\rm max}/l$ represents the \emph{normalized wavelength}.  Eventually, by convexity arguments  we are
able to control the deviation of the length of the chain from the
optimal wavelength $\lambda^l_{\rm max}$ 
  in terms of the energy excess, see Lemma \ref{lemma:
    Lambda-stretching}. The strategy is then to identify candidate
  minimizers by
  composing more single-period chains,  see Figure \ref{figure6} for an illustration.  This turns out to be properly
  doable for even \UUU discrete-wave \EEE periods only. The treatment of odd \UUU discrete-wave \EEE
  periods is surprisingly
  much more intricate, see e.g. Lemma \ref{lemma: mixture}. One
  resorts  there  in showing that the combination of two single-period
  waves with odd \UUU discrete-wave \EEE periods is unfavored with respect to the combination of two single-period
  waves with even \UUU discrete-wave \EEE periods having the same overall length. 
  
Once the single-period problem is settled, we tackle in Section 5 the
\emph{multiple-period problem}, by allowing the chain
 to change curvature more than once. We show here that the 
 energy of the chain depends on the number of \UUU particles \EEE where the chain
 changes its curvature, see Lemma \ref{lemma: multi-general}. We  also 
 quantify the length of the chain in terms of the number of different
 \UUU discrete-wave \EEE periods composing it (Lemma \ref{lemma: chain-minimal})
 and we show that the length excess can be controlled in terms of the energy excess (Lemma \ref{lemma: multi-reduced}).

 Section 6 finally contains the proof of the main result. We firstly 
 address the characterization of the minimal energy (Theorem \ref{th:1}). The
 upper bound for the minimal energy is obtained via an explicit
 construction  composing single-period chains.  The proof of the matching lower bound is more subtle and relies on
 the fine geometry of almost
 minimizers. In particular, we show that  a chain with almost minimal energy 
 essentially consists  exclusively
    of single-period chains with a specific \UUU discrete-wave \EEE period, which only
    depends on the choice of the boundary conditions.  The main
    underlying observation is made in terms of normalized wavelengths
    (wavelength divided by \UUU discrete-wave \EEE period): (1) the normalized
    wavelength of chains with larger \UUU discrete-wave \EEE periods is too  short
      to accommodate the boundary conditions and (2)  chains with smaller \UUU discrete-wave \EEE period, although having sufficiently large normalized wavelength, cost too much energy due to a large number of curvature changes.  The arguments rely on a fine interplay of the longer-range contributions and the wavelength $\lambda^l_{\rm max}$
    for different \UUU discrete-wave \EEE periods $l$.

\section{The model and main results}

\subsection{Admissible configuration and configurational energy}
 
We consider chains consisting of $n \in \Nz$ \UUU particles \EEE and corresponding
deformations $y: \lbrace 1,\ldots, n \rbrace \to \Rz^2$. We write $y_i
= y(i)$ for $i =1,\ldots,n$  and  introduce the set of \emph{admissible configurations} by
\begin{align}\label{eq: admissible conf}
\begin{split}
\mathcal{A}_n = \lbrace y :\lbrace 1,\ldots, n \rbrace \to \Rz^2 \, | \ &|y_i - y_j| > 1.5 \ \text{ for } \ i,j: \ |i - j| \ge 2, \\ & |y_i - y_{i+1}| \le 1.5 \ \text{ for } \ i=1,\ldots,n-1 \rbrace.
\end{split}
\end{align}
The  conditions above ensure  that only consecutive points in the chain are \emph{bonded}. In particular, apart from $i=1$ and $i=n$, each atom is bonded to exactly two other \UUU particles. \EEE Here, the value $1.5$ is chosen for definiteness only.

For two vectors $a_1,a_2 \in \Rz^2$ we let $\sphericalangle(a_1,a_2) \in [0,2\pi)$ be the angle between $a_1$ and $a_2$, measured counterclockwisely. We define the \emph{bond lengths} and \emph{angles} of the chain by
\begin{align}\label{eq: bonds and angles}
b_i = |y_i - y_{i+1}| \ \text{ for } \ i=1,\ldots, n-1, \ \ \  \ \varphi_i = \sphericalangle(y_{i-1} - y_i, y_{i+1} - y_i) \ \text{ for } \ i=2,\ldots,n-1. 
\end{align}

 In the following we introduce the   {\it configurational energy $E_n$ of a chain}, and we detail the hypotheses which we assume
 on $E_n$ throughout the paper. The energy   is given by the sum of two contributions, respectively accounting for  {\it   two-body and  three-body  interactions among particles}  that are respectively modulated by the potentials $v_2$ and $v_3$, see \eqref{eq: reduced energy} and \eqref{eq: general energy}. 

We assume that the   {\it two-body potential}
$v_2:(0,\infty)\to[-1,\infty)$ is smooth and attains its minimum value
only  at $1$ with $v_2(1) = -1$ and  $v''_2(1)>0$. Moreover, we
suppose that $v_2$ is strictly increasing  right of $1$.  \UUU
Referring to the modeling of  graphene sheets \cite{emergence}, 
this potential models the effective interaction between different
graphene-lattice cells, favoring a specific distance of cell centers, here normalized to $1$.  \EEE

The  {\it three-body potential} $v_3: [0,2\pi]\to[0,\infty)$ is assumed to be smooth and  symmetric around $\pi$, namely
$v_3(\pi-\varphi)=v_3(\pi+\varphi)$. Moreover, we suppose that the
minimum is attained only at $\pi$ with  $v_3(\pi) = v_3'(\pi) =
v''_3(\pi) = v_3'''(\pi) = 0$, and $v_3''''(\pi) >0$.  \UUU With reference
to the modeling of graphene sheets, the latter potential describes
the energy associated with the flatness of adjacent
graphene-lattice cells
\cite{emergence}. In particular, $v_3$ is not related to angles
between bonded carbon atoms but contributes an effective descriptor
of flatness of cells.
The reader
is referred to \cite[Section 5]{emergence} and in particular to 
formula \cite[(5.2)]{emergence} for a discussion of this term.  \EEE


We introduce a configurational energy by
\begin{align}\label{eq: reduced energy}
E^{\rm red}_n(y) = \sum_{i=2}^{n-1}E_{\rm cell}(y_{i-1}, y_i,y_{i+1}) 
\end{align}
 where the   \emph{cell energy} is defined as 
\begin{align}\label{eq: cell energy}
E_{\rm cell}(y^1,y^2,y^3) = v_2(|y^2-y^1|) + v_2(|y^3-y^2|) + v_3(\sphericalangle(y^3-y^2,y^1-y^2) )  +  \rho v_2(|y^3-y^1|)  
\end{align}
for $y^1,y^2,y^3 \in \Rz^2$. The constant $\rho > 0$ will be chosen to
be suitably small later on. 
\UUU More precisely, one could reformulate the whole
theory by prescribing a single two-body potential $\tilde v_2$ and
letting 
\begin{align}\label{eq: reformulation}
E_{\rm cell}(y^1,y^2,y^3) = \tilde v_2(|y^2-y^1|) + \tilde
v_2(|y^3-y^2|) + v_3(\sphericalangle(y^3-y^2,y^1-y^2) )  +   \tilde
v_2(|y^3-y^1|).
\end{align}
In this setting, the dimensionless constant $\rho>0$ would measure the
ratio between the energetic contributions of first and second
neighbors. \NNN (Specifically, $\tilde v_2(1)=-1$ and $|\tilde{v}_2(2)| \le \rho$ in a neighborhood of $2$.) \UUU Since our analysis is largely based on the smallness of such ratio, we prefer to highlight this in the notation and stick to the
equivalent form in \eqref{eq:
  cell energy}. \EEE

 Since in the sequel we will consider also a more general energy, the configurational energy  \eqref{eq: reduced energy} is called the \emph{reduced energy}. Let us mention that due to the fact that $E^{\rm red}_n$ is written as a sum over cell energies, the two-body contributions at the left and right end of the chain are counted only once and not twice. However, since we focus on the case of large numbers of \UUU particles \EEE $n$ and we are not interested in  describing the fine geometry close to the ends of the chain, this effect will be negligible for our analysis.

\begin{figure}[h]
  \centering
  \pgfdeclareimage[width=110mm]{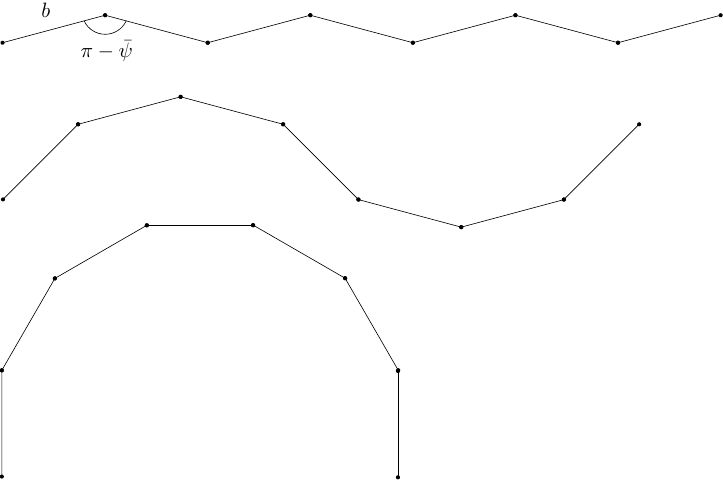}{bordello} 
\pgfuseimage{bordello}
\caption{Energy minimizers of \eqref{eq: reduced energy} with different
geometries.}\label{bordello}
\end{figure}

Our first result addresses the configurations with minimal reduced
energy.  In particular, we check that all configurations
minimizing the reduced energy have bonds of equal length and show
exactly two possible bond angles.

\begin{theorem}[Minimizers for the reduced energy]\label{th: energy-red}
Let $\rho>0$ be small depending only on $v_2$ and $v_3$. Then there exist $e_{\rm cell} \in \Rz$, $0<\bar{b}<1$, and $\bar{\psi} \in (0,\pi/8)$ such that 
$$\min_{y \in \mathcal{A}_n} E^{\rm red}_n(y) = (n-2)e_{\rm cell} $$
and each configuration $y \in \mathcal{A}_n$ with minimal energy satisfies $b_i = \bar{b}$ for $i=1,\ldots,n-1$ and $\varphi_i = \pi + \bar{\psi}$ or $\varphi_i = \pi - \bar{\psi}$ for $i=2,\ldots,n-1$.
\end{theorem}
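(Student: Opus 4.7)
The plan is to reduce the problem to minimizing a single cell energy and then analyze that minimization perturbatively in $\rho$. Since
\[
E^{\rm red}_n(y) = \sum_{i=2}^{n-1} E_{\rm cell}(y_{i-1}, y_i, y_{i+1}),
\]
one immediately has $E^{\rm red}_n(y) \ge (n-2)e_{\rm cell}$ with $e_{\rm cell}:=\inf E_{\rm cell}$, and equality forces every cell to be a minimizer of $E_{\rm cell}$. Since $E_{\rm cell}$ is rigid-motion invariant, I parametrize it by the two bond lengths $b_1, b_2$ and the interior angle $\varphi$ as
\[
F(b_1, b_2, \varphi) = v_2(b_1) + v_2(b_2) + v_3(\varphi) + \rho v_2\bigl(\sqrt{b_1^2 + b_2^2 - 2 b_1 b_2 \cos\varphi}\bigr),
\]
which is symmetric under $b_1 \leftrightarrow b_2$ and, by the reflection symmetry of $v_3$ around $\pi$, also under $\varphi \mapsto 2\pi - \varphi$.

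At $\rho = 0$, $F$ attains its unique minimum $-2$ at $(1, 1, \pi)$. A compactness argument localizes minimizers for small $\rho$: outside any $\delta$-ball around $(1, 1, \pi)$ the zeroth-order energy exceeds $-2 + c(\delta)$ for some $c(\delta) > 0$, while $F(1, 1, \pi) \le -2 + \rho v_2(2)$, so for $\rho$ small every global minimizer lies in this ball. Inside it, the invertibility of $\partial^2_{(b_1, b_2)}F = v_2''(1)\,\mathrm{Id}$ at the reference point plus the implicit function theorem yield a smooth branch $(b_1^*(\varphi, \rho), b_2^*(\varphi, \rho))$ of critical points in $(b_1, b_2)$, and the swap symmetry forces $b_1^* = b_2^* =: b^*$. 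Setting $g(\varphi) := F(b^*, b^*, \varphi)$ and using the envelope theorem together with $v_3^{(k)}(\pi) = 0$ for $k \le 3$, $v_3''''(\pi) > 0$, $b^* = 1 + O(\rho)$, and $v_2'(2) > 0$ (from the strict monotonicity of $v_2$ right of $1$), Taylor expansion delivers
\[
g'(\pi + \psi) = \tfrac{1}{6} v_3''''(\pi)\, \psi^3 - \tfrac{1}{2} \rho v_2'(2)\, \psi + (\text{higher order}).
\]
The roots near $\psi = 0$ are $\psi = 0$ (a local maximum, since $g''(0) = -\tfrac{1}{2}\rho v_2'(2) < 0$) and the two symmetric local, and hence by localization global, minima $\psi = \pm\bar\psi$ with $\bar\psi = \sqrt{3 \rho v_2'(2)/v_3''''(\pi)}\,(1 + o(1))$, exchanged by the angular symmetry.

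Setting $\bar b := b^*(\pi + \bar\psi, \rho) = 1 - \rho v_2'(2)/v_2''(1) + o(\rho) < 1$ and $e_{\rm cell} := g(\pi + \bar\psi)$, one has $\bar\psi = O(\sqrt{\rho}) \in (0, \pi/8)$ for $\rho$ small. The lower bound $E^{\rm red}_n \ge (n-2) e_{\rm cell}$ is attained by the explicit zigzag with bond lengths $\bar b$ and alternating angles $\pi \pm \bar\psi$; admissibility in $\mathcal{A}_n$ is immediate since for $|i - j| = 2$, $|y_i - y_j| = 2\bar b \cos(\bar\psi/2) \to 2 > 1.5$ as $\rho \to 0$, and for $|i - j| \ge 3$ the zigzag progresses essentially linearly per bond, placing distances well above $1.5$. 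Uniqueness follows because equality in $E^{\rm red}_n \ge (n-2)e_{\rm cell}$ forces every cell to achieve the minimum of $F$, hence $(b_{i-1}, b_i, \varphi_i) = (\bar b, \bar b, \pi \pm \bar\psi)$ for every $i$. The main delicate point of the proof is the local analysis near $(1, 1, \pi)$: because $v_3$ has a degenerate fourth-order minimum at $\pi$, the balance in $g'(\pi + \psi) = 0$ is between a cubic-in-$\psi$ term coming from $v_3$ and a linear-in-$\psi$ term of size $\rho$, producing the nonobvious scaling $\bar\psi \sim \sqrt{\rho}$, and the sign $v_2'(2) > 0$ is exactly what selects two distinct symmetric minima in place of a single flat one at $\psi = 0$.
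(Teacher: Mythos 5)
Your proposal is correct, and its scaffolding --- reduction to the cell problem, localization of minimizers near $(1,1,\pi)$ for small $\rho$, and use of the swap and reflection symmetries --- is the same as the paper's; but the key step, namely identifying the minimizer of the localized cell energy and proving its uniqueness up to reflection, is carried out by a genuinely different argument. The paper (Lemma \ref{lemma: cell energy}) substitutes $\theta=(\varphi-\pi)^2$ and shows that $G(b_1,b_2,\theta)=\tilde{E}_{\rm cell}(b_1,b_2,\pi+\sqrt{\theta})$ is strictly convex near $(1,1,0)$, its Hessian being a small perturbation of ${\rm diag}(v_2''(1),v_2''(1),v_3''''(\pi)/12)$, so that uniqueness is immediate and $\bar\psi>0$ follows from $\partial^2_{\varphi}\tilde{E}_{\rm cell}(\cdot,\cdot,\pi)<0$. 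You instead eliminate the bond variables by the implicit function theorem and run a one-dimensional bifurcation analysis of $g'(\pi+\psi)$, balancing $\tfrac16 v_3''''(\pi)\,\psi^3$ against $-\tfrac12\rho v_2'(2)\,\psi$. Your route yields sharper quantitative output --- the leading-order asymptotics $\bar\psi\sim\sqrt{3\rho v_2'(2)/v_3''''(\pi)}$ and $\bar b=1-\rho v_2'(2)/v_2''(1)+o(\rho)$, consistent with Remark \ref{rem: psi} --- and you verify attainment explicitly via the admissible alternating zigzag, a point the paper leaves implicit at this stage; on the other hand, the paper's convexification buys, as a byproduct, the uniform Hessian bound \eqref{eq: positive Hessian0}, which is used repeatedly in Sections 4--6 and which your expansion does not directly provide. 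Two small points, common to both arguments but worth making explicit: the conclusion $g''(\pi)<0$ uses $v_2'>0$ at the second-neighbor distance near $2$ (strict monotonicity alone only gives $v_2'\ge 0$ there), and to rule out further critical points of $g$ on $(\pi,\pi+\eps)$ beyond $\pi$ and $\pi+\bar\psi$ you should note that $\psi\mapsto g'(\pi+\psi)/\psi$ is, by the evenness of $g$ around $\pi$, an increasing function of $\psi^2$ on that interval.
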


The result relies on the properties of the cell energy \eqref{eq: cell
  energy} and is proved in Section \ref{sec: cell}. We observe that
there are many minimizers of the energy with very different
geometries, see Figure \ref{bordello}.   In particular,
to exclude certain geometries,  in the
following we will take given  boundary conditions into account. 
 This is realized by specifying the length of the chain in
direction $e_1$. Indeed, let us fix the  {\it straining
parameter} $\mu$ in the set of admissible values $M$,  with $M\subset (2/3,1)$ being a  closed interval,  and define   
\begin{align}\label{eq: admissible conf-bdy}
\mathcal{A}_n(\mu) = \lbrace y \in \mathcal{A}_n| \  (y_n -y_1) \cdot e_1  = (n-1)\mu \rbrace. 
\end{align}
Note that the length $|y_n - y_1|$ of a minimizer of the reduced
energy is necessarily strictly smaller than $n-1$, for $\bar b<1$ and
$\bar \psi >0$. This implies that the choice of values of $\mu$ close to $1$ in
$\mathcal{A}_n(\mu)$ actually corresponds to {\it straining} the
chain.

Even by restricting to    the  special subclass
$\mathcal{A}_n(\mu)$,  (almost) minimizers of \eqref{eq: reduced energy} may have very different geometries, see Figure \ref{matching_waves}.

\begin{figure}[h]
  \centering
  \pgfdeclareimage[width=120mm]{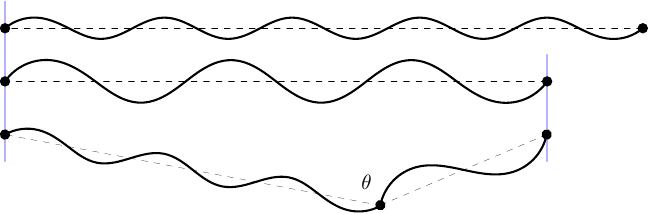}{matching_waves} 
\pgfuseimage{matching_waves}
\caption{Almost minimizers  of \eqref{eq: reduced energy}   consisting of single-period waves with different wavelengths (or in other words: different \UUU discrete-wave \EEE periods), represented by smooth waves for illustration purposes. Observe that the second and third configuration have different global geometries in spite of accommodating the same boundary conditions. The last configuration is only an almost minimizer since the angle $\theta$ is not $\pi \pm \bar{\psi}$.}\label{matching_waves}
\end{figure}

To investigate the qualitative differences and different geometries of various configurations with (almost) minimal reduced energy in more detail, we now introduce a general, refined energy.  For $y \in \mathcal{A}_n$ we let
\begin{align}\label{eq: general energy}
E_n(y) = E^{\rm red}_n(y) + \bar{\rho} \sum_{i=1}^{n-3}  v_2(|y_i -y_{i+3}|).
\end{align}
The term on the right accounts for \emph{longer-range interactions}. The constant  $\bar{\rho}>0$ will be chosen suitably  small with respect to $\rho$ later on, again reflecting the different relevance of the different contributions. We note that we could take more general interactions into account, but the contributions of third neighbors are already sufficient for our subsequent analysis and here we prefer simplicity rather than generality.  \NNN Let us also mention that a reformulation of \eqref{eq: general energy} in terms of  a single two-body potential $\tilde v_2$, similar to \eqref{eq: reformulation}, is possible. \EEE

\subsection{Characterization of minimal energy}

We will now  identify the minimal energy $E_n$ for given $\mu \in
M$.   We set 
\begin{align}\label{eq: minimization problem}
E_{\rm min}^{n,\mu} = \frac{1}{n-2} \min_{y \in \mathcal{A}_n(\mu)}E_n(y).
\end{align}

\begin{theorem}[Characterization of the minimal energy]\label{th:1}
 For  $\rho$ and $\bar{\rho}/\rho$ small enough (depending on
$v_2$, $v_3$,  and $M$)  we find a constant  $e^{\rm gen}_{\rm cell} \in \Rz$ and an increasing, convex, piecewise affine function $e_{\rm range}: M \to \Rz$, both only depending on $v_2,v_3$, $\rho$, and $\bar{\rho}$, such that
\begin{equation}
|e^{\rm gen}_{\rm cell} + \bar{\rho} e_{\rm range}(\mu) -  E_{\rm
  min}^{n,\mu}  | \le c \big( \bar{\rho}^2+ 1/n \big)\label{aggiu}
\end{equation}
for all $\mu \in M$, where $c=c(v_2,v_3,\rho)>0$.
\end{theorem}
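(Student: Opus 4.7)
The plan is to establish matching upper and lower bounds on $(n-2) E_{\rm min}^{n,\mu}$ at precision $O((n-2)\bar{\rho}^2 + 1)$, so that after normalizing by $n-2$ one recovers the claimed estimate with error $\bar{\rho}^2 + 1/n$.

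For the upper bound, I would construct test configurations by concatenating the single-period chains of Section 4. Each atomic period $l$ produces a wave with normalized wavelength $\lambda^l_{\rm max}/l$, which fixes the horizontal projection per bond. Given $\mu \in M$, I would either use a single atomic period (when $\mu$ coincides with one of the $\lambda^l_{\rm max}/l$) or mix two adjacent atomic periods so as to interpolate. By Theorem \ref{th: energy-red}, each such configuration has reduced energy $(n-2) e_{\rm cell}$ up to $O(1)$ boundary corrections; the third-neighbor term then contributes a $\bar{\rho}$-correction whose per-cell value depends on $\mu$ through the chosen $l$. Defining $e_{\rm range}(\mu)$ as the infimum of these per-cell contributions and absorbing the $\mu$-independent part of the third-neighbor term into $e^{\rm gen}_{\rm cell}$, one obtains the upper estimate. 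The convex, piecewise affine structure of $e_{\rm range}$ is forced by linear interpolation between the discrete values attained at $\mu = \lambda^l_{\rm max}/l$, with convexity an immediate consequence of the concavity of $l \mapsto \lambda^l_{\rm max}/l$ (Lemma \ref{lemma: Lambda}).

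For the matching lower bound, Theorem \ref{th: energy-red} already gives $E^{\rm red}_n(y) \ge (n-2) e_{\rm cell}$, so the task reduces to bounding the third-neighbor sum from below by $\bar{\rho}(n-2) e_{\rm range}(\mu)$ up to the allowed error. For any $y$ almost minimizing $E_n$, the reduced-energy excess is $O(\bar{\rho})$; Lemmas \ref{lemma: Lambda-stretching} and \ref{lemma: multi-reduced} then translate this excess into control on the deviation of the chain length and local geometry from an ideal composition of single-period chains. Combining this with Lemma \ref{lemma: multi-general}, which penalizes curvature changes, and the boundary condition $(y_n - y_1)\cdot e_1 = (n-1)\mu$, one concludes that the bulk of $y$ must consist of single-period chains with one, or at most two adjacent, atomic periods: larger $l$ yields $\lambda^l_{\rm max}/l$ too large to realize $\mu$, whereas smaller $l$ is ruled out by the curvature-change penalty. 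Substituting this structural information into the third-neighbor sum completes the lower bound.

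The main obstacle is this pinning-down of the atomic period in the lower bound, which requires a careful balancing of three scales: the $\bar{\rho}$ scale of the third-neighbor contribution, the $\rho$ scale of the curvature-change penalties, and the affine constraint on the average normalized wavelength imposed by $\mu$. The smallness of $\bar{\rho}/\rho$ ensures that the reduced-energy penalties dominate any conceivable third-neighbor bonus from richer mixtures, so that only configurations with at most two adjacent atomic periods can be almost optimal; here the concavities in Lemmas \ref{lemma: Lambda0} and \ref{lemma: Lambda} are essential, as is the odd-period analysis encoded in Lemma \ref{lemma: mixture}. Finally, the $\bar{\rho}^2$ term in \eqref{aggiu} reflects a second-order perturbation: the actual optimizer differs from the reduced-energy optimizer by $O(\bar{\rho})$ in bond lengths and angles, so its third-neighbor contribution differs from the nominal value by $O(\bar{\rho}^2)$, while the $1/n$ contribution accounts for the $O(1)$ boundary terms after division by $n-2$.
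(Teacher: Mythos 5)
Your proposal follows essentially the same route as the paper: an upper bound obtained by concatenating single-period chains with one, or two adjacent, even atomic periods and rescaling to meet the boundary condition \eqref{eq: admissible conf-bdy}, and a lower bound that combines the curvature-change accounting of Lemma \ref{lemma: multi-general} with the length-versus-energy estimates of Lemmas \ref{lemma: Lambda-stretching} and \ref{lemma: multi-reduced}, the concavity of $\Lambda$, and the odd-period analysis. One step, however, is stated in a way that would fail. You exclude atomic periods larger than $l_*=\Lambda^{-1}(\mu)$ on the grounds that their normalized wavelength $\lambda^l_{\rm max}/l$ is ``too large to realize $\mu$.'' By Lemma \ref{lemma: Lambda} the map $l\mapsto\Lambda(l)=\lambda^l_{\rm max}/l$ is strictly \emph{decreasing}, so larger periods have normalized wavelength too \emph{short}, not too large. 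The distinction is not cosmetic: being too long per period is no obstruction at all, since $\lambda^l_{\rm max}$ is a maximum and Lemma \ref{lemma: Lambda-stretching} only penalizes the positive part $\big(|y_{l+1}-y_1|-l\Lambda(l)\big)_+^2$, so an overlong wave can be compressed at no leading-order cost. The correct dichotomy is the one in the paper: periods $l>l_*$ cannot accommodate the prescribed length $(n-1)\mu$, while periods $l<l_*$ can but carry too many curvature changes and are killed by the $\bar{\rho}e_{\rm per}$ penalty; you have the second half right and the first half inverted.

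A smaller imprecision: the convexity of $e_{\rm range}$ does not follow from the concavity of $l\mapsto\lambda^l_{\rm max}/l$ alone. In \eqref{eq: erange} one composes the decreasing, convex, piecewise affine wavenumber function $\Upsilon$ from \eqref{eq: upsilon} with the decreasing, concave $\Lambda^{-1}$, and both properties are needed (a decreasing convex function composed with a concave one is convex). Apart from these two points, your reconstruction of the argument, including the origin of the $\bar{\rho}^2$ and $1/n$ error terms, is faithful to the paper's proof.
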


The energy has a \emph{zero order term} $e^{\rm gen}_{\rm cell}$ which
is constant for all values $\mu \in M$ and is a small perturbation of
$e_{\rm cell}$ given in Theorem \ref{th: energy-red}, i.e., $|e^{\rm
  gen}_{\rm cell} - e_{\rm cell}| \le c\bar{\rho}$. Differences in the
minimal energy  in terms of $\mu$ appear  only in the  \emph{first
  order term} $\bar{\rho} e_{\rm range}$ which is associated to the
longer-range interactions. For the exact definitions of $e^{\rm
  gen}_{\rm cell} $ and $e_{\rm range}$ we refer to \eqref{eq:
  cellrange} and \eqref{eq: erange} below, respectively.  For an
illustration  of the graph of the function $e_{\rm range}$  
we refer to Figure \ref{figure5}.   

In Theorem
\ref{th:3} below we will see that almost minimizers of the
minimization problem  \eqref{eq: minimization problem} can be
interpreted as {\it waves}  (in a discrete sense).  Then,
$\bar{\rho} e_{\rm range}$ is essentially related to the
\emph{wavenumber} of the minimizer. In  particular, smaller values of
$\mu$ correspond to a smaller wavenumber or, respectively, to a larger
\emph{wavelength}.  Compare also the first and the second
configuration in  Figure \ref{matching_waves}. Roughly speaking, this
 effect corresponds to the waves having  `constant curvature', induced by the angle $\bar{\psi}$ from Theorem \ref{th: energy-red}.    In this context, the finite set 
\begin{align}\label{eq: Mres}
M_{\rm res} = \lbrace \mu \in M| \  e_{\rm range} \text{ is not differentiable in } \mu \rbrace.
\end{align}
of \emph{resonant lengths} plays a pivotal role since for $\mu \in
M_{\rm res}$   minimizers of \eqref{eq: minimization problem} are
(almost) periodic waves,  cf. Theorem \ref{th:3}  below.  

We remark
that the minimal energy can be characterized only up to small error
terms of the form $ 1/n$ and $\bar{\rho}^2$. The term $
1/n$ accounts for boundary effects at the left and right end of the
chain, induced by the longer-range interactions.  The term
$\bar{\rho}^2$ on the right-hand side of \eqref{aggiu}  reflects the
fact that periodic waves with different wavelengths lead to  a  different longer-range interaction. This effect will be discussed in more detail  in Lemma \ref{lemma: multi-general}.

\subsection{Characterization of  almost minimizers}
We now proceed with the characterization of almost minimizers.  Recalling \eqref{eq: bonds and angles} we define  
\begin{align}\label{eq: Cdef}
\mathcal{C}(y):= \lbrace i \in \lbrace 2,\ldots,n-2 \rbrace \, | \ \varphi_i > \pi, \ \varphi_{i+1} < \pi \rbrace,
\end{align}
which can be interpreted as \UUU particles \EEE where the chain `changes its curvature'. For convenience, we write 
$$\mathcal{C}(y) = \lbrace i_1,\ldots, i_{N(y)} \rbrace$$
for a strictly increasing sequence of integers, where $N(y) \in \Nz$ depends on $y$. We will interpret $|y_{i_{k+1}} - y_{i_k}|$, $k=1,\ldots,N(y)-1$, as the  wavelength  of a wave.

In the following, we say $y \in \mathcal{A}_n(y)$ is an \emph{almost minimizer} of \eqref{eq: minimization problem} if
\begin{align}\label{eq: almost minimizer}
\frac{1}{n-2}E_n(y) \le E_{\rm min}^{n,\mu} + c \big( \bar{\rho}^2+ 1/n\big), 
\end{align}
where $c$ is the constant from Theorem \ref{th:1}.  We now present
two results on the characterization of almost minimizers, starting
from the resonant case $\mu \in M_{\rm res}$.

\begin{theorem}[Characterization of   almost   minimizers, $\mu \in M_{\rm res}$]\label{th:3}
Let $M_{\rm res}$ be  defined in \eqref{eq: Mres} and let $\eps>0$. Then for $\rho$ and $\bar{\rho}/\rho$ small enough, depending on $v_2$, $v_3$, and $ M$, there are a finite, decreasing sequence $\lambda(\mu), \mu \in M_{\rm res}$, only depending on $v_2$, $v_3$, $\rho$, and a constant $c=c(v_2,v_3,\rho,\eps)>0$   such that following holds for all $n \ge \bar{\rho}^{-2}$: \\
For each $\mu \in M_{\rm res}$ every almost minimizer $y \in \mathcal{A}_n(\mu)$ of \eqref{eq: minimization problem}, with $\mathcal{C}(y) = \lbrace i_1,\ldots, i_{N(y)}\rbrace$, satisfies
\begin{align}\label{eq: wavelength}
\big||y_{i_{k+1}} - y_{i_k}| - \lambda(\mu)\big| \le \eps
\end{align}
for $ i_k  \in \mathcal{K} \subset \mathcal{C}(y)$, where
\begin{align}\label{eq: good portion}
\# (\mathcal{C}(y) \setminus \mathcal{K}) / n \le c\bar{\rho}.
\end{align}
\end{theorem}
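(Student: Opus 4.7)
The plan is to combine the energy characterization of Theorem \ref{th:1} with the local cell-wise rigidity of Theorem \ref{th: energy-red} and the resonance structure encoded in the kinks of the piecewise affine function $e_{\rm range}$ at the points $\mu \in M_{\rm res}$. The almost-minimizer condition \eqref{eq: almost minimizer} furnishes an \emph{energy budget} of order $\bar\rho^2 n + 1$ above the minimum, and the task is to convert this budget into an upper bound on the number of indices in $\mathcal{C}(y)$ whose successor is farther than $\eps$ from the target $\lambda(\mu)$.

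First I would split $E_n(y) = E_n^{\rm red}(y) + \bar\rho R_n(y)$ and use that $R_n(y) = O(n)$ uniformly on $\mathcal{A}_n$. Combining with Theorem \ref{th:1} yields $E_n^{\rm red}(y) \le (n-2)e_{\rm cell} + O(\bar\rho n)$, and the cell-wise strong convexity of $E_{\rm cell}$ around its minimizing configurations (Theorem \ref{th: energy-red}) then forces all but $O(\bar\rho n / \eta^2)$ atoms to satisfy $|b_i - \bar b| + \mathrm{dist}(\varphi_i, \{\pi\pm\bar\psi\}) < \eta$, for any chosen tolerance $\eta$. Removing these bad atoms perturbs $\mathcal{C}(y)$ by a set of size $O(\bar\rho n)$, so I may assume each segment $[i_k, i_{k+1}]$ is $O(\eta)$-close to a rigid single-period wave of integer atomic period $l_k := i_{k+1} - i_k$ as analyzed in Section 4. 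Consequently $|y_{i_{k+1}} - y_{i_k}|$ equals $\lambda_{\max}^{l_k}$ up to $O(\eta)$, and the longer-range contribution over the segment is $\bar\rho l_k g(l_k) + O(\eta)$ for the explicit function $g$ identified in Lemma \ref{lemma: multi-general}.

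Next I would rewrite the leading-order energy of $y$ as $(n-2)e^{\rm gen}_{\rm cell} + \bar\rho \sum_k l_k g(l_k)$, subject to the constraint $\sum_k \lambda_{\max}^{l_k} = (n-1)\mu + O(\bar\rho n + \eta n)$ coming from $y \in \mathcal{A}_n(\mu)$. By construction \eqref{eq: erange}, the lower convex envelope of the discrete data $\{(\lambda_{\max}^l/l,\, g(l))\}_{l \in \Nz}$ traces out $e_{\rm range}$, so that a point $\mu \in M_{\rm res}$ corresponds to a vertex of this envelope, attained uniquely by a single atomic period $l(\mu)$; I then set $\lambda(\mu) := \lambda_{\max}^{l(\mu)}$. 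The decreasing monotonicity of $\mu \mapsto \lambda(\mu)$ claimed in the theorem follows from the concavity of $l \mapsto \lambda_{\max}^l/l$ proved in Lemma \ref{lemma: Lambda}. The kink at $\mu$ provides a quantitative \emph{gap}: for $\rho$ and $\bar\rho/\rho$ small enough, there exists $\gamma(\eps)>0$ such that any segment whose wavelength $\lambda_{\max}^{l_k}$ lies outside the $\eps$-neighborhood of $\lambda(\mu)$ forces a contribution exceeding the envelope value by at least $\gamma(\eps)\bar\rho$ per atom of that segment. Summing this excess over all bad segments and comparing with the total budget $c(\bar\rho^2 n + 1)$ gives $\#(\mathcal{C}(y) \setminus \mathcal{K})\cdot \gamma(\eps)\bar\rho \le C(\bar\rho^2 n + 1)$, which in view of $n \ge \bar\rho^{-2}$ yields \eqref{eq: good portion}.

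The main obstacle I anticipate is precisely the quantitative kink estimate leading to $\gamma(\eps)$: one must turn the non-differentiability of the piecewise affine $e_{\rm range}$ at a prescribed $\mu$ into a concrete lower bound on the extra energy incurred by a single off-optimal segment, while simultaneously coping with the discreteness of admissible periods $l_k \in \Nz$, the $O(\bar\rho n)$ slack in the length constraint inherited from the cell-wise rigidity step, and the lower-order $\bar\rho^2$-corrections in the longer-range energy flagged in Lemma \ref{lemma: multi-general}. A secondary nuisance is the handling of the two boundary segments adjacent to the chain endpoints, whose contribution must be absorbed into the $1/n$ error term without spoiling the count in \eqref{eq: good portion}; this is managed by choosing $\eta$ as an appropriate function of $\eps$ and $\bar\rho$ so that the cell-wise rigidity loss feeds cleanly into the segmentwise estimate.
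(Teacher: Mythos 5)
Your overall architecture (energy budget from Theorem \ref{th:1}, cell-wise rigidity to isolate a small defect set, decomposition into single-period segments, a resonance/convexity argument at the kink of $e_{\rm range}$, then counting) matches the paper's strategy, and your identification $\lambda(\mu)=\lambda^{l(\mu)}_{\rm max}$ with monotonicity via Lemma \ref{lemma: Lambda} is correct. However, there are two genuine gaps. The first is the assertion that each segment between consecutive curvature changes satisfies $|y_{i_{k+1}}-y_{i_k}|=\lambda^{l_k}_{\rm max}+O(\eta)$. Being $O(\eta)$-close to a configuration in $\mathcal{U}^{l_k}$ does \emph{not} pin down the length: the position $i_0$ of the interior sign change is a free parameter, the length equals $\lambda^{l_k}(i_0)\le\lambda^{l_k}_{\rm max}$ with equality only for $i_0$ central (Lemma \ref{eq: longest length}), and neither the cell energy nor the longer-range term (which only counts sign changes, not their positions) penalizes an off-center $i_0$. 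So the energy alone cannot give \eqref{eq: wavelength} even for segments with the correct atomic period, and your length constraint $\sum_k\lambda^{l_k}_{\rm max}=(n-1)\mu+O(\cdot)$ is derived circularly. The paper's fix is asymmetric: the energy excess controls only the \emph{positive} part $(|y_{i_{k+1}}-y_{i_k}|-\lambda^{l_k}_{\rm max})_+$ (Lemma \ref{lemma: Lambda-stretching}), while the aggregate \emph{negative} part is controlled by combining the boundary condition $(y_n-y_1)\cdot e_1=(n-1)\mu$ with the upper bound $\sum_k l_k\Lambda(l_k)\lesssim(n-1)\mu+Cn\bar{\rho}$; only together do these yield the $\ell^1$ bound $\sum_k\big||y_{i_{k+1}}-y_{i_k}|-\lambda(\mu)\big|\le Cn\bar{\rho}$ and hence \eqref{eq: good portion}.

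The second gap is the treatment of odd atomic periods, which your convex-envelope picture cannot handle. For odd $l$ a single wave satisfies $\lambda^l_{\rm max}>l\Lambda(l)$ (larger normalized wavelength than the even interpolation) while carrying only $2$ curvature changes per $l$ bonds, i.e.\ strictly \emph{less} longer-range energy per atom than the interpolated value $\Upsilon(l)$. So individually odd periods lie on the wrong side of your envelope and would destroy both the uniqueness of the optimal period at a resonant $\mu$ and the quantitative gap $\gamma(\eps)\bar{\rho}$. The paper excludes them only through the pairing argument of Lemma \ref{lemma: mixture} and Lemma \ref{lemma: chain-minimal}: two consecutive odd-period waves cannot both realize their maximal lengths because the junction angle lies in $(1+2\Zz)\bar{\psi}$, which costs a fixed amount $c_{\rm mix}$ per pair; this is what the quantities $\mathcal{L}$ and $c_{\rm odd}$ encode. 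Relatedly, your claimed per-atom energy penalty $\gamma(\eps)\bar{\rho}$ for off-optimal \emph{even} periods is also not a pure energy statement: larger periods have strictly smaller longer-range energy per atom, and the penalty only emerges from trading that gain against the quadratic stretching cost of the resulting length deficit (the optimization over $H$ in Step~4 of the paper's proof). You flag this trade-off as your anticipated obstacle, but as stated the mechanism is not correct, and without the two repairs above the counting estimate \eqref{eq: good portion} does not follow.
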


Theorem \ref{th:3} states  that, despite of nonuniqueness, the minimizers can be characterized in terms of the  \emph{wavelength} $\lambda(\mu)$. We remark that the parts of the chain satisfying \eqref{eq: wavelength} correspond to a fixed number of  bonds,  also referred to \emph{\UUU discrete-wave \EEE period} in the following, i.e., $l_\mu:=i_{k+1} - i_k$ is constant for all $ i_k  \in \mathcal{K}$. More precisely, we will show below in Lemma \ref{lemma: Lambda0} that the connection between $\mu$, the  wavelength,  and the \UUU discrete-wave \EEE period is given by the formula
\begin{align}\label{eq: lambda-l}
\lambda(\mu)= \mu l_\mu= 2\bar{b}\sin(\bar{\psi}l_\mu/4)/\tan(\bar{\psi}/2) 
\end{align}
with the bond length $\bar{b}$ and the angle $\bar{\psi}$ from Theorem
\ref{th: energy-red}.  Notice that the fact that the sequence
$\lambda(\mu)$ is decreasing in $\mu$ (or equivalently, $l_\mu$ is
decreasing in $\mu$) is in accordance  with  the above remark that
smaller values of $\mu$ correspond to larger wavelengths,  see
again Figure \ref{matching_waves}.    

Let us remark that the assumption $n \ge \bar{\rho}^{-2}$ can be
dropped at the expense of a more complicated estimate \eqref{eq: good
  portion}.  We however prefer to keep this assumption for
simplicity  since we are indeed interested in the case of a large number of \UUU particles. \EEE  

 In Corollary \ref{cor: length}, we will explicitly provide an example
 of a chain involving waves of different \UUU discrete-wave \EEE periods in order  to
 show   that in general it is energetically favorable that $\#
 (\mathcal{C}(y) \setminus \mathcal{K}) $ is positive. In
 particular,  minimizers are not expected to be periodic, but only periodic `outside of a small set', controlled in terms of $\bar{\rho}$. In particular, Corollary \ref{cor: length}  will show that (a) the minimal energy in Theorem \ref{th:1} can be characterized only up to a higher order error term of the form $\bar{\rho}^2$ and that (b) the characterization given in Theorem \ref{th:3}, see \eqref{eq: good portion},  is sharp.  

 Let us now drop the resonance assumption and present a
characterization result for almost  minimizers for general $\mu$.

\begin{theorem}[Characterization of  almost  minimizers, general case]\label{th:4}
Let $M \subset (2/3,1)$ be  the closed interval introduced right before \eqref{eq: admissible conf-bdy}   and let $\eps>0$. For $\rho$ and $\bar{\rho}/\rho$ small enough, let $\lambda(\mu), \mu \in M_{\rm res}$, be the sequence and let $c=c(v_2,v_3,\rho,\eps)>0$ be the constant from Theorem \emph{\ref{th:3}}. Suppose that $n \ge \bar{\rho}^{-2}$. \\
Let $\mu \in M$ with $\mu \in [\mu',\mu'']$ for $\mu',\mu'' \in M_{\rm res}$ with $(\mu',\mu'') \cap M_{\rm res} = \emptyset$. Then every almost minimizer $y\in \mathcal{A}_n(\mu)$ of \eqref{eq: minimization problem}, with $\mathcal{C}(y) = \lbrace i_1,\ldots, i_{N(y)}\rbrace$, satisfies
\begin{align}\label{eq: wavelength2}
\begin{split}
&\big||y_{i_{k+1}} - y_{i_k}| - \lambda(\mu')\big| \le \eps \ \text{ for }  i_k \in \mathcal{K}',  \\
& \big||y_{i_{k+1}} - y_{i_k}| - \lambda(\mu'')\big| \le  \eps \ \text{ for }  i_k \in \mathcal{K}'',
\end{split}
\end{align}
where $\mathcal{K}', \mathcal{K}'' \subset \mathcal{C}(y)$ satisfy
\begin{align}\label{eq: good portion2}
|\sigma \# \mathcal{C}(y) - \#\mathcal{K}'| / n \le c\bar{\rho},  \  \
  \  \   |(1-\sigma) \# \mathcal{C}(y) - \#\mathcal{K}''|  / n  \le c \bar{\rho},
\end{align}
where $\sigma$ only depends on $\mu$, but not on $y$. In particular, in accordance with Theorem \emph{\ref{th:3}}, we have $\sigma = 1$ for $\mu = \mu'$ and $\sigma = 0$ for $\mu = \mu''$.
\end{theorem}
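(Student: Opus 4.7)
The plan is to exploit the piecewise affine, convex structure of $e_{\rm range}$ from Theorem \ref{th:1}: on the edge $[\mu', \mu'']$ between two consecutive resonant vertices, $e_{\rm range}$ is affine, so there is a unique $\sigma = \sigma(\mu) \in [0,1]$ with
\begin{equation*}
\mu = \sigma \mu' + (1-\sigma)\mu'', \qquad e_{\rm range}(\mu) = \sigma\, e_{\rm range}(\mu') + (1-\sigma)\, e_{\rm range}(\mu'').
\end{equation*}
This $\sigma$ will be the one in the statement; note it depends only on $\mu$. The overarching claim is that, up to an exceptional set of bonds of size $O(\bar\rho n)$, every almost minimizer at such a non-resonant $\mu$ is essentially a concatenation of single-period chains of only the two adjacent atomic periods $l_{\mu'}$ and $l_{\mu''}$, in proportions $\sigma : (1-\sigma)$ enforced by the length constraint via \eqref{eq: lambda-l}.

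For the upper-bound construction, used both to fix $\sigma$ and as a comparison value, I would concatenate $\lfloor \sigma N \rfloor$ copies of the single-period minimizer of atomic period $l_{\mu'}$ and wavelength $\lambda(\mu')$ with the remaining $N - \lfloor \sigma N \rfloor$ copies of atomic period $l_{\mu''}$ and wavelength $\lambda(\mu'')$, choosing $N$ so that the total number of bonds is $n-1$ up to $O(1)$. Using \eqref{eq: lambda-l} together with $\mu = \sigma \mu' + (1-\sigma)\mu''$, the horizontal extent is $(n-1)\mu + O(1)$, and the per-cell energy is $e^{\rm gen}_{\rm cell} + \bar\rho\, e_{\rm range}(\mu) + O(\bar\rho^2 + 1/n)$ by Lemma \ref{lemma: multi-general}. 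This fixes $\sigma$ uniquely as a function of $\mu$ alone.

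For the lower bound and characterization, let $y$ be an almost minimizer with $\mathcal{C}(y) = \{i_1,\ldots,i_{N(y)}\}$. By the single-period analysis of Section 4 together with Lemma \ref{lemma: Lambda-stretching}, each piece between consecutive curvature changes is essentially a single-period chain of its own atomic period $l_k = i_{k+1}-i_k$, paying an energy excess controlling the squared deviation of its wavelength from $\lambda^{l_k}_{\rm max}$. Writing $n_l$ for the number of pieces of atomic period $l$, the constraints read
\begin{equation*}
\sum_l n_l\, l = n + O(1), \qquad \sum_l n_l\, \lambda^l_{\rm max} = (n-1)\mu + O(1+\eta),
\end{equation*}
where $\eta$ measures the total wavelength deviation, and by Lemma \ref{lemma: multi-general} the excess per-cell energy above $e^{\rm gen}_{\rm cell}$ takes the form $\bar\rho\bigl(\sum_l n_l\, l\, c_l\bigr)/(n-2) + O(\bar\rho^2 + 1/n + \eta^2)$ for explicit per-bond constants $c_l$ associated to the longer-range interactions. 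The decisive structural fact is that $e_{\rm range}$ is precisely the lower convex envelope of the finitely many data points $\bigl(\lambda^l_{\rm max}/l, c_l\bigr)$, with vertices exactly at $\mu \in M_{\rm res}$: on the affine segment over $[\mu', \mu'']$ only $l_{\mu'}$ and $l_{\mu''}$ lie on the envelope, while any other $l$ lies strictly above by a positive gap $\ge c_\ast$. Hence each bond in a piece of atomic period $l \notin \{l_{\mu'}, l_{\mu''}\}$ costs at least $c_\ast\bar\rho$ more than the convex-combination target, and the almost-minimality budget $c(\bar\rho^2 + 1/n)$ from \eqref{eq: almost minimizer} forces $\sum_{l \notin \{l_{\mu'},l_{\mu''}\}} n_l\, l \le c\bar\rho n$. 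An analogous argument via Lemma \ref{lemma: Lambda-stretching} controls the wavelength deviations within the two favored periods, yielding \eqref{eq: wavelength2}, while \eqref{eq: good portion2} and the identification of $\sigma$ follow by solving the two linear constraints above for $n', n''$ the counts of pieces of atomic period $l_{\mu'}, l_{\mu''}$.

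The hardest part will be the quantitative step: showing that atomic periods $l \notin \{l_{\mu'}, l_{\mu''}\}$ lie strictly above the convex envelope with a gap translating into per-bond excess of order exactly $\bar\rho$ (not smaller), so that the $O(\bar\rho^2)$ energy budget yields the claimed $O(\bar\rho)$ defect bound. This requires the fine multi-period expansion of Lemma \ref{lemma: multi-general}, which extracts the longer-range contribution at first order in $\bar\rho$, combined with strict convexity of the envelope at its kinks, inherited from $v_2''(1)>0$ and $v_3''''(\pi)>0$. A further subtle point is the parity dichotomy between even and odd atomic periods already visible in Section 4, notably Lemma \ref{lemma: mixture}: both parities may appear among the candidate periods near a resonance, and one must use that the linear interpolation in $l$ of the single-period energies already accounts for this dichotomy, so that odd-period pieces cannot sneak below the envelope.
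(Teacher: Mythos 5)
Your overall strategy is the same as the paper's: partition the chain into single-period pieces, use the counting of curvature changes to extract the longer-range contribution at order $\bar{\rho}$, exploit the piecewise affine structure of $\Upsilon\circ\Lambda^{-1}$ (i.e.\ of $e_{\rm range}$) together with the length constraint to force all but an $O(\bar{\rho}n)$ fraction of the pieces to have atomic period $l'$ or $l''$, and then control the residual wavelength deviations by Lemma \ref{lemma: Lambda-stretching}. However, two points in your sketch are genuinely off. First, the $\sigma$ you define by $\mu=\sigma\mu'+(1-\sigma)\mu''$ is the fraction of \emph{atoms} belonging to $l'$-waves, not the fraction of \emph{curvature changes} appearing in \eqref{eq: good portion2}; since a wave of period $l$ contributes $l$ atoms but only one element of $\mathcal{C}(y)$, the correct constant is $\sigma=\nu l''/(\nu l''+(1-\nu)l')$ with $\nu$ the convex coefficient of $\mu$ (the paper writes this as $\sigma=2\nu/(l'\Upsilon(l_*))$). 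Your later remark about solving the linear system for the piece counts would repair this, but as stated the identification is wrong.

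Second, the claim that ``each bond in a piece of atomic period $l\notin\{l',l''\}$ costs at least $c_*\bar{\rho}$ more than the convex-combination target'' is not correct as a per-bond energy statement: waves with \emph{larger} period have \emph{fewer} curvature changes per bond and hence strictly smaller longer-range cost; they are excluded only because their normalized wavelength $\Lambda(l)$ is too short to meet the boundary condition. The penalty is therefore not local but arises from the coupling of three ingredients: the strict concavity of $\Lambda$ (giving a length deficit $n_{\rm bad}c_\Lambda$, Lemma \ref{lemma: Lambda} and Remark \ref{rem: concavity}), the quadratic elastic cost $c_{\rm el}(\text{deficit})^2/n$ of stretching to restore the length (Lemma \ref{lemma: multi-reduced}), and the linear-in-$\bar{\rho}$ cost of the alternative compensation by shorter periods via the convexity of $\Upsilon$. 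The paper's Step 4 carries out exactly this trade-off by optimizing over the quantity $H$, and it is this optimization — not a pointwise envelope gap — that converts the $O(\bar{\rho}^2)$ energy budget into the $O(\bar{\rho})$ bound on the bad fraction; your sketch names this as the hardest step but does not supply it, and frames the mechanism in a way that would fail for periods $l>l''$. You also omit the preliminary removal of the defect set (atoms whose bonds or angles deviate by more than $\delta$ from $\bar b$, $\pi\pm\bar\psi$), which is needed before any of the single-period lemmas apply and whose cardinality must be shown to be $O(\bar{\rho}^2 n)$.
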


 This result states that, for $\mu$ between two resonant lengths
$\mu'$ and $\mu''$, the almost minimizer shows essentially the two
wavelengths $\lambda(\mu')$ and $\lambda(\mu'')$ in proportion
$\sigma$ and $1-\sigma$, respectively, where $\sigma$ depends just on
$\mu$.

  The proofs of Theorems \ref{th:3}-\ref{th:4} are
  contained in Sections \ref{sec: single}-\ref{sec: main proof}. We
  start with the analysis of a single-period problem in Section
  \ref{sec: single}, move on to the problem of multiple periods in
  Section \ref{sec: multi-period}, and finally give the proof of the
  main results in Section \ref{sec: main proof}.  We warn the
  Reader that in the following all generic constants may  depend
  on the potentials $v_2$ and $v_3$  without explicit
  mentioning.  Dependencies on other constants such as $\rho$,
  $\bar{\rho}$, or $\eps$, will always be indicated in brackets after
  the constant. 
\UUU Moreover, we will often use the notation $\lfloor x
\rfloor = \max\{z \in \Zz \, : \, z \leq x\}$ and $\lceil x
\rceil = \min\{z \in \Zz \, : \, x \leq z\}$ for $x\in \Rz$. \EEE

\UUU
\subsection{An illustration on a simpler model} \label{sec:simple} 
We close this section
by discussing a simpler model, where configurations are made of the
 juxtaposition of arcs of a circle of a fixed radius, see Figure
\ref{arcs}. 
\begin{figure}[h]
  \centering
  \pgfdeclareimage[width=140mm]{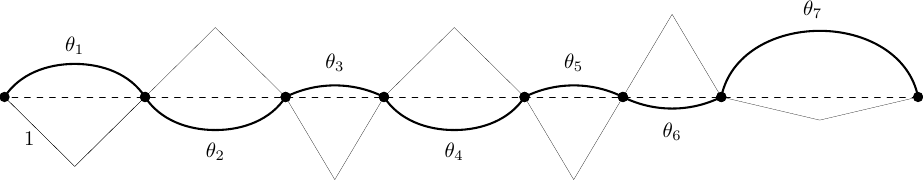}{arcs} 
\pgfuseimage{arcs}
\caption{\UUU Configurations  in the simplified setting of Subsection
  \ref{sec:simple}: juxtaposition of arcs of a circle of radius $1$,
  determined by the respective  lengths $\theta_1, \dots, \theta_7 \in \NNN [0,\pi)\UUU$.}\label{arcs}
\end{figure}
This continuous, simplified  setting is still capable of illustrating some of the main features of
the general model. In particular, it allows to identify an optimal wavelength, independently of the sample size. On the other hand, it avoids
many technicalities and, correspondingly, it is much less detailed.

As said, configurations correspond to curves consisting of a finite number of arcs of a circle, whose radius
is normalized to $1$, and having non-overlapping secants on some given
axis. The configuration is hence identified by the lengths
$\{\theta_1, \dots,\theta_k\} \in \NNN [0,\pi)^k\UUU$ of the corresponding
arcs. \NNN The total length of the curve is given by \UUU 
$$ \Theta = \sum_{i=1}^k \theta_i .$$
On the other hand, the projection of the curve on the axis has length 
$$ \Pi=\sum_{i=1}^k 2 \sin(\theta_i/2).$$
 Note that, for all $k \in \Nz$ given, the maximal length of
the projection $\Pi$ is achieved by the configuration made of $k$
equal arcs with length $\Theta/k$. In fact, the concavity of $\sin$
on $[0,\pi]$ entails that $ \Pi \leq 2k \sin (\Theta/(2k))$, where equality
holds iff $\theta_i = \Theta/k$ for all $i$. 

We now reformulate the variational problems by restricting to those
curves of fixed length $\Theta>0$ fulfilling the boundary condition $\Pi = \mu \Theta$, where the
given \NNN straining parameter \UUU $\mu \in (0,1)$ has the exact same meaning as in \eqref{eq:
  admissible conf-bdy}. As all arcs have the same curvature, to
minimize the energy in this case corresponds to minimize the number of
curvature changes, i.e., $k-1$. Let $f\colon [\NNN 2/\pi,1 \EEE ] \to [0,\pi/2]$ be the
inverse function of $\tau\mapsto \sin(\tau)/\tau$, which is concave
and strictly decreasing. 
The minimal value $k_{\rm min}$ can be
computed in terms of $\mu$ as
$$ k_{\rm min} = \NNN \left\lceil \frac{\Theta}{2f(\mu)} \right\rceil. \UUU $$
In case $\mu$ is such that $\Theta/(2f(\mu)) \in \Nz$, we have that the configuration \NNN with minimal energy \UUU is the juxtaposition
of \NNN $k_{\rm min}$ \UUU arcs of equal length \NNN $\theta^*:= \Theta/k_{\rm min}$. \UUU  
For all $\mu$ which do not belong to such discrete set, the optimal
curve consists of $k_{\rm min}$ arcs, which necessarily cannot be all of
equal length.

 Note that the optimal arc
length $\theta^*$ is
invariant with respect to the length $\Theta$ of the curve: \NNN given $\mu$ with $\Theta/(2f(\mu)) \in \Nz$, \UUU  among curves with
length \NNN $\Theta' := \Theta m/k_{\rm min} $ for $m \in \Nz$, \UUU the optimal
configuration is the juxtaposition
of $m$ arcs of the same optimal length $\Theta'/m=
\Theta/ \NNN k_{\rm min} \UUU =\theta^*$. This in particular illustrates in this simplified
setting the onset of a specific, sample-size independent optimal wavelength.

\EEE

\section{The cell problem}\label{sec: cell}

In this short section we  focus on  the   cell energy
\eqref{eq: cell energy}   and prove   Theorem \ref{th:1}.  
Let us firstly   note that the cell energy can be written equivalently in terms of bond lengths and angles. More precisely, we introduce
$$ \tilde{E}_{\rm cell}(b_1,b_2,\varphi) := E_{\rm cell}(y^1,y^2,y^3) = v_2(b_1) + v_2(b_2) + \rho v_2\Big(\sqrt{b_1^2 + b_2^2 - 2b_1b_2\cos\varphi} \Big) + v_3(\varphi),$$
where $b_1 = |y^1-y^2|$, $b_2 = |y^2-y^3|$, and $\varphi =
\sphericalangle(y^3-y^2,y^1-y^2)$.    Owing to this notation, we
can now state the following.  

\begin{lemma}[Minimizers and convexity of the cell
  energy]\label{lemma: cell energy}  We have that 
  \begin{itemize}\item[(i)] For $\rho>0$ small enough (depending only on $v_2$ and $v_3$)
    there exist $0 < \bar{b} < 1$ and $\bar{\psi} \in (0,\pi/8)$ such
    that the minimizers of $\tilde{E}_{\rm cell}$ are given by
$$(\bar{b}, \bar{b}, \pi + \bar{\psi}) \ \ \ \ \text{and} \ \ \ \ (\bar{b}, \bar{b}, \pi - \bar{\psi}).$$
\item[(ii)]  The cell energy $\tilde{E}_{\rm cell}$ is smooth in a
neighborhood of the minimizers and there exists
$c_{\rm conv}=c_{\rm conv}(\rho)>0$ such that its  Hessian at the
minimizers satisfies
\begin{align}\label{eq: positive Hessian0}
  D^2  \tilde{E}_{\rm cell}  (\bar{b},\bar{b},\pi \pm \bar{\psi})  \ge c_{\rm conv}\boldsymbol{I},
\end{align}
where $\boldsymbol{I}   \in  \Rz^{3 \times 3}$ denotes the identity
matrix.
\end{itemize}

 \end{lemma}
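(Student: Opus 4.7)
My plan is perturbative in $\rho$, with one non-obvious wrinkle: at $\rho=0$ the energy reduces to $v_2(b_1)+v_2(b_2)+v_3(\varphi)$, whose unique minimum $(1,1,\pi)$ has a \emph{degenerate} Hessian in the $\varphi$-direction because $v_3''(\pi)=0$, so the implicit function theorem alone does not handle the $\varphi$-variable. The mechanism selecting $\bar\psi>0$ is a competition: at $(1,1,\pi)$ one has $d:=\sqrt{b_1^2+b_2^2-2b_1b_2\cos\varphi}=2>1$ and $v_2'(2)>0$ (since $v_2$ is strictly increasing past its unique minimum at $1$), hence bending $\varphi=\pi-\psi$ decreases $\rho v_2(d)$ by $-\rho v_2'(2)\psi^2/4+O(\rho\psi^4)$, whereas $v_3(\pi-\psi)=v_3''''(\pi)\psi^4/24+O(\psi^6)$ grows only quartically. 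The balance fixes the quantitative scaling $\bar\psi=O(\sqrt\rho)$, which I then use throughout. This $\rho=0$ degeneracy is the main obstacle: it forces the joint scaling $(b-1)=O(\rho)$, $\psi=O(\sqrt\rho)$, and an explicit quartic Taylor computation rather than any black-box perturbation argument.

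For (i), I combine the two symmetries $(b_1,b_2)\mapsto(b_2,b_1)$ and $\varphi\mapsto 2\pi-\varphi$ (the latter from $v_3(\pi+\psi)=v_3(\pi-\psi)$ together with $\cos(\pi\pm\psi)=-\cos\psi$) with the positive definiteness $v_2''(1)\boldsymbol I>0$ of the $(b_1,b_2)$-block of the $\rho=0$ Hessian. The implicit function theorem applied to $\partial_{b_1}\tilde E_{\rm cell}=\partial_{b_2}\tilde E_{\rm cell}=0$ produces, for $\rho$ small, a unique smooth minimizer slice $b_1^\ast(\varphi)=b_2^\ast(\varphi)=:b^\ast(\varphi)$ near $1$ (the equality of the two follows from uniqueness combined with the $(b_1,b_2)$-symmetry). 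A uniform lower bound on $\tilde E_{\rm cell}$ off a neighborhood of $(1,1,\pi)$ confines global minimizers to this slice. Using $d=2b\cos(\psi/2)$ and expanding $\tilde E_{\rm cell}(b,b,\pi-\psi)$ around $(b,\psi)=(1,0)$, the $b$-quadratic is minimized at $\bar b=1-\rho v_2'(2)/v_2''(1)+O(\rho^{3/2})<1$, and the reduced function takes the form $F(\psi)=C+\alpha\psi^2+\beta\psi^4+O(\psi^6)$ with $\alpha=-\rho v_2'(2)/4<0$ and $\beta=v_3''''(\pi)/24+O(\rho)>0$. The critical equation $2\alpha\psi+4\beta\psi^3=0$ yields $\bar\psi^2=-\alpha/(2\beta)=3\rho v_2'(2)/v_3''''(\pi)+O(\rho^{3/2})$, so $\bar\psi\in(0,\pi/8)$ for $\rho$ small, and the two minimizers $(\bar b,\bar b,\pi\pm\bar\psi)$ arise from the $\psi\mapsto-\psi$ symmetry.

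For (ii), one more differentiation of $F$ together with the explicit formula $\bar\psi^2=-\alpha/(2\beta)$ gives $F''(\bar\psi)=2\alpha+12\beta\bar\psi^2=-4\alpha=\rho v_2'(2)+O(\rho^{3/2})>0$, so the $\varphi\varphi$-entry of the full $3\times 3$ Hessian at the minimizer is of order $\rho$. The $(b_1,b_2)$-block equals $v_2''(1)\boldsymbol I+O(\rho)$, uniformly positive definite for $\rho$ small. The mixed entries $\partial_{b_i}\partial_\varphi\tilde E_{\rm cell}$ at the minimizer are $O(\rho^{3/2})$, because $\partial_\varphi d=b\sin(\bar\psi/2)$ and $\partial^2_{b_i\varphi}d$ are both $O(\sqrt\rho)$ there, while they enter multiplied by $\rho v_2'(d)$ or $\rho v_2''(d)$. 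A Schur-complement (or, equivalently, diagonal-dominance after an $O(\sqrt\rho)$ rescaling of the $\varphi$-coordinate) then yields $D^2\tilde E_{\rm cell}(\bar b,\bar b,\pi\pm\bar\psi)\ge c_{\rm conv}\boldsymbol I$ with $c_{\rm conv}(\rho)\sim\rho v_2'(2)>0$, completing the proof. Note that $c_{\rm conv}$ necessarily depends on $\rho$ and vanishes as $\rho\to 0$, a quantitative reflection of the original $\rho=0$ degeneracy.
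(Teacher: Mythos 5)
Your proposal is correct and its computations are consistent with the paper's (the first-order condition $v_2'(\bar b)+\rho\cos(\bar\psi/2)\,v_2'(2\bar b\cos(\bar\psi/2))=0$ and the sign observation $f''(\pi)<0$ appear in both), but the route is genuinely different in emphasis. The paper's single device is to pass to the variable $\theta=(\varphi-\pi)^2$ and show that $G(b_1,b_2,\theta)=\tilde{E}_{\rm cell}(b_1,b_2,\pi+\sqrt{\theta})$ is strictly convex near $(1,1,0)$, since the $\rho=0$ Hessian there is ${\rm diag}(v_2''(1),v_2''(1),v_3''''(\pi)/12)$; uniqueness of the minimizer, the symmetry $b_1=b_2$, and part (ii) — via the chain rule with $T(\boldsymbol{d})=(b_1,b_2,(\varphi-\pi)^2)$ and $DT(\boldsymbol{d}_0)={\rm diag}(1,1,\pm 2\bar\psi)$ — all drop out of that one convexity statement. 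You instead eliminate $(b_1,b_2)$ by the implicit function theorem and carry out an explicit quadratic-plus-quartic expansion in $\psi$ followed by a Schur-complement bound; this buys the explicit leading-order formulas $\bar b=1-\rho v_2'(2)/v_2''(1)+\dots$, $\bar\psi^2=3\rho v_2'(2)/v_3''''(\pi)+\dots$, and $c_{\rm conv}\sim\rho v_2'(2)$, which the paper never states and which make the $\rho$-dependence of all constants transparent. The one step you should tighten is the uniqueness of the minimizing angle: the critical equation $2\alpha\psi+4\beta\psi^3=0$ only locates a critical point of the truncated quartic, and to exclude other minimizers of the exact reduced function on the whole localization interval the cleanest argument is exactly the paper's, namely that your expansion is $\alpha\theta+\beta\theta^2+O(\theta^3)$ in $\theta=\psi^2$ with $\beta>0$, hence strictly convex in $\theta$ for $\theta$ small. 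With that remark added, the two proofs are equivalent, yours being the quantitative version of the paper's structural one.
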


\begin{proof}
  Ad {\it (i)}.   Fix $\eps >0$ small. Since for $\rho=0$ the
energy is uniquely minimized  by $(1,1, \pi)$,  for $\rho$
small (depending on $\eps$) the minimizers of $\tilde{E}_{\rm cell}$
lie in $(1-\eps,1+\eps)^2 \times (\pi-\eps,\pi+\eps)$. For  all
fixed $(b_1,b_2) \in (1-\eps,1+\eps)^2$, we consider the mapping $f(\varphi) =  \tilde{E}_{\rm cell}(b_1,b_2,  \varphi )$ for $\varphi \in (\pi-\eps,\pi + \eps)$. The second derivative of $f$ reads as
\begin{align*}
f''(\varphi) &= v_3''(\varphi) + \rho v_2''\Big(\sqrt{b_1^2 + b_2^2 - 2b_1b_2\cos\varphi} \Big) \ \frac{ (b_1b_2\sin\varphi)^2}{b_1^2 + b_2^2 - 2b_1b_2\cos\varphi} \\
& \ \ \ + \rho v_2'\Big(\sqrt{b_1^2 + b_2^2 - 2b_1b_2\cos\varphi} \Big) \ \frac{(b_1^2 + b_2^2 - 2b_1b_2\cos\varphi)b_1b_2\cos\varphi- (b_1b_2\sin\varphi)^2}{(b_1^2 + b_2^2 - 2b_1b_2\cos\varphi)^{3/2}}. 
\end{align*}
Consequently,  $f''(\pi) < 0$ since  $v_2$ is strictly increasing
right of $1$ and $v_3''(\pi)=0$.  Moreover, as $v_3$ is symmetric
around $\pi$,   $f$ is symmetric around $\pi$  as well.  Thus, it suffices to identify a unique minimizer of $  (b_1,b_2,\psi) \mapsto \tilde{E}_{\rm cell}(b_1,b_2,\pi + \psi)$ on $(1-\eps,1+\eps)^2 \times (0,\eps)$.  After a transformation, this is equivalent to show that 
\begin{align}\label{eq: F}
  G(b_1,b_2,\theta) = \tilde{E}_{\rm cell}(b_1,b_2, \pi + \sqrt{\theta})
  \end{align}
has a unique minimizer on $D^\eps := (1-\eps,1+\eps)^2 \times (0,\eps^2)$.

We set $g_1(b_1,b_2,\theta) = v_2(b_1) + v_2(b_2)  + v_3(\pi +
\sqrt{\theta})$ and $g_2 =  \big(G - g_1\big)/\rho$.  Let the  functions $\lambda_1$ and $\lambda_2$ denote the smallest eigenvalues of $D^2 g_1$ and $D^2 g_2$, respectively. Using a Taylor expansion for $v_3$  around $\pi$, we compute $D^2g_1(1,1,0) = {\rm diag}(v_2''(1),v_2''(1),v_3''''(\pi)/12)$. Thus, for $\eps$ small enough, $\lambda_1$ is positive on $D^\eps$ by the assumptions on $v_2$ and $v_3$. Consequently, for $\rho$ small enough, depending only on $v_2$ and $v_3$, we find a constant $c_G>0$ such that 
\begin{align}\label{eq: eigenvalue}
\lambda_1(b_1,b_2,\theta) + \rho\lambda_2 (b_1,b_2,\theta)\ge c_G
\end{align}
for all $(b_1,b_2,\theta) \in D^\eps$.   For such small $\rho$, $G$
is therefore   strictly convex on $D^\eps$. 

This implies that the minimizer  of $G$   is uniquely determined and,
by the symmetry of $G$ in the variables $b_1$ and $b_2$, it has the
form $(\bar{b}, \bar{b}, \bar{\theta} )$.  We conclude that  $\tilde{E}_{\rm cell}$ is minimized exactly at $(\bar{b},\bar{b}, \pi \pm \bar{\psi})$ with $\bar{\psi} = \sqrt{\bar{\theta}}$.    The first order optimality condition $\partial_{b_1} G(\bar{b},\bar{b},\bar{\theta}) = 0$ implies 
\begin{align*}
v_2'(\bar{b}) + \rho v_2'\big(\bar{b}\sqrt{2(1-\cos\bar{\varphi})}\big) \ \sqrt{(1-\cos\bar{\varphi})/2}  = 0,
\end{align*}
where $\bar{\varphi} = \pi + \bar{\psi}$. Since
$\bar{b}\sqrt{2(1-\cos\bar{\varphi})}>1$ for $\eps >0$ small, we get
$\bar{b} < 1$ by the assumptions on $v_2$.  Similarly, possibly
taking  $\eps$ small enough, we find $\bar{\psi} \in (0,\pi/8)$.

  Ad {\it (ii)}.   The smoothness of the cell energy $\tilde{E}_{\rm cell}$ in a neighborhood of the minimizers follows directly from the assumptions on $v_2$ and $v_3$. For brevity we set $\boldsymbol{d} = (b_1,b_2,\varphi)$ and $T(\boldsymbol{d}) = (b_1,b_2,(\varphi-\pi)^2)$. For $\varphi$ in a neighborhood of $\pi + \bar{\psi}$ we can write $\tilde{E}_{\rm cell}(\boldsymbol{d}) = G(T(\boldsymbol{d}))$ with $G$ from \eqref{eq: F}. For each $\boldsymbol{v} \in \Rz^3$, an elementary computation  yields $D\tilde{E}_{\rm cell}(\boldsymbol{d}) \boldsymbol{v} = DG (T(\boldsymbol{d})) DT(\boldsymbol{d})\boldsymbol{v}$  and 
$$D^2 \tilde{E}_{\rm cell}(\boldsymbol{d})[\boldsymbol{v},\boldsymbol{v}] = D^2G (T(\boldsymbol{d})) [D T(\boldsymbol{d})\boldsymbol{v}, DT(\boldsymbol{d})\boldsymbol{v}] +  DG (T(\boldsymbol{d})) D^2T(\boldsymbol{d})[\boldsymbol{v},\boldsymbol{v}].$$
Set $\boldsymbol{d}_0 = (\bar{b},\bar{b},\pi + \bar{\psi})$. Since
$DG(T(\boldsymbol{d}_0))=0$ by the first order optimality conditions,
we  obtain   
$$
D^2 \tilde{E}_{\rm cell}(\boldsymbol{d}_0)[\boldsymbol{v},\boldsymbol{v}] = D^2G (T(\boldsymbol{d}_0)) [DT(\boldsymbol{d}_0)\boldsymbol{v}, DT(\boldsymbol{d}_0)\boldsymbol{v}].
$$
This together with \eqref{eq: eigenvalue} and the fact that $D T(\boldsymbol{d}_0) = {\rm diag}(1,1,2(\varphi-\pi))$ yields \eqref{eq: positive Hessian0} and concludes the proof.
\end{proof}

\begin{remark}[Smallness of $\bar{\psi}$] \label{rem: psi}
{\normalfont
The proof shows that $\bar{\psi} \to 0$ as $\rho \to 0$. In the following sections, we will frequently assume that $\bar{\psi}$ is small with respect to constants depending on $v_2$, $v_3$, and the closed interval $M$  introduced before \eqref{eq: admissible conf-bdy}.  This will amount to choosing $\rho$ sufficiently small.
}
\end{remark}

We conclude this  section with the proof of Theorem \ref{th: energy-red}.

\begin{proof}[Proof of Theorem \ref{th: energy-red}]
 The statement follows immediately from Lemma \ref{lemma: cell energy} and \eqref{eq: reduced energy} with the constant $e_{\rm cell} =  \tilde{E}_{\rm cell}(\bar{b}, \bar{b},\pi + \bar{\psi})$. 
\end{proof}

\section{The single-period problem}\label{sec: single}

The goal of this section is to consider chains $y \in \mathcal{A}_n$, $n$  fixed and small, so  that we expect minimizers   to be
represented by a  wave consisting of one  single period. In
this section, we will only consider the reduced energy introduced in
\eqref{eq: reduced energy}. We will first investigate the geometry and
the length of configurations with minimal energy. Here, it will turn
out that the analysis is considerably different for even and odd
numbers of  bonds.  Afterwards,  we  study small perturbations of energy minimizers and show that the length excess can be controlled by the energy excess. 

\subsection{Geometry and length of energy minimizers} 
We investigate the geometry and  the  length of configurations
$y \in \mathcal{A}_n$ with minimal energy, i.e., $E^{\rm red}_n(y) =
(n-2) e_{\rm cell}$, see Theorem \ref{th: energy-red}. Let $n= l+1$,
where $l$ will stand for the \emph{\UUU discrete-wave \EEE period}.  Recall
the definition of the bond lengths $b_i$ and the angles $\varphi_i$ in
\eqref{eq: bonds and angles}. Moreover, let $\bar{b}$ and $\bar{\psi}$
be the values  found in Lemma \ref{lemma: cell energy}. By
$\mathcal{U}^l$ we denote the family of configurations $y \in
\mathcal{A}_{l+1}$ such that the bond lengths  coincide with that
of minimizers of the cell energy, namely 
\begin{align}\label{eq: general property1}
b_i = \bar{b}, \ \ \ \ i=1,\ldots,l,
\end{align}
and such that  there exists   $i_0 \in \lbrace
2,\ldots,l-1\rbrace$  with 
\begin{align}\label{eq: general property2}
\varphi_i = \pi - \bar{\psi} \ \  \ \text{for} \ \ \ i \in \lbrace 2,\ldots,i_0 \rbrace, \ \ \ \ \ \varphi_i = \pi + \bar{\psi} \ \  \ \text{for} \ \ \ i \in \lbrace i_0+1,\ldots, l \rbrace.
\end{align}
Note that,  in particular,  all configurations in
$\mathcal{U}^l$ are minimizers of $E^{\rm red}_{l+1}$. Moreover,
given the index $i_0$, the position of the points $y \in
\mathcal{U}^l$ is determined uniquely up to a rotation and  a
translation. In particular, the \emph{length} of the chain, denoted by
$|y_{l+1} - y_1|$, is completely  determined by the choice of $i_0$.

\begin{figure}[h]
  \centering
  \pgfdeclareimage[width=110mm]{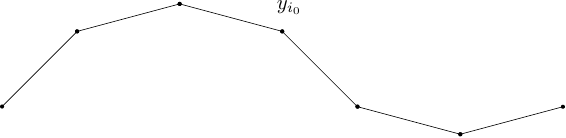}{figure3} 
\pgfuseimage{figure3}
\caption{A single-period chain $y \in \mathcal{U}^6$.}\label{figure3}
\end{figure}

To  identify  the length of the chain, we will frequently use the formulas
\begin{align}\label{eq: useful formulas}
\begin{split}
\sum_{k=1}^m \sin(\theta  - k\bar{\psi}) = \frac{\sin (m\bar{\psi}/2)}{\sin(\bar{\psi}/2)} \sin(\theta  -  (m+1)\bar{\psi}/2),\\ \ \sum_{k=1}^m \cos(\theta - k\bar{\psi}) = \frac{\sin (m\bar{\psi}/2)}{\sin(\bar{\psi}/2)} \cos(\theta  -  (m+1)\bar{\psi}/2) 
\end{split}
\end{align}
for $\theta \in [0,2\pi)$ which can be derived by using a geometric series argument  and the representations $\cos(x) = (e^{ix} + e^{-ix})/2$, $\sin(x) = (e^{ix} - e^{-ix})/2i$.  

We recall that the angle between two vectors $a_1,a_2 \in \Rz^2$, measured counterclockwisely, is denoted by $\sphericalangle(a_1,a_2)$. We define the maximal possible \UUU discrete-wave \EEE period by 
\begin{align}\label{eq: lmax}
l_{\rm max} = 2\lceil 2\pi / \bar{\psi}\rceil - 4 
\end{align}
and show that configurations $\mathcal{U}^l$ for $l \ge l_{\rm max}$ are not admissible.  

\begin{lemma}[Maximal \UUU discrete-wave \EEE period]\label{lemma: lmax}
The index $i_0$ from \eqref{eq: general property2} satisfies $i_0 \le \lceil 2\pi/\bar{\psi}\rceil -2$ and $l+1 - i_0 \le \lceil 2\pi/\bar{\psi}\rceil -2$. In particular, we have $\mathcal{U}^l \cap \mathcal{A}_{l+1} = \emptyset$ for each $l \ge l_{\rm max}$. 
\end{lemma}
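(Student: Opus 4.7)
The plan is to exploit the fact that within each constant-curvature arc of the configuration, the atoms lie on a circle whose chord lengths admit an explicit closed-form expression. I will show that if either arc contains too many bonds, then some chord between two distinct atoms of that arc shrinks to at most $\bar{b}<1$, violating the admissibility lower bound $|y_i-y_j|>1.5$ from \eqref{eq: admissible conf}.

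First I would fix coordinates on the first arc $y_1,\ldots,y_{i_0+1}$, where by \eqref{eq: general property1}--\eqref{eq: general property2} all bonds have length $\bar{b}$ and all interior angles equal $\pi-\bar{\psi}$. Placing $y_1$ at the origin with $y_2-y_1$ along the positive $x$-axis, the bond $y_jy_{j+1}$ has direction $-(j-1)\bar{\psi}$ relative to the $x$-axis (the sign is immaterial by reflection). Summing the bond vectors via the geometric-series identity underlying \eqref{eq: useful formulas} yields the chord formula
$$
|y_i-y_{i+k}| \;=\; \bar{b}\,\frac{\sin(k\bar{\psi}/2)}{\sin(\bar{\psi}/2)}, \qquad 1\le i<i+k\le i_0+1.
$$

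Next I would argue by contradiction. Suppose $i_0\ge \lceil 2\pi/\bar{\psi}\rceil-1$ and set $N:=\lceil 2\pi/\bar{\psi}\rceil$, $k:=N-1$. Since $\bar{\psi}\in(0,\pi/8)$ one has $N\ge 3$, so $k\ge 2$, while $k\le i_0$ by assumption; thus the chord formula applies to $y_1$ and $y_{k+1}$. The defining property of the ceiling gives $(N-1)\bar{\psi}<2\pi\le N\bar{\psi}$, whence $k\bar{\psi}/2\in[\pi-\bar{\psi}/2,\pi)$ and $\sin(k\bar{\psi}/2)\le\sin(\bar{\psi}/2)$. Substituting yields $|y_1-y_{k+1}|\le\bar{b}<1<1.5$ with $|1-(k+1)|=k\ge 2$, contradicting $y\in\mathcal{A}_{l+1}$; therefore $i_0\le N-2=\lceil 2\pi/\bar{\psi}\rceil-2$.

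An entirely analogous argument for the second arc $y_{i_0},y_{i_0+1},\ldots,y_{l+1}$, which consists of $l+1-i_0$ bonds with constant turning by $+\bar{\psi}$ (the chord formula is unchanged, since reflection preserves distances), yields $l+1-i_0\le \lceil 2\pi/\bar{\psi}\rceil-2$. Summing the two bounds produces $l+1 \le 2\lceil 2\pi/\bar{\psi}\rceil-4 = l_{\rm max}$, so for every $l\ge l_{\rm max}$ no configuration satisfying \eqref{eq: general property1}--\eqref{eq: general property2} can lie in $\mathcal{A}_{l+1}$, giving $\mathcal{U}^l\cap\mathcal{A}_{l+1}=\emptyset$. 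The argument uses no nontrivial estimate; the only real obstacle is careful bookkeeping of the ceiling function so that $k=N-1$ sits within the admissible index range while simultaneously driving $k\bar{\psi}/2$ just below $\pi$, where the chord length degenerates to at most $\bar{b}$.
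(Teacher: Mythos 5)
Your proof is correct and follows essentially the same route as the paper: both derive the chord length $\bar{b}\sin(k\bar{\psi}/2)/\sin(\bar{\psi}/2)$ via the geometric-series identity \eqref{eq: useful formulas}, observe that for $k=\lceil 2\pi/\bar{\psi}\rceil-1$ this drops to at most $\bar{b}\le 1<1.5$ in violation of \eqref{eq: admissible conf}, and then sum the two arc bounds to obtain $l+1\le 2\lceil 2\pi/\bar{\psi}\rceil-4$. Your explicit bookkeeping that $k\ge 2$ (so the non-neighbor condition applies) is a small point the paper leaves implicit.
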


\begin{proof}
Consider $y \in \mathcal{U}^l$. We first show that $y \notin \mathcal{A}_{l+1}$ if $i_0 \ge \lceil 2\pi/\bar{\psi}\rceil -1$. Let $j = \lceil 2\pi/\bar{\psi} \rceil$  and $\theta= \sphericalangle(e_1,y_2-y_1)$.  We observe that $j-1 \le i_0$. Then we compute by   \eqref{eq: general property1}, \eqref{eq: general property2},  and \eqref{eq: useful formulas} 
\begin{align}\label{eq: proof-lemma-lmax}
|y_j - y_1| = \bar{b}\Big|\sum^{j-1}_{i=1} \big(\cos(\theta+ \bar{\psi} - i\bar{\psi}),\sin(\theta+ \bar{\psi} -i\bar{\psi})\big) \Big| = \bar{b}\sin\big((j-1)\bar{\psi}/2 \big) / \sin\big(\bar{\psi}/2\big)\le 1,
\end{align}
where the last step follows from $\bar{b} \le 1$ (see Theorem \ref{th: energy-red}) and $(j-1)\bar{\psi}/2 \in [\pi-\bar{\psi}/2,\pi]$. Thus, the assumption in \eqref{eq: admissible conf} is violated and therefore $y \notin \mathcal{A}_{l+1}$. Likewise, we argue to find $y \notin \mathcal{A}_{l+1}$ if $l+1 - i_0 \ge \lceil 2\pi/\bar{\psi}\rceil -1$. 

Combining the two conditions on the choice of $i_0$, we find that  $l+1 = l+1-i_0 + i_0 \le 2\lceil 2\pi/\bar{\psi}\rceil -4$  for each  $y \in \mathcal{U}^l \cap \mathcal{A}_{l+1}$. This implies $\mathcal{U}^l \cap \mathcal{A}_{l+1} = \emptyset$ for each $l \ge l_{\rm max}$. 
\end{proof}

Recall that the length of  the chain $|y_{l+1} - y_1|$ is  completely  determined by the choice of $i_0$ from \eqref{eq: general property2}. Thus, we can interpret $|y_{l+1} - y_1|$ as a function of $i_0$. More precisely, recalling also Lemma \ref{lemma: lmax} we introduce  
\begin{align}\label{eq: lambda-def}
 \lambda^l :  \big\{  l+3 - \lceil 2\pi/\bar{\psi}\rceil, \ldots, \lceil 2\pi/\bar{\psi}\rceil - 2 \big\} \cap \lbrace 2,\ldots,l-1\rbrace \to (0,\infty), \ \ \    \lambda^l(i_0) = |y_{l+1} - y_1|,
 \end{align}
where $y \in \mathcal{U}^l  \subset  \mathcal{A}_{l+1}$ is a
configuration satisfying \eqref{eq: general property2} for $i_0$. The
maximum of the function will be denoted by $\lambda^l_{\rm
  max}$. Since the length is invariant under inversion of the order
 of the labels  of the \UUU particles, \EEE we get $\lambda^l(i) = \lambda^l(l-i+1)$ for $i \le \lceil l/2 \rceil$.

 After a rotation we may suppose that $({y}_{l+1} - {y}_1) \cdot e_2=0$  and $({y}_{l+1} - {y}_1) \cdot e_1>0$.  In this case, letting
\begin{align}\label{eq: angles-phi}
\phi_i = \sphericalangle(e_1, {y}_{i+1}- {y}_i). 
\end{align}
for $i=1,\ldots,l$, we note that
\begin{align}\label{eq: length-e1}
|{y}_{l+1} - {y}_1| = \sum\nolimits_{i=1}^{l} \bar{b}\cos(\phi_i), \ \ \ \ \ \  \sum\nolimits_{i=1}^{l} \sin(\phi_i) = 0.
\end{align}

We now determine the maximizer of $\lambda^l$.
 \begin{lemma}[Maximizer of $\lambda^l$] \label{eq: longest length}
 For $l \in \lbrace  2 ,\ldots, l_{\rm max} \rbrace$ the maximum  of $\lambda^l$ is attained exactly for $i_0 = \lceil l /2  \rceil$ and $i_0 = \lceil (l+1) /2  \rceil$.
 \end{lemma}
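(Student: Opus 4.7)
The plan is to derive a closed-form expression for $\lambda^l(i_0)$ and then perform a discrete optimization. Writing $y \in \mathcal{U}^l$ as the concatenation of two arcs of a regular polygon of circumradius $R = \bar{b}/(2\sin(\bar\psi/2))$ meeting at the curvature-changing vertex $y_{i_0+1}$ — the first arc consisting of $i_0$ bonds turning by $+\bar\psi$ at each interior vertex, the second of $l-i_0$ bonds turning by $-\bar\psi$ — I would apply the geometric-series identities \eqref{eq: useful formulas} to compute the two chord vectors $y_{i_0+1}-y_1$ and $y_{l+1}-y_{i_0+1}$. Combining them via the law of cosines and carrying out some sum-to-product manipulations, one arrives at the closed form
\[
\lambda^l(i_0)^2 \;=\; \frac{\bar{b}^2}{\sin^2(\bar\psi/2)}\,\bigl[\cos^2\alpha + \cos^2(\bar\psi/2) - 2\cos(\bar\psi/2)\cos(l\bar\psi/2)\cos\alpha\bigr],
\]
where $\alpha=\alpha(i_0):=(2i_0 - l - 1)\bar\psi/2$.

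The crucial structural observation is that this expression depends on $i_0$ only through $\cos\alpha$, and is a \emph{convex} quadratic in $t=\cos\alpha$ with vertex at $t=K:=\cos(\bar\psi/2)\cos(l\bar\psi/2)$. Hence the maximization of $\lambda^l(i_0)$ over admissible $i_0$ reduces to maximizing $|t-K|$ over the corresponding discrete set $\{\cos\alpha(i_0)\}$. The admissible values of $\alpha(i_0)$ are equally spaced with step $\bar\psi$ and symmetric about $0$ — a discrete counterpart of the reflection symmetry $\lambda^l(i_0)=\lambda^l(l+1-i_0)$. Moreover, the indices claimed to be maximizers in the lemma, $i_0=\lceil l/2\rceil$ and $i_0=\lceil(l+1)/2\rceil$, are precisely those producing the smallest attainable $|\alpha|$ (namely $0$ when $l$ is odd and $\bar\psi/2$ when $l$ is even), and hence the largest attainable $\cos\alpha$.

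It remains to show that this largest $\cos\alpha$ strictly maximizes $|t-K|$ among all admissible values. Letting $t^*$ denote the largest attainable $\cos\alpha$ and $t$ any other admissible value, the difference $\lambda^l(i_0^*)^2 - \lambda^l(i_0)^2$ (with $i_0^*$ a claimed maximizer) factors as a positive multiple of $(t^*-t)(t^*+t-2K)$. The first factor is strictly positive by the choice of $t^*$, and for the second, sum-to-product reduces the required inequality to a condition of the shape $\cos^2(j\bar\psi/2)>\cos(\bar\psi/2)\cos(l\bar\psi/2)$ for a specific integer $j$ in the admissible range. When $K\le 0$ this is automatic; when $K>0$ it can be rewritten (after dividing by $\cos(\bar\psi/2)>0$) as $\cos(\bar\psi/2)>\cos(l\bar\psi/2)$, which follows from the range restrictions on $l$ imposed by Lemma \ref{lemma: lmax}.

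The main obstacle will be making the last step uniform across the different regimes of $l\bar\psi/2$ (below $\pi$ vs.\ above $\pi$), where the sign of $K$ and the relative position of $\cos\alpha$ and $K$ can change. This is where the admissibility window $i_0 \in \{l+3-\lceil 2\pi/\bar\psi\rceil,\dots,\lceil 2\pi/\bar\psi\rceil - 2\}$ from \eqref{eq: lambda-def} becomes essential: it keeps $|\alpha|$ strictly below $\pi$ throughout the admissible set so that cosine is monotonic in $|\alpha|$ there, and it keeps $l\bar\psi/2$ away from the degenerate values where the required trigonometric inequality could fail. Putting everything together yields that the maximum of $\lambda^l$ is attained exactly at $i_0=\lceil l/2\rceil$ and $i_0=\lceil(l+1)/2\rceil$.
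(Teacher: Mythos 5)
Your route is genuinely different from the paper's. The paper argues by contradiction with a one-bond exchange: assuming the maximum is attained at a non-central $i_0$, it shows $|(\phi_1-\phi_l)\,\mathrm{mod}\,2\pi|\ge 2\bar\psi$, transfers one bond across the junction (passing to $\overline{i_0}=i_0-1$), and checks that the horizontal extent strictly increases. You instead compute $\lambda^l(i_0)$ in closed form and optimize over the discrete set. I verified your formula: with $P=l\bar\psi/2$, $Q=(2i_0-l)\bar\psi/2$ and $\alpha=Q-\bar\psi/2$ one indeed gets $\lambda^l(i_0)^2\sin^2(\bar\psi/2)/\bar b^2=\cos^2\alpha+\cos^2(\bar\psi/2)-2\cos(\bar\psi/2)\cos(l\bar\psi/2)\cos\alpha$, and the reduction to maximizing $|t-K|$, $t=\cos\alpha$, $K=\cos(\bar\psi/2)\cos(l\bar\psi/2)$, is correct, as is the identification of $i_0=\lceil l/2\rceil,\lceil(l+1)/2\rceil$ with the largest attainable $t$. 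This is an attractive alternative because it quantifies the gap to non-maximizers, which could feed directly into Lemma \ref{lemma: Lambda-stretching}-type estimates.

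The gap is in your verification that the largest $t$ beats the smallest one, i.e.\ that $t^*+t-2K>0$ for every admissible $t$. Your reduction of this to ``$\cos^2(j\bar\psi/2)>\cos(\bar\psi/2)\cos(l\bar\psi/2)$, which after dividing by $\cos(\bar\psi/2)$ becomes $\cos(\bar\psi/2)>\cos(l\bar\psi/2)$'' is not valid: that division only recovers the case $j=\pm1$, whereas the inequality must hold for \emph{all} admissible $j$ and is hardest for the \emph{largest} $|j|$ (where $\cos^2(j\bar\psi/2)$ is smallest), not the smallest. Moreover the needed bound genuinely fails for unconstrained $j$ when $l\bar\psi/2$ is near $2\pi$, where $K$ is close to $1$. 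What actually closes the argument is the stronger claim that \emph{every} admissible $t$ satisfies $t>K$ (so that $f(t)=(t-K)^2+\mathrm{const}$ is increasing on the whole admissible set): for $l\le\lceil 2\pi/\bar\psi\rceil-1$ the constraint $i_0\in\{2,\dots,l-1\}$ gives $|\alpha|\le(l-3)\bar\psi/2$ and hence $\cos\alpha\ge\cos((l-3)\bar\psi/2)>\tfrac12[\cos((l+1)\bar\psi/2)+\cos((l-1)\bar\psi/2)]=K$, while for larger $l$ the window $i_0\le\lceil 2\pi/\bar\psi\rceil-2$, $i_0\ge l+3-\lceil 2\pi/\bar\psi\rceil$ forces $|\alpha|<2\pi-(l+3)\bar\psi/2$ and one compares $\cos((l+3)\bar\psi/2)$ with $K$ on the branch where cosine is increasing. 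This two-regime case analysis is exactly the ``main obstacle'' you flag but do not carry out, and as written the step you propose in its place is incorrect; until it is done, the proof is incomplete.
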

 
\begin{proof}
We argue by contradiction. Suppose that the   maximum is  
attained  by  a configuration ${y} \in \mathcal{U}^l$ with $i_0 \neq  \lceil l /2  \rceil, \lceil (l+1) /2  \rceil$.    After a rotation we may assume that $({y}_{l+1} - {y}_1) \cdot e_2=0$ and observe that \eqref{eq: length-e1} holds. In view of  \eqref{eq: general property2}, a short computation yields
$${\phi}_l = \big({\phi}_1 + (l+1-2i_0) \bar{\psi}\big) {\rm mod}2\pi   $$
with the angles $\phi_i$ defined in \eqref{eq: angles-phi}. Recall the
symmetry $\lambda^l(i) = \lambda^l(l-i+1)$ for $i \le \lceil l/2
\rceil$, see right after \eqref{eq: lambda-def}. Using
$i_0 \neq \lceil l /2  \rceil$,  $i_0 \neq \lceil (l+1) /2  \rceil$,
$ i_0 \in [2,l-1] \cap [l+3 - \lceil 2\pi/\bar{\psi}\rceil, \lceil
2\pi/\bar{\psi}\rceil - 2]$,  and distinguishing the cases whether $l$ is  larger than $\lceil 2\pi / \bar{\psi}\rceil$ or not,  one may prove that  $|({\phi}_1
- {\phi}_l){\rm mod}2\pi| \ge 2\bar{\psi}$ after some tedious but
elementary computations. This then implies  $\cos({\phi}_1) < \cos({\phi}_l + \bar{\psi})$ or $\cos({\phi}_l) < \cos({\phi}_1 + \bar{\psi})$. After possibly inverting the labeling  of the \UUU particles, \EEE it is not restrictive to assume that 
\begin{align}\label{eq: 1largerl}
\cos({\phi}_1) < \cos({\phi}_l + \bar{\psi}).
\end{align}
We define a configuration  $\bar{y} \in \mathcal{U}^l$ with  index $\overline{i_0} = i_0 -1$ (see \eqref{eq: general property2}) and $\bar{\phi}_1 = {\phi}_2 $, where we indicate the angles in  \eqref{eq: angles-phi} corresponding to $\bar{y}$ by $\bar{\phi}_i$. Note that the configuration is characterized uniquely up to a translation.  More precisely, we obtain
\begin{align*}
\bar{\phi}_i = {\phi}_{i+1} \ \ \ \text{for } i=1,\ldots,l-1,  \ \ \ \ \  \bar{\phi}_{l} = {\phi}_{l} + \bar{\psi}.
\end{align*}
By \eqref{eq: length-e1} and \eqref{eq: 1largerl} this gives
$$
|\bar{y}_{l+1}-\bar{y}_1| \ge \sum_{i=1}^l \bar{b}\cos(\bar{\phi}_i) = \sum_{i=1}^l \bar{b}\cos({\phi}_i) + \bar{b}\cos({\phi}_{l} + \bar{\psi}) - \bar{b}\cos({\phi}_1)>  | {y}_{l+1} -  {y}_1|.
$$
Consequently, the length $|{y}_{l+1}-{y}_1|$ is not maximal among all configurations in $\mathcal{U}^l$.   This contradicts the assumption and shows that the maximum is attained for $i_0 =  \lceil l /2  \rceil$ or $i_0 = \lceil (l+1) /2  \rceil$. The fact that $\lambda^l( \lceil l /2  \rceil) = \lambda^l(\lceil (l+1) /2  \rceil)$ by symmetry of $\lambda^l$ concludes the proof.  
\end{proof}

The previous result shows that for   even  $l \in 2\Nz \cap  [2, l_{\rm max}]$ the maximum of $\lambda^l$ is attained  at $i_0 = l/2$, $i_0 = l/2+1$ and we call $\lambda^l_{\rm max}= \lambda^l(l/2)$ the \emph{wavelength} of a wave with \UUU discrete-wave \EEE period $l$.  The following lemma provides the relation between wavelength and even \UUU discrete-wave \EEE periods. Odd \UUU discrete-wave \EEE periods have to be treated differently, cf. Lemma \ref{lemma: mixture} below.

\begin{lemma}[Length for even \UUU discrete-wave \EEE periods]\label{lemma: Lambda0}
For all  $ l \in 2\Nz \cap   [2, l_{\rm max}]$ we have $\lambda^l_{\rm max} =  2\bar{b}\sin(\bar{\psi}l/4)/ \tan(\bar{\psi}/2).$
\end{lemma}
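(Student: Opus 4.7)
The plan is to invoke Lemma~\ref{eq: longest length} to pin down an optimal configuration, and then to evaluate its endpoint displacement by means of the trigonometric identities~\eqref{eq: useful formulas}.

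Since $l = 2k$ is even, Lemma~\ref{eq: longest length} gives $\lambda^l_{\rm max} = \lambda^l(k)$, so it suffices to compute the length $|y_{l+1} - y_1|$ for a single representative $y \in \mathcal{U}^l$ with $i_0 = k$. Introducing the bond directions $\phi_j = \sphericalangle(e_1, y_{j+1} - y_j)$ and using that $\varphi_{j+1} = \pi \mp \bar{\psi}$ rotates the direction by $\mp \bar{\psi}$, the conditions in \eqref{eq: general property2} translate into the explicit formulas
\begin{align*}
\phi_j &= \phi_1 - (j-1)\bar{\psi} \quad \text{for } 1 \le j \le k, \\
\phi_j &= \phi_1 + (j - 2k + 1)\bar{\psi} \quad \text{for } k+1 \le j \le 2k,
\end{align*}
so that the single curvature change at $y_{k+1}$ produces the jump $\phi_{k+1} - \phi_k = +\bar{\psi}$ between the two arcs.

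Next I would write the endpoint displacement in complex form, $y_{l+1} - y_1 = \bar{b}\sum_{j=1}^{2k} e^{i\phi_j}$, and evaluate the two arc sums separately via \eqref{eq: useful formulas}. Each arc contributes a factor $\sin(k\bar{\psi}/2)/\sin(\bar{\psi}/2)$ together with a complex phase, and thanks to the curvature jump these two phases differ by exactly $\bar{\psi}$. Factoring out their common mean phase produces the combination $e^{-i\bar{\psi}/2} + e^{i\bar{\psi}/2} = 2\cos(\bar{\psi}/2)$, and taking the modulus yields
\begin{equation*}
|y_{l+1} - y_1| = \frac{2\bar{b}\sin(k\bar{\psi}/2)\cos(\bar{\psi}/2)}{\sin(\bar{\psi}/2)} = \frac{2\bar{b}\sin(\bar{\psi}l/4)}{\tan(\bar{\psi}/2)},
\end{equation*}
which is the claim.

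The argument is entirely elementary and I do not foresee any real obstacle: the only delicate step is the correct accounting of the angular jump at the transition $j = k \to k+1$, since it is precisely this $+\bar{\psi}$ shift between the two arcs that produces the $\cos(\bar{\psi}/2)$ factor and, after using $\sin(\bar{\psi}) = 2\sin(\bar{\psi}/2)\cos(\bar{\psi}/2)$, the compact closed form $\sin(\bar{\psi}l/4)/\tan(\bar{\psi}/2)$.
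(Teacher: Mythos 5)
Your proposal is correct and follows essentially the same route as the paper: both reduce to the configuration with $i_0=l/2$ via Lemma \ref{eq: longest length}, write out the bond directions $\phi_j$ explicitly on the two arcs, and evaluate the endpoint displacement with the summation identities \eqref{eq: useful formulas} (your complex-exponential arc sums are exactly the geometric-series derivation of those identities). The only cosmetic difference is that the paper first rotates so the displacement lies along $e_1$ and sums cosines, whereas you take the modulus of the complex sum at the end; the computations coincide.
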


\begin{proof}
Fix $l \in 2\Nz \cap  [2, l_{\rm max}]$ and consider a configuration $y \in \mathcal{U}^l$ as in  \eqref{eq: angles-phi} and \eqref{eq: length-e1} with $i_0 = l/2$. This leads to the choice $\phi_i = (l/4-i)\bar{\psi}$ for $i \le l/2$ and $\phi_i = (-3l/4+i)\bar{\psi}$ for $l/2+1 \le i \le l $. Indeed, we obtain $\sum_{i=1}^l \sin(\phi_i) = 0 $ since $\phi_j = - \phi_{l/2+j}$ for $1 \le j \le l/2$. Moreover, we compute 
\begin{align*}
\lambda^l_{\rm max} = \lambda^l(l/2) & = \sum_{i=1}^{l/2} \bar{b}\cos\big((l/4-i)\bar{\psi}  \big) + \sum_{i= l/2+1}^l \bar{b}\cos \big((-3l/4+i)\bar{\psi} \big)  = 2\sum_{i=1}^{l/2} \bar{b}\cos\big((i-l/4)\bar{\psi} \big).
\end{align*} 
With the help of \eqref{eq: useful formulas}, we then indeed get $\lambda^l_{\rm max} = 2\bar{b}\sin(\bar{\psi}l/4)/ \tan(\bar{\psi}/2)$.
\end{proof}

\begin{remark}\label{rem: angle}
{\normalfont
The proof shows that a configuration $y \in \mathcal{U}^l$ as in  \eqref{eq: angles-phi} and \eqref{eq: length-e1} which realizes the maximal length $\lambda^l_{\rm max}$ necessarily satisfies $\phi_1,\phi_l \in \lbrace (l/4-1)\bar{\psi}, l\bar{\psi}/4 \rbrace $.

}
\end{remark}

%

\begin{figure}[h]
  \centering
  \pgfdeclareimage[width=80mm]{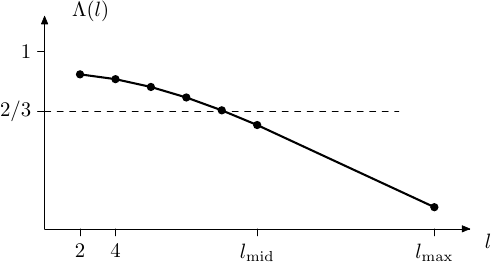}{figure4} 
\pgfuseimage{figure4}
\caption{The normalized wavelength $\Lambda$.}\label{figure4}
\end{figure}

 Let $l_{\rm mid} = \lfloor 6/\bar{\psi} \rfloor$ for brevity.  In the following a distinguished role will be played by the 
{\it normalized wavelength} (normalized with respect to the number of
 bonds)  $\Lambda: [2,l_{\rm max}]\to  \Rz $,  being  the function which satisfies
\begin{align}\label{eq: Lambda} 
\Lambda(l) = \frac{1}{l} \lambda^l(l/2) = \frac{1}{l} \lambda_{\rm max}^l
\end{align}
for $l \in 2 \Nz \cap [2,l_{\rm mid}]$,  is affine on $[l-2,l]$, $l \in 2\Nz
\cap  [4, l_{\rm mid}]$, and  affine on $[l_{\rm mid}-2,l_{\rm
  max}]$,  see Figure \ref{figure4}.  The fact that the function is affine on the intervals
between two even \UUU discrete-wave \EEE periods will be crucial  (a) to identify the
function $e_{\rm range}$ in Theorem \ref{th:1} and (b) to give the
characterization \eqref{eq: good portion2} in Theorem \ref{th:4}. 
Indeed, it  will turn out that 
$$\lbrace \mu = \Lambda(l)| \ l \in 2\Nz \cap  [2, l_{\rm mid}] \rbrace$$
is the set of \emph{resonant lengths} $M_{\rm res}$ introduced in
\eqref{eq: Mres}. We now study the properties of the normalized
wavelength  $\Lambda$.

\begin{lemma}[Properties of  the normalized
wavelength  $\Lambda$]\label{lemma: Lambda}
The mapping $\Lambda$ is strictly decreasing and concave on $[2, l_{\rm max}]$. Moreover, $\Lambda(l) =   \lambda^l_{\rm max}/l$ for all $l \in 2\Nz \cap  [2,l_{\rm mid}] $ and  $\Lambda(l)  \ge  \lambda^l_{\rm max}/l$ for all $l \in 2\Nz \cap  (l_{\rm mid}, l_{\rm max}] $. Finally, for $\rho$ small enough we find $\Lambda([2,l_{\rm mid}]) \supset (2/3,\bar{b}\cos(\bar{\psi}/2))$.
\end{lemma}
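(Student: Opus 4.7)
\emph{Plan of proof.} The plan is to view $\Lambda$ as the continuous piecewise affine extension of the values $g(2k):=\lambda_{\rm max}^{2k}/(2k)$ at even integers $2k \in [2, l_{\rm mid}]$, where by Lemma~\ref{lemma: Lambda0}
\[
g(l) \;=\; \frac{2\bar{b}\sin(\bar{\psi} l/4)}{l\tan(\bar{\psi}/2)} \;=\; \frac{2\bar{b}}{\tan(\bar{\psi}/2)}\int_0^{\bar{\psi}/4}\cos(tl)\,dt,
\]
with affine extrapolation to $l_{\rm max}$. Part 3 is built into this definition. For $l \in [2, l_{\rm mid}]$ and $t \in [0, \bar{\psi}/4]$ one has $tl \le (\bar{\psi}/4) l_{\rm mid} \le 3/2 < \pi/2$, so $\cos(tl) > 0$ throughout the integration range; differentiation under the integral will yield $g'(l) < 0$ and $g''(l) = -\tfrac{2\bar{b}}{\tan(\bar{\psi}/2)} \int_0^{\bar{\psi}/4} t^2 \cos(tl)\,dt < 0$, proving that $g$ is strictly decreasing and strictly concave on $[2, l_{\rm mid}]$. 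Part 1 will then follow at once: the sequence $(\Lambda(2k))_k$ is strictly decreasing, and the affine extension beyond $l^* := \max\lbrace 2k \in 2\Nz : 2k \le l_{\rm mid}\rbrace$ continues with the (negative) slope of the last piece. For Part 2, I would use that for piecewise affine functions concavity is equivalent to non-increasing slopes: strict concavity of $g$ makes the chord slopes $s_k := (g(2k+2) - g(2k))/2$ strictly decreasing in $k$, and the final affine piece on $[l_{\rm mid} - 2, l_{\rm max}]$ is forced to share the last of these slopes by the agreement of overlapping affine pieces.

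The bulk of the argument is Part 4. I would set $L(l) := g(l^*) + s_{\rm last}(l - l^*)$, so $\Lambda \equiv L$ on $[l^*, l_{\rm max}]$, and study $F := L - g$ on this interval. Then $F(l^*) = 0$ and $F'(l^*) = s_{\rm last} - g'(l^*) > 0$ by the strict concavity of $g$ at $l^*$. The decisive structural fact I would exploit is that $g$ admits a single inflection point on $[l^*, l_{\rm max}]$: writing $g'(l) = \frac{\bar{b}\bar{\psi}^2}{8\tan(\bar{\psi}/2)} h(\bar{\psi} l / 4)$ with $h(\theta) := (\theta \cos\theta - \sin\theta)/\theta^2$, an elementary computation shows $h'$ has a unique zero $\theta^* \in (2, \pi)$. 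Hence $g'$ is first decreasing and then increasing on $[l^*, l_{\rm max}]$, so $F' = s_{\rm last} - g'$ changes sign at most once there, $F$ is unimodal, and its minimum on the interval is attained at an endpoint. Since $F(l^*) = 0$, it will suffice to check $F(l_{\rm max}) \ge 0$, equivalently $|s_{\rm last}| \le |\bar g'_\infty|$ with $\bar g'_\infty := (g(l_{\rm max}) - g(l^*))/(l_{\rm max} - l^*)$. Using $\tan(\bar{\psi}/2) = \bar{\psi}/2 + O(\bar{\psi}^3)$ and the limits $\bar{\psi} l^*/4 \to 3/2$, $\bar{\psi} l_{\rm max}/4 \to \pi$ as $\bar{\psi} \to 0$, leading-order expansions give $|s_{\rm last}| = \tfrac{\bar{b}\bar{\psi}}{4}|h(3/2)| + o(\bar{\psi})$ and $|\bar g'_\infty| = \tfrac{\bar{b}\bar{\psi}\sin(3/2)}{3(2\pi - 3)} + o(\bar{\psi})$, and the numerical inequality $3|h(3/2)|(2\pi - 3) < 4\sin(3/2)$ (which holds by a small but definite margin) will close the argument for $\rho$ small (equivalently $\bar{\psi}$ small, cf.\ Remark~\ref{rem: psi}).

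Part 5 will then be immediate: by continuity and strict monotonicity, $\Lambda([2, l_{\rm mid}]) \supset [\Lambda(l^*), \Lambda(2)]$; the explicit formula gives $\Lambda(2) = g(2) = \bar{b}\cos(\bar{\psi}/2)$, while $\Lambda(l^*) = g(l^*) \to 2\sin(3/2)/3 < 2/3$ as $\bar{\psi} \to 0$ (using $\bar{b} \to 1$), so for $\rho$ small enough $\Lambda(l^*) < 2/3 < \bar{b}\cos(\bar{\psi}/2)$, and $(2/3, \bar{b}\cos(\bar{\psi}/2)) \subset [\Lambda(l^*), \Lambda(2)] \subset \Lambda([2, l_{\rm mid}])$. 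The main obstacle will be Part 4: global concavity of $g$ fails past its inflection point, so the extended chord cannot be controlled by concavity alone; the argument will instead exploit the single-inflection-point structure of $g$ on $[l^*, l_{\rm max}]$ to deduce unimodality of $F = L - g$, reducing the proof to a sharp endpoint inequality that holds with only a small but definite margin in the limit $\bar{\psi} \to 0$.
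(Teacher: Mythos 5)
Your proposal is correct and follows essentially the same route as the paper: both arguments reduce everything to the monotonicity and concavity of $f(x)=\sin(x)/x$ on $[0,3/2]$ together with a line-above-graph inequality on $[3/2,\pi]$, and your decisive numerical inequality $3(2\pi-3)|h(3/2)|<4\sin(3/2)$ is precisely the paper's asserted tangent-line inequality $f(3/2)+f'(3/2)(x-3/2)\ge f(x)$ evaluated at the endpoint $x=\pi$ (your unimodality/single-inflection-point analysis is just an explicit verification of what the paper leaves as ``one can check''). The only, harmless, difference is that the paper states that inequality exactly, so the bound $\Lambda(l)\ge\lambda^l_{\rm max}/l$ beyond $l_{\rm mid}$ needs no smallness of $\bar{\psi}$, whereas your endpoint check is carried out only to leading order in $\bar{\psi}$ and hence formally requires $\rho$ small, which the surrounding results assume anyway (cf.\ Remark \ref{rem: psi}).
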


\begin{proof}
It is elementary to check that the mapping $f(x):=\sin(x)/x$ is strictly decreasing and concave on $[0,3/2]$. Thus, recalling Lemma \ref{lemma: Lambda0}, the definition of $\Lambda$ in \eqref{eq: Lambda}, and the fact that $l_{\rm mid}\bar{\psi}/4 \le 3/2$, we obtain that $\Lambda$ is strictly decreasing and concave. Moreover,  one can check that 
$$f(3/2) + f'(3/2)(x - 3/2) \ge f(x) \ \ \ \  \text{for all $x \in [3/2, \pi]$.}$$
 From this we deduce  that  $\Lambda(l) \ge   \lambda^l_{\rm
  max}/l$ for all $l \in 2\Nz \cap  (l_{\rm mid}, l_{\rm max}]$.
Moreover,  note that $\Lambda(l) =   \lambda^l_{\rm max}/l$
for all $l \in 2\Nz \cap  [2,l_{\rm mid}] $ by definition.  Finally,
by Lemma \ref{lemma: Lambda0} we compute $\Lambda(2) =
\bar{b}\cos(\bar{\psi}/2)$ and $\Lambda(\lfloor  6/\bar{\psi}  \rfloor) =
2/3  \sin(3/2) \bar{b}  +  {\rm O}(\bar{\psi})$, which shows  that  $\Lambda([2,l_{\rm mid}]) \supset (2/3,\bar{b}\cos(\bar{\psi}/2))$ for $\rho$ (and thus $\bar{\psi}$, cf. Remark \ref{rem: psi}) sufficiently small. 
\end{proof}

\begin{remark}[Strict concavity of
  $\Lambda$]\label{rem: concavity}
{\normalfont
Clearly, as piecewise affine function,  the normalized wavelength
 $\Lambda$ is not strictly concave. However, the strict concavity
of $x \mapsto \sin(x)/x$ implies $\Lambda(\nu l_1 + (1-\nu)l_2) > \nu
\Lambda(l_1) + (1-\nu)\Lambda(l_2)$ for all $\nu \in (0,1)$, whenever
$\Lambda$ is {\it not} affine on $[l_1,l_2]$ with $l_1,l_2 \in
[2,l_{\rm mid}]$. When we speak of  {\it strict concavity of
  $\Lambda$} in the following, we refer exactly  to this property. 
}
\end{remark}

Before we proceed with the case of odd \UUU discrete-wave \EEE periods, we briefly note that configurations $\mathcal{U}^l$ can be connected to longer chains.

\begin{remark}[Connecting two waves of maximal length]\label{rem: gluing}
{\normalfont
For $l \in 2\Nz \cap [2,l_{\rm mid}]$ choose the configuration $y^{{\rm max}, l} \in \mathcal{U}^l$ satisfying
\begin{align}\label{eq: res-conf}
y^{{\rm max},l}_1 = 0, \ \ \ \ \ \ y^{{\rm max}, l}_{l+1} = \lambda^l_{\rm max} e_1 =  l\Lambda(l)e_1
\end{align}
as well as  $\sphericalangle(e_1, y^{{\rm max},l}_2 -y^{{\rm max},l}_1 ) =  l  \bar{\psi}/4 $ and $\sphericalangle(e_1, y^{{\rm max},l}_{l+1} -y^{{\rm max},l}_l ) =  (l/4-1)  \bar{\psi}/4$ (see Remark \ref{rem: angle}). Consider the configuration $y: \lbrace 1, \ldots,  2l+1\rbrace \to \Rz^2$ defined by $y_i = y^{{\rm max}, l}_i$ for $i \in \lbrace 1,\ldots, l+1 \rbrace$ and $y_i =  y^{{\rm max}, l}_{i -l} + l\Lambda(l)e_1$ for $i \in \lbrace l+2,2l+1 \rbrace$. Then recalling \eqref{eq: general property1}-\eqref{eq: general property2}, we  find that all bonds and angles of $y$ (see \eqref{eq: bonds and angles}) satisfy $b_i = \bar{b}$ and ${\bar\varphi}_i = \pi \pm \bar{\psi}$. Thus,   $E^{\rm red}_{2l+1}(y) = (2l-1)e_{\rm cell}$ with $e_{\rm cell}$ from Theorem \ref{th: energy-red}. 

}
\end{remark}

\begin{figure}[h]
  \centering
  \pgfdeclareimage[width=130mm]{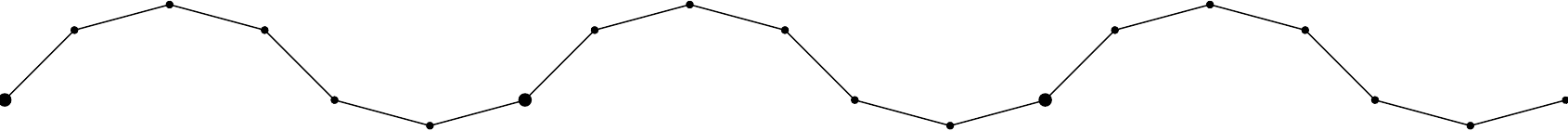}{figure6} 
\pgfuseimage{figure6}
\caption{Connection of three waves $y^{{\rm max},l}$ minimizing the reduced energy.}\label{figure6}
\end{figure}

We now investigate in more detail the case of odd \UUU discrete-wave \EEE periods $l \in 2\Nz+1$. From Lemma \ref{eq: longest length} we get that the maximum  of $\lambda^l$ is attained exactly for $i_0 = (l+1) /2$. Without going into details, we remark that one can calculate  for $\bar{\psi}$ sufficiently small that  
$$\lambda^l((l+1)/2) >  \frac{1}{2} \Big( (l-1) \Lambda(l-1) + (l+1)\Lambda(l+1) \Big) +  \frac{1}{2}(\Lambda(l-1)-\Lambda(l+1)) = l\Lambda(l).$$
This in particular shows that the normalized wavelength $\Lambda$
does not capture correctly the wavelength for odd $l$.  We
hence proceed here by remarking that,  under suitable conditions,
the length for two consecutive waves with odd atomic period  can be
controlled in terms of the lengths of waves with even atomic
period. This will eventually allow us to control the wavelength in
terms of   the normalized wavelength  $\Lambda$ also for odd $l$.

More precisely, for  odd 
$l_1,l_2 \in (2\Nz+1) \cap  [2, l_{\rm max}]$ we let $y: \lbrace
1,\ldots, l_1+l_2 + 1 \rbrace \to \Rz^2$ be a configuration with
$(y_1, \ldots,y_{l_1+1}) \in \mathcal{U}^{l_1}$, $(y_{l_1+1}
\ldots,y_{l_1+l_2+1}) \in \mathcal{U}^{l_2}$, and the junction angle
$\varphi_{l_1+1} - \pi = \bar{\psi}$ (see \eqref{eq: bonds and
  angles}).  In view of \eqref{eq: general property2}, we find  $(y_1, \ldots,y_{l_1+2}) \in \mathcal{U}^{l_1+1}$, $(y_{l_1+2} \ldots,y_{l_1+l_2+1}) \in \mathcal{U}^{l_2-1}$. Consequently, by the definition of the function $\lambda^l$ in \eqref{eq: lambda-def} and   the triangle inequality we obtain
\begin{align}\label{eq: discussion}
|y_{l_1+l_2+1} - y_1| \le \lambda_{\rm max}^{l_1+ 1} + \lambda_{\rm max}^{l_2-1}.
\end{align}
This estimate can be obtained also for more general junction angles as the following lemma shows.

\begin{lemma}[Length for odd \UUU discrete-wave \EEE periods]\label{lemma: mixture}
Let $l_1,l_2 \in (2\Nz+1) \cap  [2, l_{\rm max}]$  and let $y: \lbrace 1,\ldots, l_1+l_2 + 1 \rbrace \to \Rz^2$ be a configuration with $(y_1, \ldots,y_{l_1+2}) \in \mathcal{U}^{l_1+1}$, $(y_{l_1+2} \ldots,y_{l_1+l_2+1}) \in \mathcal{U}^{l_2-1}$ and the junction angle $\varphi_{l_1+2} - \pi \in (1+ 2\Zz)\bar{\psi}$. Then 
\begin{align}\label{eq: estimate of the lemma}
|y_{l_1+l_2+1} - y_1|  \le  \max_{t \in \lbrace -1,1\rbrace} \big(\lambda^{l_1+t}_{\rm max} + \lambda^{l_2-t}_{\rm max} \big) -c_{\rm mix} \boldsymbol{1}_{[0, \infty) } (l_1-l_2),
\end{align}
where $0 < c_{\rm mix} < 1$ depends only on  $l_{\rm max}$ (and thus
only on $\rho$)  and $\boldsymbol{1}_A$ denotes the indicator
function of a set $A$. 
\end{lemma}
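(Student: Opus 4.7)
First, I would reduce the estimate to the triangle inequality. Setting $\vec{a} := y_{l_1+2} - y_1$ and $\vec{b} := y_{l_1+l_2+1} - y_{l_1+2}$, the hypotheses $(y_1,\ldots,y_{l_1+2}) \in \mathcal{U}^{l_1+1}$ and $(y_{l_1+2},\ldots,y_{l_1+l_2+1}) \in \mathcal{U}^{l_2-1}$, together with the definition of $\lambda^l_{\rm max}$ as the maximum of $\lambda^l$ in \eqref{eq: lambda-def}, give $|\vec a| \le \lambda^{l_1+1}_{\rm max}$ and $|\vec b| \le \lambda^{l_2-1}_{\rm max}$. Hence the triangle inequality yields
\[
|y_{l_1+l_2+1}-y_1| \le \lambda^{l_1+1}_{\rm max}+\lambda^{l_2-1}_{\rm max},
\]
which is precisely the $t=1$ value on the right-hand side of \eqref{eq: estimate of the lemma}.

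Next, I would compare the two candidates in $\max_t$. Using Lemma~\ref{lemma: Lambda0} together with $\sin A - \sin B = 2\cos\tfrac{A+B}{2}\sin\tfrac{A-B}{2}$, a direct computation gives
\[
\bigl(\lambda^{l_1-1}_{\rm max}+\lambda^{l_2+1}_{\rm max}\bigr) - \bigl(\lambda^{l_1+1}_{\rm max}+\lambda^{l_2-1}_{\rm max}\bigr) = \frac{4\bar{b}\,\sin(\bar{\psi}/4)}{\tan(\bar{\psi}/2)}\bigl(\cos(\bar{\psi} l_2/4)-\cos(\bar{\psi} l_1/4)\bigr).
\]
By Lemma~\ref{lemma: lmax} the arguments lie in $(0,\pi)$, where $\cos$ is strictly decreasing, so this quantity is positive iff $l_1>l_2$, zero iff $l_1=l_2$, negative iff $l_1<l_2$. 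When $l_1<l_2$, the indicator in \eqref{eq: estimate of the lemma} vanishes and the maximum is attained at $t=1$, so the first paragraph already concludes. When $l_1>l_2$, the first-paragraph bound equals $\max_t(\cdots)-\Delta(l_1,l_2)$ with $\Delta(l_1,l_2)>0$ as above; since $(l_1,l_2)$ ranges over the finite set of admissible odd pairs in $[2,l_{\rm max}]$ with $l_1>l_2$, the infimum of $\Delta$ is a positive constant depending only on $l_{\rm max}$ (hence only on $\rho$).

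The delicate case is $l_1 = l_2$, where $\max_t(\cdots)$ matches the triangle-inequality bound and the gap $c_{\rm mix}$ must come from strict inequality in the triangle estimate. Triangle-equality would force $|\vec a| = \lambda^{l_1+1}_{\rm max}$, $|\vec b| = \lambda^{l_2-1}_{\rm max}$, and $\vec a \parallel \vec b$ in the same direction. Lemma~\ref{eq: longest length} then pins the change points $i_0^{(1)}, i_0^{(2)}$ to the maximizing values, and Remark~\ref{rem: angle} constrains the angle $\alpha$ of the last bond of the first sub-chain and the angle $\beta$ of the first bond of the second one, both measured with respect to the common axis $\vec a/|\vec a| = \vec b/|\vec b|$, to the finite sets $\alpha \in \{(l_1-3)\bar{\psi}/4,\,(l_1+1)\bar{\psi}/4\}$ and $\beta \in \{(l_2-5)\bar{\psi}/4,\,(l_2-1)\bar{\psi}/4\}$. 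Under the parallelism hypothesis one has $\varphi_{l_1+2}-\pi \equiv \beta-\alpha \pmod{2\pi}$, and enumerating the four sign combinations with $l_1 = l_2$ yields $\beta-\alpha \in \{\pm \bar{\psi}/2,\, -3\bar{\psi}/2\}$, a finite set of half-integer multiples of $\bar{\psi}$ which, for $\bar{\psi}$ small, is disjoint from the admissible set $(1+2\Zz)\bar{\psi}$. Therefore triangle-tightness is incompatible with the junction-angle hypothesis, and since for fixed $l_1=l_2$ the admissible geometries are parametrized by the finite discrete data $(i_0^{(1)}, i_0^{(2)}, k)$ with $\varphi_{l_1+2}-\pi = k\bar{\psi}$, $k$ odd, the maximum of $|y_{l_1+l_2+1}-y_1|$ falls strictly below $\lambda^{l_1+1}_{\rm max}+\lambda^{l_2-1}_{\rm max}$ by a positive amount, bounded below uniformly in $l_1 = l_2 \in [2,l_{\rm max}]$. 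Taking $c_{\rm mix}$ to be the minimum of this gap and the $\Delta$ from the previous paragraph, shrunk below $1$ if necessary, closes the proof. The $l_1=l_2$ case is the main obstacle, since the triangle inequality alone saturates the bound and a quantitative exploitation of the junction-angle constraint is required.
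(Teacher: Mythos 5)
Your proof is correct and follows essentially the same route as the paper's: triangle inequality for the upper bound, an explicit comparison of the two $t$-candidates (you use a sum-to-product identity where the paper invokes strict concavity of $\sin$ — a cosmetic difference), and, for $l_1=l_2$, the same contradiction between the half-integer multiples of $\bar{\psi}$ forced by Remark \ref{rem: angle} under triangle-tightness and the odd-integer junction-angle hypothesis, with finiteness of the admissible discrete data yielding a uniform $c_{\rm mix}$.
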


Note that the right-hand side of \eqref{eq: estimate of the lemma} is
well defined in the sense that $l_1 + t, l_2 - t \le l_{\rm max}$ for
$t \in \lbrace -1, 1\rbrace$ since $l_1,l_2 \le l_{\rm max}$ and
$l_{\rm max}$ is even (see \eqref{eq: lmax}).   Notice that in
contrast  with   the discussion before \eqref{eq: discussion}, the chains are connected at point $y_{l_1+2}$.

\begin{proof}
Let $y$ be given   as in the assumption. After a rotation we may
suppose that $({y}_{l_1+l_2+1} - {y}_1) \cdot e_2=0$. Similarly to
\eqref{eq: angles-phi}, we define the angles $\phi_i$, where the  
sum  now runs from $1$ to $l_1+l_2$. As $\varphi_{l_1+2} - \pi \in (1+ 2\Zz)\bar{\psi}$, we get 
\begin{align}\label{eq: phiangle}
\phi_{l_1+1} - \phi_{l_1+2} \in (1+ 2\Zz)\bar{\psi}.
\end{align}
As  $(y_1, \ldots,y_{l_1+2}) \in \mathcal{U}^{l_1+1}$ and  $(y_{l_1+2} \ldots,y_{l_1+l_2+1}) \in \mathcal{U}^{l_2-1}$, we derive similarly to \eqref{eq: discussion}   
$$ |y_{l_1+l_2+1} - y_1| \le  \lambda_{\rm max}^{l_1+1} + \lambda_{\rm max}^{l_2-1}.$$ 
This shows \eqref{eq: estimate of the lemma} for $l_2>l_1$.  From
 now on we suppose $l_1 \ge l_2$.  In order to conclude the
proof, it  suffices to show the strict inequality
 \begin{align}\label{eq: odd5}
 |y_{l_1+l_2+1} - y_1| <  \max_{t \in \lbrace -1,1\rbrace} \big( \lambda^{l_1+t}_{\rm max} + \lambda^{l_2-t}_{\rm max} \big). 
 \end{align}
 Indeed, since the number of different admissible configurations (up to rigid motions) and the number of different $l_1,l_2$ is bounded by a number only depending on $l_{\rm max}$, we obtain   the statement for a positive constant $c_{\rm mix}$, which only depends on  $l_{\rm max}$ (and thus only on $\rho$). 
 
It remains to show \eqref{eq: odd5}. First, suppose that $l_1-l_2 \ge 2$. Then we use Lemma \ref{lemma: Lambda0}, \eqref{eq: lmax}, and  the strict concavity of $\sin$ on $[0,\pi]$ to get
$$ |y_{l_1+l_2+1} - y_1| \le  \lambda_{\rm max}^{l_1+1} + \lambda_{\rm max}^{l_2-1}< \lambda_{\rm max}^{l_1-1} + \lambda_{\rm max}^{l_2+1}.$$ 
If now $l_1 = l_2$, we assume by contradiction that the inequality in \eqref{eq: odd5} was not strict.  Equality would imply $({y}_{l_1+2} - {y}_1) \cdot e_2 = ({y}_{l_1+l_2+1} - {y}_{l_1+2}) \cdot e_2 = 0$, i.e., the two parts of the chain, lying in $\mathcal{U}^{l_1+1}$ and $\mathcal{U}^{l_2-1}$, respectively, satisfy  \eqref{eq: angles-phi} and \eqref{eq: length-e1}. But then Remark \ref{rem: angle} gives $\phi_{l_1+1} \in \lbrace   (l_1/4-3/4)\bar{\psi},  (l_1/4 + 1/4)\bar{\psi} \rbrace$,  $\phi_{l_1+2} \in \lbrace (l_2/4-5/4)\bar{\psi}, (l_2/4 - 1/4)\bar{\psi} \rbrace$. Since $l_1 = l_2$, we obtain a contradiction to \eqref{eq: phiangle}. This establishes \eqref{eq: odd5} and concludes the proof.  
  \end{proof}

\subsection{Small perturbations of energy minimizers}

In this section, we investigate the length of single periods for
configurations being small perturbations of energy minimizers. To this
end, we introduce the set of small-perturbed chains 
\begin{align}\label{eq: uldelta}
\mathcal{U}^l_{\delta} = \big\{ y \in \mathcal{A}_{l+1}| \ \exists \  \bar{y} \in \mathcal{U}^l: \ |b_i - \bar{b}| \le \delta, \ |\varphi_i - \bar{\varphi}_i| \le \delta \ \ \text{for all} \ i=1,\ldots,l\big\},
\end{align}
where, as before, the angles  $\varphi_i$ and $\bar{\varphi}_i$
corresponding to $y$ and $\bar{y}$, respectively, are defined in
\eqref{eq: bonds and angles}. Likewise, the bond lengths will again be
denoted by $b_i$. (For the angles the sum runs only from 2 to $l$.)
In the following, we use the notation    $(a)_+^2 = (\max\lbrace
a,0\rbrace)^2$ for $a \in \Rz$ and the quantity  $e_{\rm
  cell}$ from  Theorem \ref{th: energy-red}. Recall also
$l_{\rm max}$  defined in  \eqref{eq: lmax}.  We first treat the case of even \UUU discrete-wave \EEE periods.

\begin{lemma}[Energy excess controls length excess]\label{lemma: Lambda-stretching}
There exist $\delta_0= \delta_0(\rho)>0$ and $C=C(\rho)>0$   such that for all $0 < \delta \le \delta_0$, for all $l \in 2\Nz \cap [2,l_{\rm max}]$, and all $y \in \mathcal{U}^l_\delta$ one has
$$E^{\rm red}_{l+1}(y)  -  (l-1)e_{\rm cell}\ge     C  \big(|y_{l+1} - y_1| - |\bar{y}_{l+1} - \bar{y}_1|\big)^2_+ \ge     C  \big(|y_{l+1} - y_1| - l\Lambda(l)\big)^2_+,$$
where $\bar{y} \in \mathcal{U}^l$ is a configuration corresponding to $y$ as given in the definition of $\mathcal{U}^l_\delta$.
\end{lemma}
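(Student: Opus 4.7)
The plan is to derive the first inequality by bounding the energy excess from below and the length excess from above by the same quadratic form in the perturbation parameters $b_i - \bar{b}$ and $\varphi_i - \bar{\varphi}_i$; the two bounds then combine directly. The second inequality reduces to the observation that $|\bar{y}_{l+1} - \bar{y}_1| = \lambda^l(i_0) \le \lambda^l_{\rm max} \le l\Lambda(l)$, where the first bound follows from the definition of $\lambda^l_{\rm max}$ and the second from Lemma \ref{lemma: Lambda}, so that $(|y_{l+1} - y_1| - l\Lambda(l))_+ \le (|y_{l+1} - y_1| - |\bar{y}_{l+1} - \bar{y}_1|)_+$ and squaring preserves the inequality.

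For the lower bound on the energy excess, I would appeal to Lemma \ref{lemma: cell energy}(ii). Smoothness of $\tilde{E}_{\rm cell}$ near each of its two minimizers $(\bar{b}, \bar{b}, \pi \pm \bar{\psi})$ combined with the Hessian bound $D^2 \tilde{E}_{\rm cell} \ge c_{\rm conv}\boldsymbol{I}$ yields, via Taylor expansion, for $\delta \le \delta_0(\rho)$ small enough,
\[
\tilde{E}_{\rm cell}(b_{i-1},b_i,\varphi_i) - e_{\rm cell} \ge \tfrac{c_{\rm conv}}{4}\bigl((b_{i-1}-\bar{b})^2 + (b_i-\bar{b})^2 + (\varphi_i - \bar{\varphi}_i)^2\bigr)
\]
on each cell of any $y \in \mathcal{U}^l_\delta$. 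Summing over $i=2,\ldots,l$ and using that each $b_i$ enters at most two cell energies,
\[
E^{\rm red}_{l+1}(y) - (l-1)e_{\rm cell} \ge c_1\Bigl(\sum_{i=1}^l (b_i - \bar{b})^2 + \sum_{i=2}^l (\varphi_i - \bar{\varphi}_i)^2\Bigr)
\]
with $c_1 = c_{\rm conv}/4$.

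For the upper bound on the length excess, I would exploit that $|y_{l+1} - y_1|$ is invariant under rigid motions and therefore depends on $y$ only through $(b_i, \varphi_i)$. After aligning $\bar{y}$ with $y$ so that $y_1 = \bar{y}_1$ and the first bond directions coincide, one writes $y_{l+1} - y_1 = \sum_{j=1}^l b_j (\cos\phi_j, \sin\phi_j)$ with $\phi_j - \phi_{j-1} = \pm(\pi - \varphi_j)$, and analogously for $\bar{y}$ with the same sign pattern (dictated by \eqref{eq: general property2} and the smallness of $\delta$). Expanding $(y_{l+1}-y_1) - (\bar{y}_{l+1}-\bar{y}_1)$ and using $|\phi_j - \bar{\phi}_j| \le \sum_{i=2}^j |\varphi_i - \bar{\varphi}_i|$ together with $l \le l_{\rm max}(\rho)$, Cauchy--Schwarz gives
\[
\bigl||y_{l+1} - y_1| - |\bar{y}_{l+1} - \bar{y}_1|\bigr|^2 \le L^2 \Bigl(\sum_{i=1}^l (b_i - \bar{b})^2 + \sum_{i=2}^l (\varphi_i - \bar{\varphi}_i)^2\Bigr)
\]
for some $L = L(\rho)$. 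Combining with the previous display and setting $C = c_1/L^2$ proves the first inequality via $(t)_+^2 \le t^2$.

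The only obstacle is technical: the Taylor estimate in the convexity step must be uniform in $l$ and in the choice of local minimizer of $\tilde{E}_{\rm cell}$. Since there are only two such minimizers and $l$ is bounded by $l_{\rm max}(\rho)$, a uniform $\delta_0 = \delta_0(\rho)>0$ is obtained as the minimum of two continuity radii, so no genuine difficulty arises beyond bookkeeping of the $\rho$-dependence of the constants.
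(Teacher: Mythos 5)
Your proposal is correct and follows essentially the same route as the paper: a second-order Taylor expansion of $\tilde{E}_{\rm cell}$ at its minimizers via the Hessian bound of Lemma \ref{lemma: cell energy}(ii), a bound of the length deviation by the perturbations $(b_i-\bar{b},\varphi_i-\bar{\varphi}_i)$ using $|\phi_i-\bar{\phi}_i|\le\sum_j|\varphi_j-\bar{\varphi}_j|$, and a combination of the two with constants controlled through $l\le l_{\rm max}(\rho)$. The only cosmetic difference is that the paper applies Jensen's inequality on the energy side to produce a squared $\ell^1$ lower bound while you apply Cauchy--Schwarz on the length side; the second inequality via $|\bar{y}_{l+1}-\bar{y}_1|\le\lambda^l_{\rm max}\le l\Lambda(l)$ is identical to the paper's use of Lemma \ref{lemma: Lambda}.
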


\begin{proof}
Let $y \in \mathcal{U}^l_\delta$ and $\bar{y} \in \mathcal{U}^l$ be given. By Lemma \ref{lemma: cell energy} and a Taylor expansion we get  for some $c>0$
\begin{align}\label{eq: Taylor expansion}
E_{\rm cell}(y_{i-1},y_i,y_{i+1}) &= \tilde{E}_{\rm cell}(b_{i-1},b_{i},\varphi_i) \ge e_{\rm cell} +  \frac{c_{\rm conv}}{2}  \big(|b_{i-1} -\bar{b}|^2 + |b_{i} -\bar{b}|^2 + |\varphi_i - \bar{\varphi}_i|^2 \big)\notag \\
& \quad\quad\quad\quad\quad\quad\quad\quad\quad\quad -  c\big(|b_{i-1} -\bar{b}|^3 + |b_{i} -\bar{b}|^3 + |\varphi_i - \bar{\varphi}_i|^3 \big)\notag \\
& \ge e_{\rm cell} +  \frac{c_{\rm conv}}{ 4  } \big(|b_{i-1} -\bar{b}|^2 + |b_{i} -\bar{b}|^2 + |\varphi_i - \bar{\varphi}_i|^2 \big)
\end{align}
for all $i=2,\ldots,l$, where the last step follows  with  the definition of $\mathcal{U}^l_\delta$ and the choice $ c\delta_0 \le c_{\rm conv}/  4  $. By \eqref{eq: reduced energy} and Jensen's inequality we get 
\begin{align}\label{eq: energy-estimate-stretch}
E^{\rm red}_{l+1}(y)& =  \sum_{i=2}^{l} E_{\rm cell}(y_{i-1},y_i,y_{i+1}) \ge (l-1)e_{\rm cell} +  \frac{c_{\rm conv}}{ 4  } \Big( \sum_{i=1}^{l}|b_{i} -\bar{b}|^2 + \sum_{i=2}^{l}|\varphi_i - \bar{\varphi}_i|^2 \Big) \notag \\
&\ge  (l-1)e_{\rm cell} +   \frac{c_{\rm conv}}{  4  (2l-1)} \Big(\sum_{i=1}^l |b_i - \bar{b}| + \sum_{i=2}^l |\varphi_i - \bar{\varphi}_i| \Big)^2. 
\end{align}
For $i=1,\ldots, l$ we let $\phi_i$ and $\bar{\phi}_i$  be the angles defined in \eqref{eq: angles-phi}, associated to $y$ and $\bar{y}$, respectively.   Possibly after rotations, it is not restrictive to suppose that $(y_{l+1} - y_1) \cdot e_2 = 0$  and that $\phi_1 = \bar{\phi}_1$. Clearly, we get $|\phi_i-\bar{\phi}_i| \le \sum_{j=2}^i |\varphi_j - \bar{\varphi}_j| \le \sum_{j=2}^l |\varphi_j - \bar{\varphi}_j|  $ for all $i=2,\ldots,l$. Since, $\cos$ is Lipschitz with constant $1$, we then derive for each $i=1,\ldots,l$
\begin{align}\label{eq: projected bond}
(y_{i+1} -y_i)\cdot e_1 &= b_i \cos(\phi_i) \le  \bar{b} \cos({\phi}_i) + |b_i - \bar{b}|   \le  \bar{b}
                          \cos(\bar{\phi}_i) + |b_i - \bar{b}| +
                          \bar{b}  |\phi_i-\bar{\phi}_i|\nonumber\\
&\le \bar{b} \cos(\bar{\phi}_i) + |b_i - \bar{b}| + \sum_{j=2}^l |\varphi_j - \bar{\varphi}_j|
\end{align}
where we also used the fact that $\bar{b}<1$. 
We now get 
\begin{align*}
|y_{l+1}-y_1| &= \Big|\sum^l_{i=1}  (y_{i+1} -y_i)\cdot e_1\Big| \le \Big|\sum^l_{i=1}  \bar{b} \cos(\bar{\phi}_i)\Big| +  \sum_{i=1}^l    |b_i - \bar{b}|  + l  \sum_{i=2}^l |\varphi_i - \bar{\varphi}_i|\\
& \le |\bar{y}_{l+1} - \bar{y}_1| +  \sum_{i=1}^l    |b_i - \bar{b}|  + l  \sum_{i=2}^l |\varphi_i - \bar{\varphi}_i|
\end{align*}
which together with \eqref{eq: energy-estimate-stretch} and the choice
$C = c_{\rm conv}/( 4  (2l-1)l^2)$ gives the first inequality of the
statement. The second inequality   follows    from Lemma \ref{lemma: Lambda}. 
\end{proof}

Similarly to Lemma \ref{lemma: mixture}, we now consider two consecutive waves with  odd \UUU discrete-wave \EEE periods and provide a control on the length in terms of the junction angle.

\begin{lemma}[Junction angle controls length excess]\label{lemma: mixture-stretching}
Let  $\delta >0$ and  $l_1,l_2 \in (2\Nz+1) \cap  [2, l_{\rm max}]$. Let $y,\bar{y}: \lbrace 1,\ldots, l_1+l_2 + 1 \rbrace \to \Rz^2$ be configurations with
\begin{align*}
&y^1 := (y_1, \ldots,y_{l_1+ 2  }) \in \mathcal{U}^{l_1  + 1  }_\delta, \ \ \ \ \ y^2:=(y_{l_1+  2  } \ldots,y_{l_1+l_2+1}) \in \mathcal{U}^{l_2  -1}_\delta,\\
&\bar{y}^1 := (\bar{y}_1, \ldots,\bar{y}_{l_1  + 2 }) \in \mathcal{U}^{l_1  + 1 }, \ \ \ \ \ \bar{y}^2:=(\bar{y}_{l_1+  2 } \ldots,\bar{y}_{l_1+l_2+1}) \in \mathcal{U}^{l_2  -1 }
\end{align*}
and $\bar{y}^i$, $i=1,2$, are configurations corresponding to $y^i$ as given in \eqref{eq: uldelta}.     Then we have
$$|y_{l_1+l_2+1} - y_1|    \le |\bar{y}_{l_1+l_2+1} - \bar{y}_1| + 2l_{\rm max}| \varphi_{l_1+  2 } -\bar{\varphi}_{l_1+  2 }|  + 4l^2_{\rm max}\delta .$$
\end{lemma}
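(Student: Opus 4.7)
\emph{Proof plan.} My plan is to compare $y$ and $\bar{y}$ directly as chains of $l_1+l_2$ bonds, using that by hypothesis $|b_i - \bar{b}|\le\delta$ holds for every bond index $i \in \lbrace 1,\ldots,l_1+l_2\rbrace$ and $|\varphi_i - \bar{\varphi}_i|\le\delta$ holds for every interior angle index $i \in \lbrace 2,\ldots,l_1+l_2 \rbrace \setminus \lbrace l_1+2 \rbrace$, while the junction discrepancy $|\varphi_{l_1+2} - \bar{\varphi}_{l_1+2}|$ is the single unrestricted quantity and will be the source of the first error term on the right-hand side of \eqref{eq: estimate of the lemma}.

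After a rigid motion (which alters neither side of \eqref{eq: estimate of the lemma}), I assume $y_1 = \bar{y}_1$ and $\phi_1 = \bar{\phi}_1$, where $\phi_i := \sphericalangle(e_1, y_{i+1}-y_i)$ and analogously for $\bar{\phi}_i$. Lifting the $\phi_i$ and $\bar{\phi}_i$ to $\Rz$, the identity $\phi_{i+1} = \phi_i + \pi + \varphi_{i+1}$ (which holds as a real-number equality on suitable representatives, since $\cos,\sin$ are $2\pi$-periodic) telescopes to
\begin{align*}
|\phi_i - \bar{\phi}_i| \le \sum_{j=2}^i |\varphi_j - \bar{\varphi}_j| \le (i-1)\delta + |\varphi_{l_1+2} - \bar{\varphi}_{l_1+2}|\,\boldsymbol{1}_{\lbrace i \ge l_1+2\rbrace}
\end{align*}
for $i = 1, \ldots, l_1+l_2$.

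Writing $y_{l_1+l_2+1} - y_1 = \sum_{i=1}^{l_1+l_2} b_i(\cos\phi_i,\sin\phi_i)$ (and similarly for $\bar{y}$), the $1$-Lipschitz continuity of $\cos,\sin$ combined with the triangle inequality yields
\begin{align*}
|y_{l_1+l_2+1} - y_1| \le |\bar{y}_{l_1+l_2+1} - \bar{y}_1| + \sum_{i=1}^{l_1+l_2}|b_i - \bar{b}| + \bar{b} \sum_{i=1}^{l_1+l_2} |\phi_i - \bar{\phi}_i|.
\end{align*}
Using $\bar{b}<1$, $l_1+l_2 \le 2l_{\rm max}$, and noting that the junction term $|\varphi_{l_1+2} - \bar{\varphi}_{l_1+2}|$ enters into only $l_2 \le l_{\rm max}$ summands of the angular sum, the bond and angular contributions are controlled respectively by $2l_{\rm max}\delta$ and $2l_{\rm max}^2\delta + l_{\rm max}|\varphi_{l_1+2} - \bar{\varphi}_{l_1+2}|$, whose sum is dominated by the target bound $4l_{\rm max}^2\delta + 2l_{\rm max}|\varphi_{l_1+2} - \bar{\varphi}_{l_1+2}|$. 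The argument is elementary; the only mild point of care is the lifting of the bond-direction angles to real numbers in the telescoping step, which is legitimate because only $\cos\phi_i$ and $\sin\phi_i$ enter the formula for the positions.
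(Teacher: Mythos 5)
Your argument is correct and follows essentially the same route as the paper's proof: fix the chains by a rigid motion so that the initial bond directions agree, telescope the angle differences $|\phi_i-\bar\phi_i|\le\sum_{j\le i}|\varphi_j-\bar\varphi_j|$ isolating the single uncontrolled junction angle at $i=l_1+2$, and sum the resulting per-bond errors; the only cosmetic difference is that the paper first rotates so that $(y_{l_1+l_2+1}-y_1)\cdot e_2=0$ and compares the $e_1$-projections of the bonds, whereas you use the vector triangle inequality directly, and both bookkeepings land within the stated constants $2l_{\rm max}$ and $4l_{\rm max}^2$. (Your citations of \eqref{eq: estimate of the lemma} should point to the display in the present lemma rather than to Lemma \ref{lemma: mixture}, but this does not affect the argument.)
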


\begin{proof}
We denote the angles $\phi_i$ and $\bar{\phi}_i$ as in the previous proof, where the sum now runs from $1$ to $l_1+l_2$. We may again suppose that, possibly after a rotation, we have $(y_{l_1+l_2+1} - y_1) \cdot e_2 = 0$  and that $\phi_1 = \bar{\phi}_1$. This implies $|\phi_i-\bar{\phi}_i|  \le \sum_{j=2}^{l_1+l_2} |\varphi_j - \bar{\varphi}_j|$ for all $i =1,\ldots,l_1+l_2$. 
Repeating the estimate in \eqref{eq: projected bond} and recalling \eqref{eq: uldelta} we find
\begin{align*}
(y_{i+1} -y_i)\cdot e_1  - \bar{b} \cos(\bar{\phi}_i) \le  |b_i - \bar{b}| + \sum_{j=2}^{l_1+l_2} |\varphi_j - \bar{\varphi}_j| \le \delta + (l_1 -1 + l_2-1)\delta  + | \varphi_{l_1+  2  } -\bar{\varphi}_{l_1+  2}|. 
\end{align*}
The claim follows by taking the sum over $i=1,\ldots,l_1+l_2$. 
\end{proof}

We close this section with the observation that also for configurations in $\mathcal{U}^l_{\delta}$ the maximal \UUU discrete-wave \EEE period is given by $l_{\rm max}$.  

\begin{lemma}[Maximal \UUU discrete-wave \EEE period]\label{lemma: lmax2}
There exists $\delta_0 = \delta_0(\rho) > 0$  such that for all $0 < \delta \le \delta_0$ we have $\mathcal{U}^l_\delta \cap \mathcal{A}_{l+1} = \emptyset$ for each $l \ge l_{\rm max}$. 
\end{lemma}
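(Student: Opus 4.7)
The plan is to argue by contradiction: suppose there exists some $l \ge l_{\rm max}$ and some $y \in \mathcal{U}^l_\delta \cap \mathcal{A}_{l+1}$, with an associated configuration $\bar{y} \in \mathcal{U}^l$ as in the definition \eqref{eq: uldelta}. By Lemma \ref{lemma: lmax} we know $\bar{y} \notin \mathcal{A}_{l+1}$. More precisely, inspecting the proof of that lemma, one of the two conditions $i_0 \ge \lceil 2\pi/\bar{\psi}\rceil - 1$ or $l+1-i_0 \ge \lceil 2\pi/\bar{\psi}\rceil - 1$ must hold; after possibly reversing the labeling, assume the first. Setting $j = \lceil 2\pi/\bar{\psi} \rceil \ge 3$, the computation \eqref{eq: proof-lemma-lmax} gives
\[
|\bar{y}_j - \bar{y}_1| \le 1.
\]
Since $|j-1| \ge 2$, admissibility of $y$ demands $|y_j - y_1| > 3/2$, so it suffices to show that $|y_j - y_1|$ is within $1/2$ of $|\bar{y}_j - \bar{y}_1|$ when $\delta$ is small.

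To compare $y$ and $\bar{y}$, I would translate and rotate so that $y_1 = \bar{y}_1$ and the first bonds have the same direction, and introduce directional angles $\phi_k = \sphericalangle(e_1, y_{k+1} - y_k)$ and $\bar{\phi}_k$ analogously, exactly as in the proofs of Lemma \ref{lemma: Lambda-stretching} and Lemma \ref{lemma: mixture-stretching}. Since the directional increment satisfies $\phi_{k+1} - \phi_k \in \{\pi - \varphi_{k+1}, \varphi_{k+1} - \pi\} \pmod{2\pi}$ (and the same relation for $\bar{\phi}$), the definition \eqref{eq: uldelta} propagates the angular error telescopically:
\[
|\phi_k - \bar{\phi}_k| \le \sum_{m=2}^{k} |\varphi_m - \bar{\varphi}_m| \le (k-1)\delta \le l_{\rm max}\, \delta
\]
for all $k \le j-1 \le l_{\rm max}$. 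Writing $y_{k+1} - y_k = b_k (\cos\phi_k, \sin\phi_k)$ and likewise for $\bar{y}$, and using that $\cos$ and $\sin$ are $1$-Lipschitz together with $\bar{b} < 1$, one gets
\[
\bigl| y_j - y_1 - (\bar{y}_j - \bar{y}_1) \bigr| \le \sum_{k=1}^{j-1} |b_k - \bar{b}| + \sum_{k=1}^{j-1} |\phi_k - \bar{\phi}_k| \le l_{\rm max}\, \delta + l_{\rm max}^2 \, \delta.
\]
Combining with the bound on $|\bar{y}_j - \bar{y}_1|$ from \eqref{eq: proof-lemma-lmax} gives
\[
|y_j - y_1| \le 1 + 2 l_{\rm max}^2 \, \delta.
\]

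Finally, I would choose $\delta_0 = \delta_0(\rho)$ so small that $2 l_{\rm max}^2 \, \delta_0 \le 1/4$; since $l_{\rm max}$ depends only on $\bar{\psi}$ and hence only on $\rho$ (cf.\ Remark \ref{rem: psi}), this is a valid dependence. For any $0 < \delta \le \delta_0$ we then obtain $|y_j - y_1| \le 5/4 < 3/2$, in contradiction to $y \in \mathcal{A}_{l+1}$ via \eqref{eq: admissible conf}. There is no serious obstacle here: the argument is a continuous-dependence estimate building directly on Lemma \ref{lemma: lmax}, and the only mild care needed is to align the initial position and direction of $y$ and $\bar{y}$ before propagating the bond-length and angle perturbations.
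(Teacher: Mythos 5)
Your proposal is correct and follows essentially the same route as the paper: a contradiction argument that combines the bound $|\bar{y}_j-\bar{y}_1|\le 1$ from \eqref{eq: proof-lemma-lmax} with a perturbation estimate (propagating the bond-length and angle deviations from \eqref{eq: uldelta} as in \eqref{eq: projected bond}) to force $|y_j-y_1|<1.5$, contradicting \eqref{eq: admissible conf}. The only cosmetic difference is that you track the constants slightly more explicitly ($2l_{\rm max}^2\delta$ versus the paper's $\sqrt{2}((j-1)\delta+(j-1)(l-1)\delta)$), which changes nothing of substance.
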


\begin{proof}
We argue by contradiction. Suppose that there exists $y \in
\mathcal{U}^l_\delta \cap \mathcal{A}_{l+1} $. Let $\bar{y} \in
\mathcal{U}^l$ be an associated configuration from \eqref{eq: uldelta}. As $l \ge l_{\rm max}$, we find $i_0 > \lceil
2\pi/\bar{\psi}\rceil -2$ or $l+1 - i_0 > \lceil 2\pi/\bar{\psi}\rceil
-2$  with $i_0$ from \eqref{eq: general property2}. Possibly after
 inverting the labeling  of the \UUU particles \EEE in the chain, we can
assume that $i_0 \ge \lceil 2\pi/\bar{\psi}\rceil -1$. With $j =
\lceil 2\pi/\bar{\psi} \rceil$, we repeat the proof of Lemma
\ref{lemma: lmax}, see \eqref{eq: proof-lemma-lmax}, to find
$|\bar{y}_j - \bar{y}_1| \le 1$. Moreover,  using \eqref{eq: uldelta} and   adapting the
argument leading to \eqref{eq: projected bond}, we get
$$  | (\bar{y}_j-{y}_j) -  {(\bar{y}_1-y_1)|  } \le \sqrt{2} \big( (j-1)\delta + (j-1)(l-1)\delta).$$
Consequently, for $\delta$ small enough depending only on $l_{\rm max}$ (and thus only on $\rho$, cf. Remark \ref{rem: psi}), we derive $|y_j - y_1| < 1.5$, which contradicts \eqref{eq: admissible conf}.
\end{proof}

\section{The multiple-period problem}\label{sec: multi-period}

In this section, we study the relation   between  length and
energy   for    a chain consisting of more than one  single  \UUU discrete-wave \EEE
period. More precisely, we will investigate configurations  
belonging to  
$$\mathcal{A}^\delta_n:= \big\{ y\in \mathcal{A}_n \, | \ |b_i - \bar{b}| \le \delta, \  \min\lbrace |\varphi_i - \pi -\bar{\psi}|, |\varphi_i - \pi +\bar{\psi}| \rbrace \le \delta \ \ \text{for all} \ i=1,\ldots,n-1 \big\}$$
 for $\delta >0$ to be specified below, where the bond lengths $b_i$
 and angles $\varphi_i$ are defined in \eqref{eq: bonds and
   angles}. (As before, for the angles   indices run only from 2 to $n-1$.) For later purpose, we note that by Lemma \ref{lemma: cell energy},(ii) we have
 \begin{align}\label{eq: later purpose-convexity}
\frac{c_{\rm conv}}{  4  } \Big(\sum_{i=1}^{n-1} |b_i - \bar{b}_i|^2 +  \sum_{i=2}^{n-1} |\varphi_i - \bar{\varphi}_i|^2\Big) \le E^{\rm red}_n(y) - (n-2) e_{\rm cell} 
\end{align}  
for $c_{\rm conv} = c_{\rm conv}(\rho)>0$ and $\delta \le \delta_0$ with $\delta_0$ from Lemma \ref{lemma: Lambda-stretching}, cf. \eqref{eq: Taylor expansion} for the exact computation. We split our considerations into two parts concerning the reduced and the general energy, respectively.

\subsection{The multiple-period problem for the reduced energy}
We introduce the index set 
\begin{align}\label{eq: sgndef}
\mathcal{I}_{\rm sgn} = \lbrace i=3,\ldots,n-2 \,| \  \varphi_{i} > \pi , \ \varphi_{i+1} < \pi  \rbrace \cup \lbrace 1 \rbrace. 
\end{align}
The   index set  is denoted by `sgn' to highlight that at the points $y_i$, $i \in \mathcal{I}_{\rm sgn}$, the sign of $\varphi_i- \pi$ changes from plus to minus. For the application in Section \ref{sec: main proof} it is convenient to also take the index $i=1$ into account. Sometimes we will also consider the `shifted' index set
\begin{align}\label{eq: sgndef2}
\mathcal{I}_{\rm sgn}' = \lbrace i=3,\ldots,n-2 \,| \  \varphi_{i-1} > \pi , \ \varphi_{i} < \pi  \rbrace. 
\end{align}
We also define a decomposition of $ \mathcal{I}_{\rm sgn}$  by
\begin{align}\label{eq: sgndef-l}
\mathcal{I}^l_{\rm sgn} = \big\{ i \in  \mathcal{I}_{\rm sgn} \,| \   i+k \notin \mathcal{I}_{\rm sgn} \text{ for } k=1,\ldots,l-1, \ i+l \in \mathcal{I}_{\rm sgn} \cup \lbrace n \rbrace\big\}
\end{align}
for $l \in \Nz$, $ l \ge 2$. 

For a minimizer $y$ of $E^{\rm red}_n$, the length of the waves corresponding to even \UUU discrete-wave \EEE periods $(\mathcal{I}^l_{\rm sgn})_{l \in 2\Nz}$ can be estimated by
$$\sum_{l \in 2\Nz}  \# \mathcal{I}_{\rm sgn}^l \lambda^l_{\rm max}  \le \sum_{l \in 2\Nz} \# \mathcal{I}_{\rm sgn}^l  l\Lambda(l),$$
where we used Lemma \ref{lemma: Lambda}. In the previous section, see particularly Lemma \ref{lemma: mixture}, we have also seen that the length of waves with odd \UUU discrete-wave \EEE period can be controlled in terms of waves with even \UUU discrete-wave \EEE period. For later purpose, we introduce the \textit{maximal length of odd \UUU discrete-wave \EEE periods}  $\mathcal{L}: 2\Nz + 1 \to (0,\infty)$ by
\begin{align}\label{eq: odd-representation}
\mathcal{L}\big( (\#  \mathcal{I}^l_{\rm sgn})_{l \in 2\Nz +1 }\big) = \max \Big\{   &\sum_{l \in 2\Nz} r_l l\Lambda(l)  \, \big| \ r_l \in \Nz \ \text{ for } \ l \in 2\Nz: \notag \\ &  \sum_{l \in 2\Nz} r_l = \sum_{l \in 2\Nz +1} \#    \mathcal{I}^l_{\rm sgn},
 \ \ \sum_{l \in 2\Nz} l r_l  = 2\Big\lfloor \sum_{l \in 2\Nz +1}  l \#    \mathcal{I}^l_{\rm sgn}/2 \Big\rfloor  \Big\}.
\end{align}
 Recall    the definition of the maximal \UUU discrete-wave \EEE period $l_{\rm max}$ in \eqref{eq: lmax}. For convenience, we introduce also a  relabeling  $\mathcal{I}_{\rm sgn} \cup \lbrace n\rbrace = \lbrace i_1,\ldots,i_{J} \rbrace$ for an increasing sequence of integers $(i_j)^J_{j=1}$.  The following lemma controls (up to some boundary effects) the length of the chain in terms of the contributions of waves with even and odd \UUU discrete-wave \EEE periods.

\begin{lemma}[Length of chain with minimal energy]\label{lemma: chain-minimal}
Let $y : \lbrace 1,\ldots, n\rbrace \to \Rz^2$ with  $y \in \mathcal{A}_n$  be a  minimizer  of $E^{\rm red}_n$.  Then 
$$
|y_n-y_1| \le  \sum_{l \in 2\Nz} \# \mathcal{I}_{\rm sgn}^l l\Lambda(l) + \mathcal{L}\big( (\#  \mathcal{I}^l_{\rm sgn})_{l \in 2\Nz +1 }\big)   - \frac{c_{\rm mix}}{2} \sum_{l \in 2\Nz+1} \# \mathcal{I}_{\rm sgn}^l  + 4l_{\rm max}, 
$$
where   $c_{\rm mix}> 0 $ is the constant from Lemma \emph{\ref{lemma: mixture}}.
\end{lemma}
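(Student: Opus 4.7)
Since $y$ minimizes $E^{\rm red}_n$, Theorem~\ref{th: energy-red} gives $b_i = \bar{b}$ for all bonds and $\varphi_i \in \lbrace \pi - \bar{\psi}, \pi + \bar{\psi}\rbrace$ for all angles. Enumerate $\mathcal{I}_{\rm sgn} \cup \lbrace n\rbrace = \lbrace i_1, \ldots, i_J\rbrace$ with $i_1 = 1 < i_2 < \cdots < i_J = n$ and set $l_j = i_{j+1} - i_j$. The definition of $\mathcal{I}_{\rm sgn}$ forces, for every interior index $j \in \lbrace 2, \ldots, J-2\rbrace$, the angles along the segment from $y_{i_j}$ to $y_{i_{j+1}}$ to begin at $\pi - \bar{\psi}$, change from $-$ to $+$ at exactly one inner index, and end at $\pi + \bar{\psi}$; hence this segment is a configuration in $\mathcal{U}^{l_j}$. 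The first and last segments ($j = 1$ and $j = J-1$) lack this forced pattern at the free end, but by an argument analogous to Lemma~\ref{lemma: lmax} (the chain is admissible so cannot wrap around beyond $l_{\rm max}$) we still have $l_1, l_{J-1} \le l_{\rm max}$.

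I would then apply the triangle inequality $|y_n - y_1| \le \sum_{j=1}^{J-1} |y_{i_{j+1}} - y_{i_j}|$ and handle the two boundary terms $|y_{i_2} - y_1|, |y_n - y_{i_{J-1}}|$ crudely by $l_j \bar{b} \le l_{\rm max}$, contributing at most $2l_{\rm max}$. For each interior even-period piece, Lemma~\ref{eq: longest length} combined with Lemma~\ref{lemma: Lambda0} (and Lemma~\ref{lemma: Lambda} in the range $l_j > l_{\rm mid}$) gives $|y_{i_{j+1}} - y_{i_j}| \le \lambda^{l_j}_{\rm max} \le l_j \Lambda(l_j)$, which summed yields the target term $\sum_{l \in 2\Nz}\#\mathcal{I}^l_{\rm sgn} \, l\Lambda(l)$.

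For the interior odd-period pieces, with $N_{\rm odd} := \sum_{l \in 2\Nz+1}\#\mathcal{I}^l_{\rm sgn}$ and $L_{\rm odd} := \sum_{l \in 2\Nz+1} l\,\#\mathcal{I}^l_{\rm sgn}$, I would pair them along the chain. For a pair of consecutive odd pieces with atomic periods $l_1, l_2$ (after reindexing the pair so that $l_1 \ge l_2$), Lemma~\ref{lemma: mixture}, applied by reinterpreting the shared junction atom so that the two pieces become \emph{even} of periods $l_1 + 1$ and $l_2 - 1$, bounds the combined length of the pair by $\max_{t \in \lbrace -1,1\rbrace}\bigl[(l_1+t)\Lambda(l_1+t) + (l_2-t)\Lambda(l_2-t)\bigr] - c_{\rm mix}$. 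Summing these pair-bounds over all $\lfloor N_{\rm odd}/2\rfloor$ pairs, the resulting collection of $2\lfloor N_{\rm odd}/2\rfloor$ even periods has total bond count $2\lfloor L_{\rm odd}/2\rfloor$ and hence constitutes an admissible choice of $r_l$'s in the definition~\eqref{eq: odd-representation} of $\mathcal{L}$; the Lambda-length sum is therefore bounded by $\mathcal{L}\bigl((\#\mathcal{I}^l_{\rm sgn})_{l \in 2\Nz+1}\bigr)$, while the total savings amount to $c_{\rm mix}\lfloor N_{\rm odd}/2\rfloor \ge (c_{\rm mix}/2)N_{\rm odd} - c_{\rm mix}/2$.

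Combining the boundary, even-piece, and paired odd-piece contributions yields the claimed inequality; the additive constant $4l_{\rm max}$ absorbs the two boundary segments, one possibly unpaired odd piece when $N_{\rm odd}$ is odd (length $\le l_{\rm max}$), and the $c_{\rm mix}/2$ rounding (using $c_{\rm mix} < 1$ from Lemma~\ref{lemma: mixture}). The main obstacle is the pairing step, since Lemma~\ref{lemma: mixture} is stated only for two \emph{adjacent} odd pieces sharing a junction atom, whereas the odd pieces of $y$ may be separated by even ones in the chain. I would resolve this by an abstract reshuffling: since the per-piece upper bound $\sum_j |y_{i_{j+1}} - y_{i_j}| \le \sum_j \lambda^{l_j}_{\rm max}$ depends only on the inventory $(l_j)_j$ of atomic periods and not on their order, one may reorder the pieces in the upper bound so that odd pieces become adjacent, then apply Lemma~\ref{lemma: mixture} to each such abstract adjacent pair, and finally compare with the order-blind quantity $\mathcal{L}$. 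This combinatorial reshuffling of the upper bound (as opposed to the chain itself) is lossless, and delivers the claim.
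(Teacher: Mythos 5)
Your setup (decomposition of the chain into pieces lying in $\mathcal{U}^{l_j}$, the triangle inequality, crude boundary terms, and the bound $\lambda^{l_j}_{\rm max}\le l_j\Lambda(l_j)$ for even pieces via Lemmas \ref{eq: longest length}, \ref{lemma: Lambda0}, and \ref{lemma: Lambda}) matches the paper. The gap is in the odd-period step, precisely at the point you flag as ``the main obstacle''. Your proposed fix --- first bound each piece by its individual maximum $\lambda^{l_j}_{\rm max}$, then reshuffle the resulting order-blind sum so that odd pieces become adjacent and apply Lemma \ref{lemma: mixture} to the abstract pairs --- cannot work. Lemma \ref{lemma: mixture} is not an inequality between the numbers $\lambda^{l_1}_{\rm max}+\lambda^{l_2}_{\rm max}$ and $\max_t\big(\lambda^{l_1+t}_{\rm max}+\lambda^{l_2-t}_{\rm max}\big)-c_{\rm mix}$; it bounds the end-to-end distance of an \emph{actual} configuration in which the two pieces share a junction atom whose angle satisfies $\varphi_{l_1+2}-\pi\in(1+2\Zz)\bar{\psi}$. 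Once each odd piece has been replaced by its individual maximum, the statement you would need is false: as the paper observes right before Lemma \ref{lemma: mixture}, for odd $l$ one has $\lambda^l_{\rm max}=\lambda^l((l+1)/2)>l\Lambda(l)=\tfrac12\big(\lambda^{l-1}_{\rm max}+\lambda^{l+1}_{\rm max}\big)+\tfrac12\big(\Lambda(l-1)-\Lambda(l+1)\big)$, so for a pair with $l_1=l_2=l$ your reshuffled bound gives $2\lambda^l_{\rm max}>\lambda^{l-1}_{\rm max}+\lambda^{l+1}_{\rm max}=\max_t\big(\lambda^{l+t}_{\rm max}+\lambda^{l-t}_{\rm max}\big)$, strictly the wrong way, with no room left for the $-c_{\rm mix}$ gain. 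The whole point of the odd-period analysis is that two odd pieces of the actual chain cannot \emph{both} realize their individual maxima; that information is carried by the junction angles, which the per-piece bound discards.

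The paper's resolution keeps the geometry. For two odd pieces that are consecutive among the odd pieces (but possibly separated by even ones), it writes the total displacement as $A+\sum_m B_m+C$, with $A,C$ the displacement vectors of the two (reindexed) odd pieces and $B_m$ those of the intervening even pieces, and uses $|A+\sum_m B_m+C|\le|A+C|+\sum_m|B_m|$. The quantity $|A+C|$ is the length of the configuration obtained by translating the second odd piece onto the end of the first, and --- this is the key fact your argument has no access to --- because all intervening pieces have \emph{even} atomic period, the junction angle of this glued configuration lies in $\pi+(1+2\Zz)\bar{\psi}$, so Lemma \ref{lemma: mixture} applies to it and yields the $\max_t(\cdots)-c_{\rm mix}\boldsymbol{1}$ bound. (A further detail you would also need: the $-c_{\rm mix}$ term in Lemma \ref{lemma: mixture} is only available when $l_1\ge l_2$ in the order in which the chain is traversed; the paper arranges this for at least half the pairs by possibly reversing the whole chain, not by reindexing individual pairs.) So the even-piece and boundary parts of your argument are fine, but the odd-piece pairing must be performed on the actual chain, as in the paper, not on the inventory of atomic periods.
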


\begin{proof}
 Consider the labeling $\mathcal{I}_{\rm sgn}  \cup \lbrace n\rbrace = \lbrace i_1,\ldots,i_{J} \rbrace$. Moreover,  we choose indices $j_1 < j_2 < \ldots< j_K$ such that $\bigcup_{l \in 2\Nz+1} \mathcal{I}_{\rm sgn}^l = \lbrace i_{j_1},\ldots,i_{j_K} \rbrace$. Note that $K=\sum_{l \in 2\Nz+1} \# \mathcal{I}_{\rm sgn}^l$. In the following, we will consider pairs of indices $i_{j_k}$, $i_{j_{k+1}}$ for odd $k$ with corresponding lengths $l_1^k = i_{j_k+1} - i_{j_k} $ and $l_2^k =  i_{j_{k+1}+1} - i_{j_{k+1}} $. We will suppose that
\begin{align}\label{eq: choose one case}
\# \mathcal{K} \ge  \lfloor K/2 \rfloor \ \ \ \text{with}  \ \ \ \mathcal{K} := \lbrace k \,| \ l_1^k \ge  l_2^k \rbrace.
\end{align}
The case  $\# \mathcal{K} < \lfloor K/2 \rfloor$ is very similar by considering the chain in reverse order. We indicate the necessary adaptions  at the end of the proof.

Consider a pair of indices $i_{j_k}$, $i_{j_{k+1}}$ for odd $k$. Recall that  $l_1^k = i_{j_k+1} - i_{j_k} $, $l_2^k =  i_{j_{k+1}+1} - i_{j_{k+1}} $ are odd and $  l_{m,k}  := i_{j_k+m+1} - i_{j_k+m}$ are even for all $m \in \lbrace 1, \ldots, M_k-1 \rbrace$, where $M_k := j_{k+1} - j_k$. We can decompose the chain $(y_{i_{j_k}},\ldots,  y_{i_{j_{k+1}+1}})$ into the parts
\begin{align}\label{eq: new1}
 \hat{y}^{0,k}  &= (y_{i_{j_k}},\ldots,y_{i_{j_k+1}+1}) \in \mathcal{U}^{l_1^k +1} \notag\\
 \hat{y}^{m,k}  &= (y_{i_{j_k+m}+1},\ldots,y_{i_{j_k+m+1}+1}) \in \mathcal{U}^{l_{m,k}}   \ \ \ \text{for} \ \  m \in \lbrace 1, \ldots, M_k-1 \rbrace, \notag\\
 \hat{y}^{M_k,k}  & =  (y_{i_{j_{k+1}}+1},\ldots,y_{i_{j_{k+1}+1}})  \in \mathcal{U}^{l_2^k -1}.
\end{align}
Here, we have used Theorem \ref{th: energy-red} and the     fact that  $  i_{j_k+m}+1  \in \mathcal{I}_{\rm sgn}'$ (cf. \eqref{eq: sgndef}-\eqref{eq: sgndef2})  to see that the chains have the form introduced in \eqref{eq: general property1}-\eqref{eq: general property2}.   (We refer to Figure \ref{figure6} for an illustration of composed single-period waves.)   We also define the configuration $ \tilde{y}^{k}:   \lbrace l_1^k + l_2^k +1 \rbrace \to \Rz^2 $ by   (recall $l_1^k = i_{j_k+1} - i_{j_k} $ and $l_2^k =  i_{j_{k+1}+1} - i_{j_{k+1}} $)
\begin{align}\label{eq: new2}
(\tilde{y}^{k}_1,\ldots, \tilde{y}^{k}_{l_1^k+2}) =  \hat{y}^{0,k},  \ \ \ (\tilde{y}^{k}_{l_1^k+2},\ldots, \tilde{y}^{k}_{l^k_1+l_2^k+1}) =  \hat{y}^{M_k,k}  + \boldsymbol{t}
\end{align}
for the translation $\boldsymbol{t} =  (y_{i_{j_k+1}+1} - y_{i_{j_{k+1}}+1},\ldots,y_{i_{j_k+1}+1} - y_{i_{j_{k+1}}+1}) \in \Rz^{2 \times l_2^k}$. By the definition of the function $\lambda^{l}$ in \eqref{eq: lambda-def} and the triangle inequality we get
\begin{align}\label{eq: triangle inequality}
|y_{i_{j_{k+1}+1}}- y_{i_{j_k}}| &\le  |\tilde{y}^{k}_{l_1^k + l_2^k +1} - \tilde{y}^{k}_1| + \sum_{m=1}^{M_k-1} |\hat{y}^{m,k}_{l_{m,k}+1} - \hat{y}^m_1|  \notag \\
& \le |\tilde{y}^{k}_{l_1^k + l_2^k +1} - \tilde{y}^{k}_1| + \sum_{m=1}^{M_k-1} \lambda_{\rm max}^{l_{m,k}}.    
\end{align}
From Theorem \ref{th: energy-red} we recall that each angle $\varphi_i$ (see \eqref{eq: bonds and angles}) enclosed by two bonds is $\pi + \bar{\psi}$ or $\pi - \bar{\psi}$. Due to the fact that the \UUU discrete-wave \EEE periods $l_{m,k}$ for $m \in \lbrace 1,\ldots, M_k-1\rbrace$ are even, we find $(i_{j_{k+1}}+1) -  (i_{j_k+1}+1) \in 2\Nz$. Thus, 
$$
\sphericalangle(y_{i_{j_{k+1}}+2} -   y_{i_{j_{k+1}}+1}, y_{i_{j_k+1}+1} - y_{i_{j_k+1}} ) \in {  \pi +  } (1 + 2\Nz)\bar{\psi},
$$
i.e.,  the junction angle $\tilde{\varphi}_{l_1^k+2}$ at $\tilde{y}_{l_1^k+2}$  satisfies $\tilde{\varphi}_{l_1^k+2} - \pi  \in  (1 +2\Zz)\bar{\psi}$. Consequently, we can apply Lemma \ref{lemma: mixture} and find together with \eqref{eq: triangle inequality}
\begin{align}\label{eq: triangle inequality2}
|y_{i_{j_{k+1}+1}} - y_{i_{j_k}}| \le \max_{t \in \lbrace -1,1\rbrace} \big( (l_1^k + t)\Lambda(l_1^k + t) +(l_2^k - t)\Lambda(l_2^k - t) \big)  -c_{\rm mix}\boldsymbol{1}_{\mathcal{K}}(k)   + \sum_{m=1}^{M_k-1} l_{m,k}\Lambda(l_{m,k}).
\end{align}
Here, we have also used that the \UUU discrete-wave \EEE periods $l_1^k+t$, $l_2^k-t$, and $l_{m,k}$ are even and have applied Lemma \ref{lemma: Lambda}.

We proceed in this way for all $k \in \lbrace 1,3,\ldots, 2\lfloor K/2\rfloor-1 \rbrace$ and then we derive by \eqref{eq: choose one case}, \eqref{eq: triangle inequality2},  Lemma \ref{lemma: Lambda}, and the triangle inequality 
\begin{align*}
|y_n-y_1| &\le \sum_{l \in 2\Nz} \#\mathcal{I}_{\rm sgn}^l l\Lambda(l) + \sum_{ k \, {\rm odd}  }  \max_{t \in \lbrace -1,1\rbrace} \big( (l_1^k + t)\Lambda(l_1^k + t) +(l_2^k - t)\Lambda(l_2^k - t) \big) - \lfloor K/2 \rfloor c_{\rm mix} \notag \\ & \ \ \  + |y_{i_2}-y_1| + |y_n - y_{i_{J-1}}| + |y_{i_{j_K+1}} - y_{i_{j_K}} |,
\end{align*}
where here and in the following the sum  in $k$  always runs over $\lbrace 1,3,\ldots, 2\lfloor K/2\rfloor-1 \rbrace$. Note that the last three terms appear since  Lemma \ref{lemma: Lambda} and the estimate \eqref{eq: triangle inequality2} are possibly not applicable. (The very last term is only necessary for odd $K$.) However, in view  of $y \in \mathcal{A}_n$, Lemma \ref{lemma: lmax},  and the fact that $\bar{b} \le 1$ (see Theorem \ref{th: energy-red}), their contribution  can be bounded by $3l_{\rm max}$. Moreover,  note that  $\lfloor K/2 \rfloor \ge \sum_{l \in 2\Nz+1} \# \mathcal{I}_{\rm sgn}^l/2 -1 $ and $c_{\rm mix} \in (0,1)$.  To conclude, it therefore remains to show 
\begin{align}\label{eq: triangle inequality3}
\sum_{ k \, {\rm odd}  }  \max_{t \in \lbrace -1,1\rbrace} \big( (l_1^k + t)\Lambda(l_1^k + t) +(l_2^k - t)\Lambda(l_2^k - t) \big)\le \mathcal{L}\big( (\#  \mathcal{I}^l_{\rm sgn})_{l \in 2\Nz +1 }\big).
\end{align}
For each $k$ choose $t_k \in \lbrace -1,1 \rbrace$ such that the maximum is attained. If $K$ is even, we set $r_l = \# \lbrace k \, | \  l_1^k + t_k = l \rbrace + \# \lbrace k \, | \  l_2^k - t_k = l \rbrace$ for $l \in 2\Nz$. If $K$ is odd, we replace   $r_t$ by $r_t + 1$, where $t = i_{j_K+1} - i_{j_K} -1 \in 2\Nz$.  We then find $\sum_{l\in 2\Nz} r_l = K =\sum_{l \in 2\Nz+1} \# \mathcal{I}_{\rm sgn}^l$ and $\sum_{l\in 2\Nz} lr_l   = 2 \lfloor \sum_{l \in 2\Nz +1}  l \#    \mathcal{I}^l_{\rm sgn}/2  \rfloor $.    Then \eqref{eq: triangle inequality3} follows from \eqref{eq: odd-representation}.

Finally, we briefly indicate the necessary changes if  $K - \# \mathcal{K} \ge  \lceil K/2 \rceil$ (see \eqref{eq: choose one case}). In this case, we consider the chain $\hat{y} = (y_n,\ldots,y_1)$ in reverse order and note that the  index set introduced in \eqref{eq: sgndef} corresponding to $\hat{y}$ is given by $\mathcal{I}_{\rm sgn}' \cup \lbrace n \rbrace$ (as defined in \eqref{eq: sgndef2} for the configuration $y$). The above reasoning is then applied on the pairs of indices  $i_{j_{k+1}} + 1$ and  $i_{j_k}+1$ for $k \in \lbrace 1,3,\ldots, 2\lfloor K/2\rfloor-1 \rbrace$, where we note that  $\# \lbrace k \, | \ i_{j_{k+1}+1} -i_{j_{k+1}} \ge i_{j_k+1} - i_{j_k}\rbrace \ge \lceil K/2 \rceil  \ge  \lfloor K/2 \rfloor$. 
\end{proof}

We now   investigate the length of general configurations in $\mathcal{A}^\delta_n$. Recall the notation $(a)_+^2 = (\max\lbrace a,0\rbrace)^2$ for $a \in \Rz$.
  
\begin{lemma}[Energy excess controls length excess]\label{lemma: multi-reduced}
There exist $\delta_0>0$, $c_{\rm odd}>0$, and $c_{\rm el}>0$ only depending on  $\rho$ such that for all $0 \le \delta \le \delta_0$ and each $y \in \mathcal{A}^\delta_n$ we have
$$E^{\rm red}_n(y)  - (n-2)e_{\rm cell}\ge  \frac{c_{\rm el} }{n} \Big(|y_n-y_1| - \sum_{l \in 2\Nz} \# \mathcal{I}_{\rm sgn}^l l\Lambda(l) -\mathcal{L}\big( (\#  \mathcal{I}^l_{\rm sgn})_{l \in 2\Nz +1 }\big) + n_{\rm odd}c_{\rm odd}   -  4  l_{\rm max} \Big)^2_+,$$
where   $\mathcal{I}_{\rm sgn}^l$ as in  \eqref{eq: sgndef-l} and   $n_{\rm odd} = \sum_{l \in 2\Nz+1}  \# \mathcal{I}_{\rm sgn}^l l  $. 
\end{lemma}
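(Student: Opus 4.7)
My plan is to reduce the statement to Lemma \ref{lemma: chain-minimal} (which handles exact minimizers) via an associated reference chain $\bar y$, and to control the length deviation $|y_n-y_1|-|\bar y_n-\bar y_1|$ subchain by subchain using Lemmas \ref{lemma: Lambda-stretching} and \ref{lemma: mixture-stretching}. A Cauchy--Schwarz step then converts the accumulated per-subchain square-root bounds into the claimed $1/n$ factor.

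\emph{Reference chain.} Decompose $y$ along $\mathcal{I}_{\rm sgn}$: each subchain between two consecutive sign-changes has an atomic period $l$ and lies in $\mathcal{U}^l_\delta$ by the definition of $\mathcal{A}^\delta_n$. Build $\bar y \in \mathcal{A}_n$ by replacing each even-period subchain by the length-maximizing element of $\mathcal{U}^l$, and by pairing the odd-period subchains exactly as in the proof of Lemma \ref{lemma: chain-minimal}, so that the odd contributions realize the maximum defining $\mathcal{L}$ in \eqref{eq: odd-representation}. Since $\bar y$ is then an exact minimizer of $E^{\rm red}_n$, Lemma \ref{lemma: chain-minimal} gives
$$|\bar y_n - \bar y_1| \le \sum_{l \in 2\Nz}\#\mathcal{I}^l_{\rm sgn}\, l\Lambda(l) + \mathcal{L}\big((\#\mathcal{I}^l_{\rm sgn})_{l \in 2\Nz+1}\big) - c_{\rm odd}\, n_{\rm odd} + 4l_{\rm max},$$
where the bound $l \le l_{\rm max}$ yields $\sum_{l \in 2\Nz+1}\#\mathcal{I}^l_{\rm sgn} \ge n_{\rm odd}/l_{\rm max}$ and one may set $c_{\rm odd}:= c_{\rm mix}/(2l_{\rm max})$.

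\emph{Per-subchain length--energy estimates.} For each even-period subchain of period $l_j$, Lemma \ref{lemma: Lambda-stretching} immediately yields
$$|y_{\mathrm{end}}-y_{\mathrm{start}}|-l_j\Lambda(l_j) \le C\sqrt{E^{\rm red}_{\mathrm{subchain}_j}(y)-(l_j-1)e_{\rm cell}}.$$
For each pair of consecutive odd-period subchains, Lemma \ref{lemma: mixture-stretching} gives a bound involving the junction-angle deviation and an additive $4 l_{\rm max}^2\delta$ term. The key refinement is that the proof of Lemma \ref{lemma: mixture-stretching} never truly uses the worst-case bound $\delta$: the relevant quantity is $\sum|b_i-\bar b|+\sum|\varphi_i-\bar\varphi_i|$ along the pair, which by Cauchy--Schwarz and the coercivity estimate \eqref{eq: later purpose-convexity} is itself bounded by $C\sqrt{l_{\rm max}}\sqrt{\text{local excess}}$. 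The junction-angle deviation is likewise controlled by $\sqrt{\text{local excess}}$ via \eqref{eq: later purpose-convexity}. Hence each subchain group contributes length excess $\le C\sqrt{l_j\,(E^{\rm red}_{\mathrm{subchain}_j}(y)-(l_j-1)e_{\rm cell})}$.

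\emph{Summation and the main obstacle.} Combining the triangle inequality $|y_n-y_1|\le |\bar y_n-\bar y_1|+\sum_j(\text{length excess})_j$ with the per-subchain bounds and Cauchy--Schwarz,
$$|y_n-y_1|-|\bar y_n-\bar y_1|\le C\sum_j\sqrt{l_j\,\text{excess}_j}\le C\sqrt{\textstyle\sum_j l_j}\,\sqrt{\textstyle\sum_j\text{excess}_j}\le C\sqrt{n\bigl(E^{\rm red}_n(y)-(n-2)e_{\rm cell}\bigr)}.$$
Squaring, dividing by $C^2 n$, and substituting the bound on $|\bar y_n-\bar y_1|$ from Step 1 yields the lemma with $c_{\rm el}=1/C^2$. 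The main obstacle is the odd-period step: a naive application of Lemma \ref{lemma: mixture-stretching} would produce a cumulative error of order $n\delta$, which dominates the target $\sqrt{n\cdot\text{excess}}$ scaling. Replacing the a priori $\delta$-term in that lemma by the actual deviation, controlled through \eqref{eq: later purpose-convexity}, is what makes the Cauchy--Schwarz closure produce the sharp $1/n$ factor. Careful bookkeeping is also needed to ensure that the pairing of odd subchains inside $\bar y$ matches the extremal configuration defining $\mathcal{L}$, so that Step 1's bound reproduces exactly the expression inside the squared bracket.
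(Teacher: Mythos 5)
Your overall strategy---compare $y$ to a reference minimizer, control each subchain by Lemmas \ref{lemma: Lambda-stretching} and \ref{lemma: mixture-stretching}, and close with Cauchy--Schwarz to produce the $1/n$ factor---is exactly the paper's route. However, two steps as written do not go through. First, the displayed inequality $|y_n-y_1|-|\bar y_n-\bar y_1|\le C\sum_j\sqrt{l_j\,\mathrm{excess}_j}$ is not a consequence of the triangle inequality: the correct chain is $|y_n-y_1|\le\sum_j|y_{e_j}-y_{s_j}|\le\sum_j|\bar y_{e_j}-\bar y_{s_j}|+\sum_j\epsilon_j$, and $\sum_j|\bar y_{e_j}-\bar y_{s_j}|$ is in general strictly larger than $|\bar y_n-\bar y_1|$, so you cannot invoke the \emph{statement} of Lemma \ref{lemma: chain-minimal} as a black box on $\bar y$. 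What you need is that the \emph{proof} of Lemma \ref{lemma: chain-minimal} bounds precisely the sum of reference subchain lengths (plus the odd-pair corrections) by the claimed right-hand side; the paper therefore re-runs that proof with the perturbation terms carried along, rather than citing its conclusion. Second, the reference chain cannot be built from the \emph{length-maximizing} elements of $\mathcal{U}^l$: Lemmas \ref{lemma: Lambda-stretching} and \ref{lemma: mixture-stretching} require the reference $\bar y$ to be the one associated to $y$ via \eqref{eq: uldelta}, i.e.\ with the same angle-sign pattern ($i_0$ matching that of $y$), which need not be the maximizer of $\lambda^l$. The comparison with $l\Lambda(l)$ is then a separate step (the second inequality in Lemma \ref{lemma: Lambda-stretching}, via Lemma \ref{lemma: Lambda}); likewise $\mathcal{L}$ enters only as an \emph{upper bound} on the odd contributions (inequality \eqref{eq: triangle inequality3}), not as something the reference chain "realizes."

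On the point you single out as the main obstacle: the cumulative $\delta$-error from Lemma \ref{lemma: mixture-stretching} is of order $l_{\rm max}^2K\delta$ with $K=\sum_{l\in2\Nz+1}\#\mathcal{I}^l_{\rm sgn}$ the number of odd waves, \emph{not} of order $n\delta$, because that lemma is invoked only once per odd pair. Since the same pairing produces the gain $-\lfloor K/2\rfloor c_{\rm mix}$ from Lemma \ref{lemma: mixture}, the paper simply absorbs the error by choosing $\delta_0$ with $2l_{\rm max}^2\delta_0\le c_{\rm mix}/4$, which is why the lemma ends up with $c_{\rm odd}=c_{\rm mix}/(4l_{\rm max})$ rather than $c_{\rm mix}/(2l_{\rm max})$. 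Your proposed refinement (replacing $\delta$ by the actual deviations and controlling them through \eqref{eq: later purpose-convexity}) would also work---the paper does use exactly that mechanism for the junction-angle terms---but it is not needed for the $\delta$-terms and is not what makes the $1/n$ scaling close. Only the even-period excesses and the junction angles need the energy-controlled Cauchy--Schwarz treatment.
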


\begin{proof}
Let $y \in \mathcal{A}^\delta_n$ be given and define $\mathcal{I}_{\rm sgn}$ and $\mathcal{I}_{\rm sgn}^l$ as in \eqref{eq: sgndef} and \eqref{eq: sgndef-l}, respectively. Choose a configuration  $\bar{y}: \lbrace 1,\ldots, n\rbrace \to \Rz^2$ minimizing the energy $E^{\rm red}_n$ and satisfying  ${\rm sgn}(\varphi_i - \pi) = {\rm sgn}(\bar{\varphi}_{i} - \pi)$ for $i=2,\ldots,n-1$, where ${\rm sgn}$ denotes the sign function and   $\bar{\varphi}_i$ are the angles defined in \eqref{eq: bonds and angles} corresponding to $\bar{y}$. Note that $\bar{y}$ is determined uniquely by $y$ up to a rigid motion.

  We will follow the lines of the previous proof by 
 taking additionally the deviation from energy minimizers into
 account, where we will employ Lemma \ref{lemma: Lambda-stretching}
 and Lemma \ref{lemma: mixture-stretching}.  Similarly to the
 previous proof, we consider the labeling $\mathcal{I}_{\rm sgn}  \cup
 \lbrace n\rbrace = \lbrace i_1,\ldots,i_{J} \rbrace$ as well as
 $\bigcup_{l \in 2\Nz+1} \mathcal{I}_{\rm sgn}^l = \lbrace
 i_{j_1},\ldots,i_{j_K} \rbrace$. For odd $k$ we also define $l_1^k =
 i_{j_k+1} - i_{j_k} $ and $l_2^k =  i_{j_{k+1}+1} - i_{j_{k+1}}
 $.  Moreover, let $\mathcal{K}$  be defined  as in
 \eqref{eq: choose one case}. Without  loss of generality we can
 reduce ourselves to  the case $\# \mathcal{K} \ge  \lfloor K/2
 \rfloor$ since otherwise one may consider the chain in reverse order, as
  commented  at the end of the previous proof. 

We consider    the parts of the chain having odd \UUU discrete-wave \EEE
period. For odd $k$,   we define the configuration $ 
\tilde{y}^{k}:  \lbrace 1,\dots ,  l_1^k + l_2^k +1
\rbrace \to \Rz^2 $  as in \eqref{eq: new2}.    Accordingly,
we define the configuration  $\tilde{\bar{y}}^k$ 
corresponding to $\bar{y}$.  By the triangle inequality
(cf. \eqref{eq: triangle inequality})  we get that 
\begin{align}\label{eq: new3}
|y_{i_{j_{k+1}+1}}- y_{i_{j_k}}| \le |\tilde{y}^{k}_{l_1^k + l_2^k +1} - \tilde{y}^{k}_1| + \sum_{m=1}^{M_k-1} |\hat{y}^{m,k}_{l_{m,k}+1} - \hat{y}^m_1|,
\end{align}
where $\hat{y}^{m,k}$ is defined in \eqref{eq: new1}.  We now estimate
the various  terms in the above right-hand side by starting 
with the  terms including $\hat{y}^{m,k}$.   By Lemma
\ref{lemma: Lambda-stretching}, H\"older's inequality, and the fact
that $  \sum_k (M_k-1) \le  \sum_{l \in 2\Nz}\#
\mathcal{I}_{\rm sgn}^l \le n$  we obtain 
\begin{align}\label{eq: contribution1}
&\sum_{k \, {\rm odd}} \sum_{m=1}^{M_k-1}   (
      |\hat{y}^{m,k}_{l_{m,k}+1} - \hat{y}^m_1| -
      l_{m,k}\Lambda(l_{m,k})) \nonumber\\
& \le \frac{1}{\sqrt{C}} \sum_{k \, {\rm odd}} \sum_{m=1}^{M_k-1}   \big(E^{\rm red}_{l_{m,k}+1}(\hat{y}_{m,k}) - (l_{m,k}-1)e_{\rm cell}\big)^{1/2}  \notag\\
&\le { \frac{\sqrt{n}}{\sqrt{C}} \Big(\sum_{k \, {\rm odd}} \sum_{m=1}^{M_k-1}   \big(E^{\rm red}_{l_{m,k}+1}(\hat{y}_{m,k}) - (l_{m,k}-1)e_{\rm cell}\big) \Big)^{1/2}  } \notag\\
&\le    \frac{ \sqrt{n}}{\sqrt{C}} \big(E^{\rm red}_n(y) -  (n-2)c_{\rm cell}  \big)^{1/2}
\end{align}
with $C>0$ from Lemma \ref{lemma: Lambda-stretching},  where in
the last step we have   used Theorem \ref{th: energy-red}.  

 We now consider  the first term in  the right-hand side of  \eqref{eq: new3}.   The difference of the junction angles $\tilde{{\varphi}}^k_{l_1^k+  2  }$ and $\tilde{\bar{\varphi}}^k_{l_1^k+  2  }$ at $\tilde{y}^k_{l_1^k+  2  }$ and $\tilde{\bar{y}}^k_{l_1^k+ 2  }$, respectively, can be estimated by 
$$|\tilde{{\varphi}}_{l_1^k+ 2  } - \tilde{\bar{\varphi}}_{l_1^k+  2  }| \le  \sum_{i=i_{j_k+1} {  + 1  } }^{i_{j_{k+1}} { + 1  } } |\varphi_i - \bar{\varphi}_i|.$$
 Consequently, applying Lemma \ref{lemma: mixture-stretching} and summing over all $k \in \lbrace 1,3,\ldots, 2\lfloor K/2\rfloor-1 \rbrace$ we derive
\begin{align*}
 \sum_{k \, {\rm  odd} } (|\tilde{y}^k_{l_1^k + l_2^k +1} - \tilde{y}_1^k| - |\tilde{\bar{y}}^k_{l_1^k + l_2^k +1} - \tilde{\bar{y}}_1^k|)  & \le 2l^2_{\rm max} K\delta + 2l_{\rm max} \sum_{i=2}^{n-1} |\varphi_i - \bar{\varphi}_i|.
\end{align*}
 Repeating the arguments in \eqref{eq: triangle
  inequality}-\eqref{eq: triangle inequality3}, in particular  using Lemma \ref{lemma: mixture} for $|\tilde{\bar{y}}^k_{l_1^k + l_2^k +1} - \tilde{\bar{y}}_1^k|$, we find
\begin{align*}
\sum_{k \, {\rm odd}} |\tilde{y}^k_{l_1^k + l_2^k +1} - \tilde{y}_1^k| \le  \mathcal{L}\big( (\#  \mathcal{I}^l_{\rm sgn})_{l \in 2\Nz +1 }\big) - \lfloor K/2 \rfloor c_{\rm mix} + 2l^2_{\rm max} K\delta + 2l_{\rm max} \sum_{i=2}^{n-1} |\varphi_i - \bar{\varphi}_i|.
\end{align*}
For brevity we set $E = E^{\rm red}_n(y)  - (n-2)e_{\rm cell}$. Combining the previous estimate with \eqref{eq: new3} and \eqref{eq: contribution1}, and using again H\"older's inequality together with \eqref{eq: later purpose-convexity}, we get  
\begin{align*}
\sum_{k \, {\rm  odd} } |y_{i_{j_{k+1}+1}}- y_{i_{j_k}}| &\le  \mathcal{L}\big( (\#  \mathcal{I}^l_{\rm sgn})_{l \in 2\Nz +1 }\big) +  \sum_{k \, {\rm  odd} } \sum_{m=1}^{M_k-1}  l_{m,k}\Lambda(l_{m,k})  - \lfloor K/2 \rfloor c_{\rm mix} \notag \\
& \ \ \  +  2l^2_{\rm max} K\delta +  \Big( \frac{1}{\sqrt{C}} +  \frac{4 l_{\rm max}}{\sqrt{c_{\rm conv}}} \Big)\sqrt{n} \sqrt{E}.
\end{align*}
For the remaining parts with even period we repeat the argument in
\eqref{eq: contribution1}.  All in all we get 
\begin{align*}
|y_n-y_1| &\le  \sum_{l \in 2\Nz} \# \mathcal{I}_{\rm sgn}^l l\Lambda(l) +  \mathcal{L}\big( (\#  \mathcal{I}^l_{\rm sgn})_{l \in 2\Nz +1 }\big)     - \lfloor K/2 \rfloor c_{\rm mix} +  2l^2_{\rm max} K\delta \\ & \ \ \ +  \Big( \frac{2}{\sqrt{C}} +  \frac{4 l_{\rm max}}{\sqrt{c_{\rm conv}}} \Big)\sqrt{n} \sqrt{E} + |y_{i_2}-y_1| + |y_n - y_{i_{J-1}}| + |y_{i_{j_K+1}} - y_{i_{j_K}} |,
\end{align*}
where the last three terms appear since Lemma \ref{lemma: Lambda-stretching} is
possibly not applicable on these parts of the chain. (The very last term is only necessary for odd
$K$.) Similarly to  the proof of Lemma \ref{lemma: chain-minimal}, by Lemma \ref{lemma: lmax2}   we can show that $
 |y_{i_2}-y_1| + |y_n - y_{i_{J-1}}| + |y_{i_{j_K+1}} - y_{i_{j_K}} | \le 3(\bar{b} + \delta)l_{\rm max} \le 3l_{\rm max}
$  for $\delta_0$ sufficiently small.  Therefore, using also    $\lfloor K/2 \rfloor \ge \sum_{l \in 2\Nz+1} \# \mathcal{I}_{\rm sgn}^l/2 -1 $ and $c_{\rm mix} \in (0,1)$ we get  
\begin{align*}
|y_n-y_1| &\le  \sum_{l \in 2\Nz} \# \mathcal{I}_{\rm sgn}^l l\Lambda(l) +  \mathcal{L}\big( (\#  \mathcal{I}^l_{\rm sgn})_{l \in 2\Nz +1 }\big) - \frac{c_{\rm mix}}{2} \sum_{l \in 2\Nz+1} \# \mathcal{I}_{\rm sgn}^l + 4 l_{\rm max} +  2l^2_{\rm max} K\delta  \\
& \ \ \    +   \frac{\sqrt{n}}{\sqrt{c_{\rm el}}} \big(E^{\rm red}_n(y) - (n-2)c_{\rm cell}\big)^{1/2},  
\end{align*}
where $c_{\rm el} := ( 2 / \sqrt{C}  +    4  l_{\rm max}/\sqrt{c_{\rm conv}})^{-2}$.  Now choose $\delta_0$ so small that $2l^2_{\rm max} \delta \le c_{\rm mix}/4$ and set $c_{\rm odd} = c_{\rm mix}/(4l_{\rm max})$. This implies   $(c_{\rm mix}/2 -  2l^2_{\rm max} \delta
) K\ge  c_{\rm odd} l_{\rm max} K \ge c_{\rm odd} n_{\rm odd}$, where
the last step follows from $n_{\rm odd}/l_{\rm max} \le \sum_{l \in
  2\Nz+1}  \# \mathcal{I}_{\rm sgn}^l = K $.   From this,  the statement of the lemma follows.  
\end{proof}

  \subsection{The multiple-period problem for the general energy}

Let $y \in \mathcal{A}_n$ and observe that the general energy \eqref{eq: general energy} including the longer-range interaction can be written as 
\begin{align}\label{eq: Egeneral}
 E_n (y) =  \sum_{i=1}^{n-3}  E^{\rm gen}_{\rm cell} (y_i, y_{i+1},y_{i+2},y_{i+3}) +\frac{1}{2}\big(E_{\rm cell}(y_{1},y_{2},y_{3})  +E_{\rm cell}(y_{n-2},y_{n-1},y_{n})\big),
\end{align}
where $ E^{\rm gen}_{\rm cell}$ denotes the \emph{generalized cell energy} defined by
\begin{align*}
E^{\rm gen}_{\rm cell} (y_i, y_{i+1},y_{i+2},y_{i+3}) =   \frac{1}{2}\big( E_{\rm cell}(y_{i}, y_{i+1},y_{i+2}) & + E_{\rm cell}(y_{i+1},y_{i+2},y_{i+3}) \big)  +      \bar{\rho}v_2(|y_{i+3} - y_i|). 
\end{align*}
Let $y \in \mathcal{A}_n$ be a minimizer of $E^{\rm red}_n$.  Choose $i \in \lbrace 1,\ldots,n-3\rbrace $ with ${\rm sgn}(\varphi_{i+1}-\pi) =  {\rm sgn}(\varphi_{i+2}-\pi)$, where ${\rm sgn}$ denotes the sign function, and define
\begin{align}\label{eq: cellrange}
e^{\rm gen}_{\rm cell}  :=  E^{\rm gen}_{\rm cell} (y_i, y_{i+1},y_{i+2},y_{i+3}).
\end{align}
Clearly, the value is independent of the particular choice of the configuration $y$ and the index $i$. Recalling Theorem \ref{th: energy-red}, we also see $|e^{\rm gen}_{\rm cell}  - e_{\rm cell}| \le c\bar{\rho}$ for some   $c>0$. Now choose  $i \in \lbrace 1,\ldots,n-3\rbrace $ with ${\rm sgn}(\varphi_{i+1}-\pi) \neq {\rm sgn}(\varphi_{i+2}-\pi)$ and define
\begin{align}\label{eq: cellrange2}
e_{\rm per} :=  \big( E^{\rm gen}_{\rm cell} (y_i, y_{i+1},y_{i+2},y_{i+3}) - e^{\rm gen}_{\rm cell}\big)/\bar{\rho}.
\end{align}
As before, the value is independent of $y$ and the choice of $i$. We find $e_{\rm per}>0$, which follows from the geometry of the four points $y_i,y_{i+1},y_{i+2},y_{i+3}$ determined by the condition ${\rm sgn}(\varphi_{i+1}-\pi) = {\rm sgn}(\varphi_{i+2}-\pi)$ and ${\rm sgn}(\varphi_{i+1}-\pi) \neq {\rm sgn}(\varphi_{i+2}-\pi)$, respectively, and the fact that $v_2$ is strictly increasing right of $1$.  (We refer to Figure \ref{figure7} for an illustration.)   

 \begin{figure}[h]
  \centering
  \pgfdeclareimage[width=140mm]{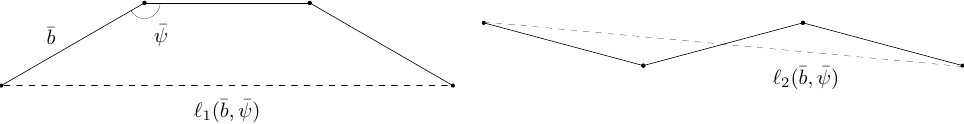}{figure7} 
\pgfuseimage{figure7}
\caption{Two different geometries of four points with $l_1(\bar{b},\bar{\psi}) < l_2(\bar{b},\bar{\psi})$.}\label{figure7}
\end{figure}

Recall the definition of  $\mathcal{I}_{\rm sgn}$   in \eqref{eq: sgndef}. The general energy \eqref{eq: Egeneral} for a configuration $y \in \mathcal{A}_n$ with $E^{\rm red}_n(y) = (n-2)e_{\rm cell}$  can now be estimated by
\begin{align}\label{eq: energy for minimizers}
E_n(y) \ge (n-3) e^{\rm gen}_{\rm cell}  + e_{\rm cell} + (2 \# \mathcal{I}_{\rm sgn}-  3  ) \bar{\rho} e_{\rm per},
\end{align}
where we used that $\# \lbrace i  = 1,\ldots,n-3  \,| \ {\rm sgn}(\varphi_{i+1}-\pi) \neq {\rm sgn}(\varphi_{i+2}-\pi) \rbrace \ge 2 \# \mathcal{I}_{\rm sgn}-  3 $.

  We now formulate the main result of this section about the relation  between  the reduced and the general energy.

\begin{lemma}[Relation of reduced and general energy]\label{lemma: multi-general}
There exist $\delta_0>0$ and  $c_{\rm range}>0$    only depending on $\rho$ such that for all $0 \le \delta \le \delta_0$ and each $y \in \mathcal{A}^\delta_n$ we get
\begin{align*}
\frac{1}{2}\big(E_n^{\rm red}(y) - (n-2)e_{\rm cell}\big) \le  E_n(y) - \big(  (n-3)e^{\rm gen}_{\rm cell}  + e_{\rm cell} + (2 \# \mathcal{I}_{\rm sgn}- 3  ) \bar{\rho} e_{\rm per} \big) + nc_{\rm range} \bar{\rho}^2. 
\end{align*}
\end{lemma}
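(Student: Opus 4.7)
The plan is to compare $E_n(y)$ term-by-term with the reduced energy, tracking the longer-range contribution $\bar{\rho}\sum_{i=1}^{n-3} v_2(|y_{i+3}-y_i|)$ via a Taylor expansion of $v_2$ around the two possible reference distances. First I would note that, by \eqref{eq: general energy}, one has the split
\begin{equation*}
E_n(y) = E^{\rm red}_n(y) + \bar{\rho} \sum_{i=1}^{n-3} v_2(|y_{i+3}-y_i|),
\end{equation*}
so that the statement to prove reduces to a lower bound of the longer-range sum. For each $i \in \{1,\dots,n-3\}$, let $\bar{\varphi}_{i+1}, \bar{\varphi}_{i+2} \in \{\pi \pm \bar{\psi}\}$ denote the reference angles associated to $\varphi_{i+1}, \varphi_{i+2}$ (those closest in the sense of \eqref{eq: uldelta}), and let $d_i$ be the distance between the outermost points of the four-atom configuration with bonds $\bar{b}$ and angles $\bar{\varphi}_{i+1}, \bar{\varphi}_{i+2}$. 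By the definitions \eqref{eq: cellrange}--\eqref{eq: cellrange2} together with Theorem \ref{th: energy-red} we have $v_2(d_i) = (e^{\rm gen}_{\rm cell}-e_{\rm cell})/\bar{\rho}$ when ${\rm sgn}(\varphi_{i+1}-\pi) = {\rm sgn}(\varphi_{i+2}-\pi)$ and $v_2(d_i) = (e^{\rm gen}_{\rm cell}-e_{\rm cell})/\bar{\rho}+e_{\rm per}$ otherwise.

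The next step is to observe that since each transition counted by $\mathcal{I}_{\rm sgn}$ in \eqref{eq: sgndef} must be followed (eventually) by a transition of opposite type, a counting argument gives
\begin{equation*}
\#\bigl\{i : {\rm sgn}(\varphi_{i+1}-\pi)\neq {\rm sgn}(\varphi_{i+2}-\pi)\bigr\} \ge 2\#\mathcal{I}_{\rm sgn}-3,
\end{equation*}
up to boundary corrections. Using $e_{\rm per}>0$, this yields the baseline inequality
\begin{equation*}
\sum_{i=1}^{n-3} v_2(d_i) \ge (n-3)(e^{\rm gen}_{\rm cell}-e_{\rm cell})/\bar{\rho} + (2\#\mathcal{I}_{\rm sgn}-3)\,e_{\rm per}.
\end{equation*}
It then remains to control the deviation $\Delta_i := v_2(|y_{i+3}-y_i|)-v_2(d_i)$. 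Since $|y_{i+3}-y_i|$ is a smooth function of $(b_i,b_{i+1},b_{i+2},\varphi_{i+1},\varphi_{i+2})$ and equals $d_i$ at the reference values $(\bar{b},\bar{b},\bar{b},\bar{\varphi}_{i+1},\bar{\varphi}_{i+2})$, a first-order Taylor expansion together with smoothness of $v_2$ yields, for $\delta$ small,
\begin{equation*}
\Delta_i \ge -C_1\Bigl(\sum_{k=i}^{i+2}|b_k-\bar{b}| + \sum_{k=i+1}^{i+2}|\varphi_k-\bar{\varphi}_k|\Bigr) - C_2\,\eta_i^2,
\end{equation*}
where $\eta_i := \sum_k|b_k-\bar{b}|+\sum_k|\varphi_k-\bar{\varphi}_k|$.

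Setting $E := E^{\rm red}_n(y)-(n-2)e_{\rm cell}$ and applying the convexity estimate \eqref{eq: later purpose-convexity} gives $\sum_i \eta_i^2 \le C_3\, E$ (each bond and angle appears in at most three consecutive cells). For the linear contribution I would use Cauchy--Schwarz and the same convexity estimate to bound
\begin{equation*}
\bar{\rho}\sum_{i=1}^{n-3}\Bigl(\sum_k|b_k-\bar{b}|+\sum_k|\varphi_k-\bar{\varphi}_k|\Bigr) \le C_4\,\bar{\rho}\sqrt{n}\sqrt{E},
\end{equation*}
and then apply Young's inequality $C_4\bar{\rho}\sqrt{n}\sqrt{E}\le \tfrac{1}{2}E + \tfrac{1}{2}C_4^2 n\bar{\rho}^2$. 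The quadratic remainder $C_2\bar{\rho}\sum\eta_i^2 \le C_2C_3\bar{\rho}E$ is of order $\bar{\rho}E$ and, for $\bar{\rho}$ sufficiently small, is absorbed into a constant times $E$. Collecting all estimates and choosing $c_{\rm range}$ accordingly gives the desired inequality.

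The main technical obstacle is ensuring that the linear-in-$\Delta_i$ term (which carries the sign of $v_2'(d_i)$ and hence is not immediately controllable) is absorbed properly: it is crucial that the Cauchy--Schwarz bound produces a $\sqrt{nE}$ factor, so that Young's inequality gives both an $E$-term absorbable into $\tfrac{1}{2}E$ on the left-hand side and an $n\bar{\rho}^2$-term matching the asserted error. The boundary-index bookkeeping in the counting argument $(2\#\mathcal{I}_{\rm sgn}-3)$ must also be handled with care to recover the exact constant $-3$ in the statement.
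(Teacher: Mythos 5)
Your proposal is correct and follows essentially the same route as the paper: there, one compares $y$ with a reduced-energy minimizer $\bar y$ having the same sign pattern of $\varphi_i-\pi$ (your local reference distances $d_i$ are exactly the $|\bar y_{i+3}-\bar y_i|$), uses the same counting bound $2\#\mathcal{I}_{\rm sgn}-3$ from \eqref{eq: energy for minimizers}, controls the deviation of the longer-range term linearly in the bond and angle deviations, and absorbs it against the quadratic coercivity \eqref{eq: later purpose-convexity} by completing the square pointwise, which is your Cauchy--Schwarz plus Young step producing the $n\bar{\rho}^2$ error. The only inessential difference is your quadratic remainder $C_2\eta_i^2$: a plain Lipschitz bound on $v_2$ (as used in the paper) makes it unnecessary, and keeping it forces a smallness condition on $\bar{\rho}$ that is not part of the lemma's hypotheses.
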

 Notice that the higher order error term $n c_{\rm range} \bar{\rho}^2$  appears   since  due to the longer-range  interactions,  the energy can  be slightly decreased by small rearrangement of the \UUU particles. \EEE Note that Lemma \ref{lemma: multi-general} together with Lemma \ref{lemma: multi-reduced} allows to control the length excess in terms of the energy excess for the general energy.

\begin{proof}
Let $y \in \mathcal{A}^\delta_n$ be given. Exactly as in the proof of Lemma \ref{lemma: multi-reduced}, we choose a configuration  $\bar{y}: \lbrace 1,\ldots, n\rbrace \to \Rz^2$ minimizing the energy $E^{\rm red}_n$ and satisfying  ${\rm sgn}(\varphi_i - \pi) = {\rm sgn}(\bar{\varphi}_{i} - \pi)$ for $i=2,\ldots,n-1$, where  $\varphi_i, \bar{\varphi}_i$ are the angles defined in \eqref{eq: bonds and angles} corresponding to $y$ and $\bar{y}$, respectively.  Denote the bonds  introduced in \eqref{eq: bonds and angles} again by $b_i$ and $\bar{b}_i$. Recall that the energy $E_n(\bar{y})$ can be estimated by \eqref{eq: energy for minimizers}. Using \eqref{eq: later purpose-convexity}  and Theorem \ref{th: energy-red} we find
$$E:= E_n^{\rm red}(y) - (n-2)e_{\rm cell} = E^{\rm red}_n(y) - E^{\rm red}_n(\bar{y}) \ge \frac{c_{\rm conv}}{ 8 } \Big(\sum_{i=1}^{n-1} |b_i - \bar{b}_i|^2+ \sum_{i=2}^{n-1} |\varphi_i - \bar{\varphi}_i|^2\Big) + \frac{E}{2}.$$
 For each $i\in \lbrace 1,\ldots,n-3\rbrace$ an elementary computation shows 
$$\big||y_{i+3} - y_i| - |\bar{y}_{i+3}-\bar{y}_i|\big| \le c\big(|b_i - \bar{b}_i|+|b_{i+1} - \bar{b}_{i+1}|+|b_{i+2} - \bar{b}_{i+2}| + |\varphi_{i+1} - \bar{\varphi}_{i+1}|+ |\varphi_{i+2} - \bar{\varphi}_{i+2}| \big)$$
for some  $c>0$. (The argument is very similar to the one in \eqref{eq: projected bond} and we therefore omit the details.)  Consequently, we find a constant $\bar{C}>0$ only depending on $v_2'$ such that
\begin{align*}
E_n(y) - E_n(\bar{y})& \ge E^{\rm red}_n(y) - E^{\rm red}_n(\bar{y}) - \bar{C}\bar{\rho}\Big(\sum_{i=1}^{n-1} |b_i - \bar{b}_i| + \sum_{i=2}^{n-1}|\varphi_i-\bar{\varphi}_i| \Big) \\
& \ge \frac{c_{\rm conv}}{ 8  } \Big(\sum_{i=1}^{n-1} |b_i - \bar{b}_i|^2+ \sum_{i=2}^{n-1} |\varphi_i - \bar{\varphi}_i|^2\Big) + \frac{E}{2}  - \bar{C}\bar{\rho}\Big(\sum_{i=1}^{n-1} |b_i - \bar{b}_i| + \sum_{i=2}^{n-1}|\varphi_i-\bar{\varphi}_i| \Big).
\end{align*}
Minimizing the last expression amounts to choosing each $|b_i - \bar{b}_i|$ and $|\varphi_i-\bar{\varphi}_i|,$ equal to $ 4  \bar{C}\bar{\rho}/c_{\rm conv}$. Thus, we deduce
\begin{align*}
E_n(y) - E_n(\bar{y})& \ge -(2n-3)  2  \bar{C}^2\bar{\rho}^2/c_{\rm conv}  + E/2.
\end{align*}
This together with \eqref{eq: energy for minimizers} yields the claim for $c_{\rm range} =  4\bar{C}^2/c_{\rm conv}  $. 
\end{proof}

\section{Proof of the main result}\label{sec: main proof}

 In this section we give the proofs of our main results Theorems
  \ref{th:1}-\ref{th:4}. We firstly treat the upper bound for the
 minimal energy. Afterwards, we    tackle   the lower bound and
 the characterization of the   almost   minimizers.
 
 Let us first define the function $e_{\rm range}$ being the main
 object of Theorem \ref{th:1}. We introduce the mapping $\Upsilon:
 [2,\infty) \to \infty$ by  defining it on even periods as 
\begin{align}\label{eq: upsilon}
 \Upsilon(l) =  2/l  \ \ \ \  \text{for} \ \ \ \ l \in 2\Nz
 \end{align}
and  making it affine on $[l,l+2], l \in 2\Nz$. (Similarly to the definition of $\Lambda$ in \eqref{eq: Lambda}, the fact that the function is piecewise  affine is crucial for the characterization of minimizers in Theorem \ref{th:4}.)   Let $M \subset (2/3,1)$ be  the closed interval introduced right before \eqref{eq: admissible conf-bdy}.   We now define the function $e_{\rm range}: M \to \Rz$ by
\begin{align}\label{eq: erange}
e_{\rm range}(\mu) = e_{\rm per}\Upsilon(\Lambda^{-1}(\mu))
\end{align}
where   the constant $e_{\rm per}$ comes    from
\eqref{eq: cellrange2}. First, note that $e_{\rm range}$ is well
defined. Indeed, $\Lambda^{-1}$ exists due to the strict monotonicity
of $\Lambda$ and  the image satisfies $\Lambda([2,l_{\rm mid}])
\supset M$ for $\rho$ and thus $\bar{\psi}$ sufficiently small (see
Lemma \ref{lemma: Lambda} and recall  that  $l_{\rm mid} =
\lfloor 6/\bar{\psi} \rfloor$). Clearly, $e_{\rm range}$ is piecewise
affine since $\Lambda$ and $\Upsilon$ are piecewise affine. More
precisely, in view of \eqref{eq: Lambda}, the points where $e_{\rm
  range}$ is not differentiable  are given by $\lbrace \mu=\Lambda(l)|
\ l \in 2 \Nz \cap \Lambda^{-1}(M)\rbrace$.  Recall that 
this set  is  denoted by $M_{\rm res}$, cf.  \eqref{eq: Mres}.
 Moreover, recall  that the values are given explicitly by formula \eqref{eq: lambda-l}, see also Lemma \ref{lemma: Lambda0}.   Finally, the properties  that $e_{\rm range}$ is increasing and convex follow from the facts that $\Upsilon$ is decreasing and convex, and $\Lambda^{-1}$ is decreasing and concave.

 \begin{figure}[h]
  \centering
  \pgfdeclareimage[width=70mm]{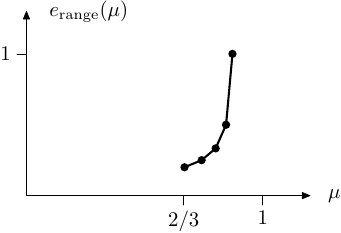}{figure5} 
\pgfuseimage{figure5}
\caption{The function $e_{\rm range}$.}\label{figure5}
\end{figure}

 \subsection{Upper bound for the minimal energy}  Let us now
 address  the upper bound for $E_{\rm min}^{n,\mu}$ in Theorem \ref{th:1}.

\begin{proof}[Proof of Theorem \ref{th:1}, upper bound]
We first suppose that $\mu \in M_{\rm res}$, i.e., we find $l \in 2\Nz \cap [2,l_{\rm mid}]$ with $\mu = \Lambda(l)$. Let $y^{{\rm max}, l} \in \mathcal{U}^l$ be the configuration from Remark \ref{rem: gluing}, see \eqref{eq: res-conf}. Now we consider the configuration $\bar{y} \in \mathcal{A}_n$ defined by
$$\bar{y}_{i} = y^{{\rm max}, l}_{1 + ( i-1) {\rm mod}l} + \lfloor (i-1)/l \rfloor l\Lambda(l) e_1 \ \ \ \text{for} \ 1 \le i \le n.$$
Similarly to Remark \ref{rem: gluing}  and Figure \ref{figure6},  recalling \eqref{eq: general
  property1}-\eqref{eq: general property2}, we  find that all bonds
and angles of $\bar{y}$ (see \eqref{eq: bonds and angles}) satisfy
$b_i = \bar{b}$ and ${\bar\varphi}_i = \pi \pm \bar{\psi}$. Thus,
$E^{\rm red}_{n}(\bar{y}) = (n-2)e_{\rm cell}$. By counting the number
of indices with ${\rm sgn}(\bar{\varphi}_{i} - \pi) \neq  {\rm
  sgn}(\bar{\varphi}_{i+1} - \pi)$, similarly to \eqref{eq: energy for minimizers} we deduce 
\begin{align}\label{eq: upper bound-full energy}
E_{n}(\bar{y}) \le (n-3) e^{\rm gen}_{\rm cell}  + e_{\rm cell} +  2\lceil n/ l \rceil \bar{\rho} e_{\rm per} \le n \big(e^{\rm gen}_{\rm cell} + \bar{\rho} e_{\rm range}(\mu) \big) + c
\end{align}
for a constant $c=c(\rho)>0$, where we used that $2e_{\rm per}/l =
e_{\rm per}\Upsilon(l) = e_{\rm range}(\mu)$,  see \eqref{eq: upsilon}-\eqref{eq: erange}.  Note that $\bar{y}$
possibly does not satisfy the boundary conditions if $n-1$ is not an
integer   multiple   of $l$. However, in view of  the fact
that   $\bar{b} \le 1$, $\mu =\Lambda(l) \le 1$ and $l \le l_{\rm max}$,  we have $||\bar{y}_n - \bar{y}_1|  - (n-1)\mu|\le 2l_{\rm max}$. Let $\eps = ((n-1)\mu - |\bar{y}_n-\bar{y}_1|)/ |\bar{y}_n-\bar{y}_1|$ and note that $y:=(1+ \eps)\bar{y} \in \mathcal{A}_n(\mu)$. It is not restrictive to suppose that $n \ge 8l_{\rm max}$  as otherwise \eqref{aggiu} holds trivially.  In that case, we find    $\eps \in (-4l_{\rm max}/(n\mu),4l_{\rm max}/(n\mu))$ after a short computation. Moreover, recalling the definition of the energy in \eqref{eq: general energy} and the fact that $\tilde{E}_{\rm cell}$ grows quadratically around $(\bar{b},\bar{b}, \pi  \pm \bar{\psi})$, we obtain
$$E_n(y) \le E_n(\bar{y}) + c n (\eps^2 +  \bar{\rho}\eps)$$
for $c=c(\rho)>0$. Thus, recalling the estimate for $\eps$ and \eqref{eq: upper bound-full energy}, the minimal energy $E^{n,\mu}_{\rm min}$ introduced in \eqref{eq: minimization problem} satisfies  $E^{n,\mu}_{\rm min} \le E_n(y)/(n-2) \le e^{\rm gen}_{\rm cell} + \bar{\rho} e_{\rm range}(\mu) + c/n $.  

We now move on to the general case $\mu \in M$. Choose $\mu', \mu'' \in M_{\rm res}$ such that $(\mu',\mu'') \cap M_{\rm res} = \emptyset$ and $\mu = \nu \mu' + (1- \nu)\mu''$ for some $\nu \in [0,1]$.  Moreover, let $l' \in 2\Nz$ such that $\mu' = \Lambda(l')$ and $\mu'' = \Lambda(l'')$, where $l'' = l' + 2 \in [2,l_{\rm mid}]$. For brevity, we set $N = l'\lfloor \nu n /l'\rfloor$ and consider the configuration $\bar{y} \in \mathcal{A}_n$ defined by
\begin{align}\label{eq: two-waves}
\bar{y}_{i} &= y^{{\rm max}, l'}_{1 + ( i-1) {\rm mod}l'} + \lfloor (i-1)/l' \rfloor l'\Lambda(l') e_1,  \ \ \ \ \ \ \ \ \ \ \text{for} \   i \le N,\notag\\
\bar{y}_{i} &= N \Lambda(l') e_1 + y^{{\rm max}, l''}_{1 + ( i- N-1) {\rm mod}l''} + \lfloor (i-N-1)/l'' \rfloor l''\Lambda(l'') e_1  \ \ \ \text{for} \   i  > N
\end{align}
for $y^{{\rm max}, l'}$ and $y^{{\rm max}, l''}$ as introduced in \eqref{eq: res-conf}. 
As before, we obtain $E^{\rm red}_{n}(\bar{y}) = (n-2)e_{\rm cell} + c$ for some $c=c(\rho)>0$. Here,  the extra term is due to the fact that $E_{\rm cell}(\bar{y}_N, \bar{y}_{N+1},\bar{y}_{N+2}) > e_{\rm cell}$ since $\bar{\varphi}_{N+1} \neq \pi \pm \bar{\psi}$. Repeating the argument in \eqref{eq: upper bound-full energy}, we also find 
$$E_{n}(\bar{y}) \le n e^{\rm gen}_{\rm cell}  +  2n (\nu/ l' + (1 - \nu)/l'')\bar{\rho} e_{\rm per}   + c \le n \big(e^{\rm gen}_{\rm cell} + \bar{\rho} e_{\rm range}(\mu) \big) + c,$$
 where we  used $l' = \Lambda^{-1}(\mu')$, $l'' =
 \Lambda^{-1}(\mu'')$,  \eqref{eq: upsilon}, \eqref{eq: erange}, and
 the fact that $e_{\rm range}$ is affine on $[\mu',\mu'']$.
 Likewise, as in the first part of the proof, $\bar{y}$ might not
 satisfy the boundary conditions, but we find some $\eps \in
 (-c/n,c/n)$ such that $y:=(1+ \eps)\bar{y} \in
 \mathcal{A}_n(\mu)$. Again  we can bound  $E^{n,\mu}_{\rm min} \le E_n(y)/(n-2) \le  e^{\rm gen}_{\rm cell} + \bar{\rho}e_{\rm range}(\mu)   + c/n + c\eps^2 + c\bar{\rho} \eps$. This concludes the proof. 
\end{proof}

As announced   right after   Theorem \ref{th:3}, a chain
involving waves with different \UUU discrete-wave \EEE periods (and wavelengths) can be
energetically more favorable, even for $\mu \in M_{\rm
  res}$. Consequently,   almost   minimizers  cannot  be expected to be 
periodic,  but only essentially periodic, i.e., periodic  up to a small set of
points,  see   \eqref{eq: good portion}.  We close this section with an example in that direction and show that the upper bound can be improved in terms of the higher order error $n\bar{\rho}^2$. Recall \eqref{eq: Cdef} and \eqref{eq: almost minimizer}.

\begin{corollary}\label{cor: length}
Consider $\mu = \Lambda^{-1}(l) \in M_{\rm res}$ for $l \in 2\Nz \cap [l_{\rm mid}/2,l_{\rm mid}]$. Then for $\rho$ small enough the following holds:\\
(i) $E_{\rm min}^{n,\mu} \le  e^{\rm gen}_{\rm cell} + \bar{\rho}e_{\rm range}(\mu) + C_2/n - C_1\bar{\rho}^2,$ where $C_1,C_2>0$ only depend on  $\rho$.\\
(ii) For $c=c(\rho)>0$ and $\eps>0$  small enough there exists an almost minimizer $y\in \mathcal{A}_n(\mu)$ with $$\# \lbrace  i_k  \in \mathcal{C}(y)\,| \ \big||y_{i_{k+1}} - y_{i_k}| - \lambda(\mu)\big| > \eps\rbrace/n \ge c\bar{\rho}.$$ 
\end{corollary}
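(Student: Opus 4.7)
The plan is to prove both parts by constructing explicit configurations that differ from the pure $l$-periodic construction used in the upper bound of Theorem~\ref{th:1}.

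For part (i), I will take the pure $l$-wave configuration $\bar y$ (a concatenation of copies of $y^{\mathrm{max},l}$ from \eqref{eq: res-conf}, stretched to meet the boundary condition) and perturb it in a one-parameter family $y_t$, where bond lengths are rescaled by a factor $(1-t)$ and the interior-angle magnitude $\bar\psi$ is simultaneously adjusted so as to preserve $(y_n - y_1)\cdot e_1 = (n-1)\mu$. This is possible since the length of an $l$-wave depends smoothly and nondegenerately on both bond length and angle magnitude. At $\bar y$, every longer-range distance $|\bar y_i - \bar y_{i+3}|$ takes one of the two explicit values corresponding to the two configurations of Figure~\ref{figure7}; both are close to $3\bar b > 1$ for $\rho$ small, where $v_2$ is strictly increasing, so the longer-range energy admits a $t$-derivative of size $-c(\rho)n$ with $c(\rho) > 0$. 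Since $\bar y$ minimizes $E_n^{\mathrm{red}}$, the reduced-energy cost of $y_t$ is quadratic in $t$, with coefficient of order $n$ by Lemma~\ref{lemma: cell energy}(ii). Optimizing $t$ of order $\bar\rho$ yields a net decrease of order $n\bar\rho^2$ compared to the pure construction; after dividing by $n-2$ and combining with the upper bound in the proof of Theorem~\ref{th:1}, this gives $E_{\rm min}^{n,\mu} \le e^{\mathrm{gen}}_{\mathrm{cell}} + \bar\rho e_{\mathrm{range}}(\mu) + C_2/n - C_1\bar\rho^2$.

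For part (ii), I will modify the angle sign sequence of the pure $l$-wave. Replacing $k/2$ consecutive pairs of $l$-wave blocks (each encoded as two half-waves of $l/2$ equal-sign angles) by pairs consisting of one $(l-2)$-wave block and one $(l+2)$-wave block (both atomic periods lie in $2\Nz \cap [2, l_{\mathrm{mid}}]$ by the assumption on $l$) preserves the total number of angles and sign changes. Taking $k := \lfloor c_0 n\bar\rho\rfloor$ for $c_0 = c_0(\rho)$ small, all bonds stay at $\bar b$ and all interior angles at $\pi \pm \bar\psi$, so $E_n^{\mathrm{red}}(y) = (n-2)e_{\mathrm{cell}}$ exactly. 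By strict concavity of $\Lambda$ (Remark~\ref{rem: concavity}) and Lemma~\ref{lemma: Lambda0}, the total length decreases by $\sim k\rho$, which is restored by a uniform stretching of amplitude $\eps \sim \bar\rho\rho$ at a cost $O(n\eps^2) + O(n\bar\rho\eps)$ of order $\bar\rho^2\rho$ per atom, well within the almost-minimizer tolerance $c(\bar\rho^2 + 1/n)$ of \eqref{eq: almost minimizer}. Each of the $k$ non-$l$ waves contributes a point in $\mathcal{C}(y)$ whose wavelength $(l\pm 2)\Lambda(l\pm 2)$ differs from $\lambda(\mu) = l\Lambda(l)$ by a quantity bounded below by a positive constant uniformly on the admissible range $l \in [l_{\mathrm{mid}}/2, l_{\mathrm{mid}}]$ (as follows from computing $(l\pm 2)\Lambda(l\pm 2) - l\Lambda(l) \approx \pm 2\bar b\cos(\bar\psi l/4)$ via Lemma~\ref{lemma: Lambda0}); choosing $\eps$ in the statement smaller than this gap places all $k$ such points in $\mathcal{C}(y) \setminus \mathcal{K}$, yielding $\#(\mathcal{C}(y) \setminus \mathcal{K})/n \ge c\bar\rho$.

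The hard part will be the perturbation analysis in (i): one needs to verify that the gradient of the longer-range energy at $\bar y$ projects nontrivially onto the tangent space of the boundary-constraint hyperplane, with a projection of size of order $n$. This reduces to a scalar computation involving the derivatives of the two longer-range distances of Figure~\ref{figure7} with respect to uniform bond-length and angle-magnitude scalings; nondegeneracy would fail only if those derivatives aligned so as to cancel against the boundary-constraint derivative, a coincidence excluded by the explicit geometry at $\bar y$ and the strict inequality $\bar b < 1$.
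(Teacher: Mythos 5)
Your argument for part (i) takes a genuinely different route from the paper's, and it appears to be sound. The paper never perturbs the bond lengths or the angle magnitude of the pure $l$-wave: its competitor is the two-group mixture \eqref{eq: two-waves} with $l'=l-2$, $l''=l$ and a fraction $\nu\sim\bar{\rho}$ of $(l-2)$-blocks, which makes the unstretched chain \emph{too long} (since $\Lambda(l-2)>\Lambda(l)$) and is then brought back to length $(n-1)\mu$ by a global \emph{compression} $\eps\approx -d\nu<0$; the order-$n\bar{\rho}^2$ gain comes from the linear response $c_1 n\bar{\rho}\eps$ of the longer-range term to that compression, traded against the increased wavenumber penalty $8n\bar{\rho}e_{\rm per}\nu/(l(l-2))$ (this is exactly where the hypothesis $l\ge l_{\rm mid}/2$ is used) and the quadratic cost $nc_2d^2\nu^2$. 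Your mechanism instead moves along the other flat direction of the length constraint (shrink $b$, flatten $\bar{\psi}$) inside the pure wave; both ultimately exploit $v_2'(3\bar{b})>0$, and yours is precisely the degree of freedom that Lemma \ref{lemma: multi-general} identifies as the source of the $nc_{\rm range}\bar{\rho}^2$ slack in the lower bound, so there is no conflict. Your deferred nondegeneracy check does go through: the constraint forces $d\psi\sim db/l$, whose effect on $|y_i-y_{i+3}|=b\,(3-4\sin^2(\psi/2))$ is only $O(\bar{\psi}^2\,db)$ and cannot cancel the leading contribution $3\,db$. Your route even leaves the wavenumber untouched and so does not need $l\ge l_{\rm mid}/2$.

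Part (ii) has two genuine gaps. First, the claim that after the block swap ``all interior angles stay at $\pi\pm\bar{\psi}$, so $E^{\rm red}_n(y)=(n-2)e_{\rm cell}$ exactly'' is false: at a junction where the atomic period jumps by $2$, the admissible terminal bond directions $\phi\in\{(l/4-1)\bar{\psi},\,l\bar{\psi}/4\}$ from Remark \ref{rem: angle} only produce junction angles $\pi\pm\bar{\psi}/2$ or $\pi\pm 3\bar{\psi}/2$, each costing an extra $\sim c_{\rm conv}\bar{\psi}^2$ in cell energy --- the paper states this explicitly for its own construction \eqref{eq: two-waves}. If your $k\sim n\bar{\rho}$ modified pairs are scattered, this costs $O(n\bar{\rho}\,\bar{\psi}^2)$, which overwhelms the budget $c(\rho)n\bar{\rho}^2$ because $\bar{\rho}/\rho$ is small; you must keep the modified blocks contiguous so only $O(1)$ such junctions occur. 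Second, and more seriously, your competitor sits \emph{above} the nominal value $e^{\rm gen}_{\rm cell}+\bar{\rho}e_{\rm range}(\mu)$: the symmetric $(l-2)/(l+2)$ swap \emph{shortens} the chain by concavity, so the corrective dilation has $\eps>0$ and increases the longer-range energy. By your own part (i), $E^{n,\mu}_{\rm min}$ lies at least $C_1\bar{\rho}^2-C_2/n$ per atom \emph{below} the nominal value, so your configuration exceeds the minimum by at least that much plus the swap cost, and \eqref{eq: almost minimizer} demands an excess of at most $c(\bar{\rho}^2+1/n)$ for the \emph{specific} constant $c$ of Theorem \ref{th:1} --- a comparison of constants you never make and which is not automatic. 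The paper sidesteps both problems by using for (ii) the very same compressed, asymmetric two-group configuration that proves (i): its energy is strictly below the nominal value, so almost-minimality is immediate, and its $\sim\nu n/(l-2)\sim c\bar{\rho}n$ blocks of period $l-2$ supply the deviating wavelengths. Your (ii) can be repaired the same way, or by superimposing your perturbation from (i) onto a contiguous version of the swapped chain.
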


\begin{proof}
We define the configuration $\bar{y}$ as in \eqref{eq: two-waves} with $l'' = l$, $l' = l-2$, and $\nu$ to be specified below. It then turns out that
$$
|\bar{y}_n- \bar{y}_1| - (n-1)\Lambda(l) \ge - c + (n-1) \nu (\Lambda(l')-  \Lambda(l'')),
$$
where $c=c(\rho)>0$ again accounts for boundary effects. For brevity, we write $d = \Lambda(l')-  \Lambda(l'')$.  Let $\eps  = (n-1)\Lambda(l)    |\bar{y}_n- \bar{y}_1|^{-1}-1$. For $n$ large enough we find  $1 \le (n-1)/|\bar{y}_n- \bar{y}_1| \le 2$  by $\Lambda([2,l_{\rm mid}]) \subset (2/3,1)$, see Lemma \ref{lemma: Lambda}. Thus, we observe that $\eps \le 2c/ (n-1) - d\nu$. We define $y = (1+\eps) \bar{y} \in \mathcal{A}_n(\mu)$. Repeating the arguments of the previous proof, we obtain 
\begin{align}\label{eq: e-per-inequality}
E_n(y) &\le n e^{\rm gen}_{\rm cell}  +  2n (\nu/ l' + (1 - \nu)/l'')\bar{\rho} e_{\rm per}   + c_1 + nc_1\eps^2 + nc_1\bar{\rho} \eps \notag\\
& =  n \big(e^{\rm gen}_{\rm cell} + \bar{\rho} e_{\rm range}(\mu) \big) + 2n \bar{\rho} e_{\rm per} \nu \big(1/l' - 1/l'' \big)  + c_1 + nc_1\eps^2 + nc_1\bar{\rho} \eps
\end{align}
for $c_1=c_1(\rho) \ge 1$.  We further compute
\begin{align}\label{eq: e-per-inequality2}
c_1\eps^2 + c_1\bar{\rho}\eps &\le  c_1(8c^2/ (n-1)^2 + 2d^2\nu^2) + 2cc_1 \bar{\rho}/(n-1) - c_1\bar{\rho} d \nu \notag \\&\le c_2/n +c_2d^2\nu^2 - c_1\bar{\rho} d \nu  
\end{align}
for a larger constant $c_2=c_2(\rho) \ge 1$. By   Lemma \ref{lemma: Lambda0}, \eqref{eq: Lambda},   Lemma \ref{lemma: Lambda}, and $l \in [ l_{\rm mid}/2,    l_{\rm mid}]$ we find $d \ge c_3$ for  a universal  $c_3>0$.  Then in view of $l_{\rm mid} = \lfloor 6/\bar{\psi}  \rfloor$,  $l \ge l_{\rm mid}/2$, \eqref{eq: cellrange2},  and the fact that $e_{\rm per}$ is independent of $\rho$,  we derive
$$
4e_{\rm per}   \big(1/l' - 1/l'' \big) {  = 8e_{\rm per} /( l(l-2))   }\le c_1d,
$$
provided that $\rho$  is small enough (which implies that $l_{\rm mid}$ is large).    From \eqref{eq: e-per-inequality}-\eqref{eq: e-per-inequality2}  we deduce
\begin{align*}
E_n(y)& \le  n \big(e^{\rm gen}_{\rm cell} + \bar{\rho} e_{\rm range}(\mu) \big) + c_1n \bar{\rho} d  \nu/2  + c_1 +c_2 + nc_2d^2 \nu^2 - nc_1\bar{\rho} d \nu    \\
& = n \big(e^{\rm gen}_{\rm cell} + \bar{\rho} e_{\rm range}(\mu) \big) - c_1n \bar{\rho} d  \nu/2  + c_1 +c_2 + nc_2d^2 \nu^2.
\end{align*}
 An optimization of the last expression in terms of $\nu$ leads to the choice $\nu = c_1\bar{\rho}/(4c_2d)$ and division by $n-2$  gives (i). The configuration with $\nu = c_1\bar{\rho}/(4c_2d)$ also satisfies the property given in (ii), provided that $c$ and $\eps$ are chosen sufficiently small.  
\end{proof}

 \subsection{Lower bound and characterization of minimizers}
 
We first derive the lower bound for the minimal energy \eqref{eq: minimization problem}. Afterwards, based on the lower bound estimates, we will provide the characterization of configuration with (almost) minimal energy.

\begin{proof}[Proof of Theorem \ref{th:1}, lower bound]
Let $\mu \in M$ and consider $y \in \mathcal{A}_n(\mu)$. As before, the bonds and angles \eqref{eq: bonds and angles} are denoted by $b_i$ and $\varphi_i$, respectively.  Choose $\mu', \mu'' \in M_{\rm res}$ such that $(\mu',\mu'') \cap M_{\rm res} = \emptyset$ and $\mu = \nu \mu' + (1- \nu)\mu''$ for some $\nu \in [0,1]$. Let $l' = \Lambda^{-1}(\mu')$, $l'' = \Lambda^{-1}(\mu'') = l'+2 \in [2,l_{\rm mid}]$, and set $l_* = \nu l' + (1 -\nu) l ''$. We note that $l_* = \Lambda^{-1}(\mu)$ since $\Lambda$ is affine on $[l',l'']$.

\emph{{Outline of the proof:}} In  Step 1  we identify the set
of \emph{defects} consisting of \UUU particles \EEE where the cell energy deviates
too much from the minimum. We will see that on the complement of
the defect set  the results from Section  \ref{sec: multi-period}
are applicable. In this context, we partition the chain into various
regions associated to even and odd \UUU discrete-wave \EEE periods, where the periods
$l'$ and $l''$ will play a pivotal role. In Step 2  we
estimate the length of the various parts, particularly using the
concavity of the mapping $\Lambda$ (see \eqref{eq: Lambda}). In Step
 3  we provide estimates for the energy of the chain and based
on  Lemma \ref{lemma: multi-reduced}, Lemma \ref{lemma:
  multi-general}, we derive relations between length and
energy. Finally, in Step 4  we show that it is energetically convenient if the chain consists (almost) exclusively of waves with \UUU discrete-wave \EEE period $l'$ and $l''$, from which we can deduce the statement.

\emph{{Step 1: Partition of the chain.}} Choose  $0 < \delta \le \delta_0$ with $\delta_0$ being the minimum of the constants given in Lemma \ref{lemma: Lambda-stretching}, Lemma \ref{lemma: lmax2}, Lemma \ref{lemma: multi-reduced},  and Lemma \ref{lemma: multi-general}. We note  that $\delta_0$ only depends on the choice of $v_2$, $v_3$, and $\rho$, but is independent of $\bar{\rho}$. Below in \eqref{eq: energy bound} and \eqref{eq: Gestimate} we will eventually choose $\bar{\rho}$ sufficiently small in terms of $\delta_0$ whose choice then only depends on $v_2$, $v_3$, and $\rho$.

 Define the index set of \emph{defects} by 
\begin{align}
\mathcal{I}_{\rm def} &= \big\{ i = 2,\ldots,n-1 \, |\ |b_{i-1}  -
  \bar{b}| > \delta \, \text{ or } \, |b_i  - \bar{b}| > \delta \,
  \nonumber\\
&\qquad \text{ or } \,  \min\lbrace |\varphi_i - \pi - \bar{\psi}|,|\varphi_i - \pi + \bar{\psi}| \rbrace >\delta \big\} \label{eq: defect definition}
\end{align}
with $\bar{b}$ and $\bar{\psi}$ from Theorem \ref{th: energy-red}.  We introduce the set and the labeling 
\begin{align}\label{eq: wave,def}
\mathcal{I}_{\rm def}^* = \lbrace 1 ,n\rbrace \cup   \mathcal{I}_{\rm def} =    \lbrace i_1,\ldots,i_{J} \rbrace, \ \  J \in \Nz. 
\end{align}
Notice that in the parts of the chain between indices  $\mathcal{I}_{\rm def}^*$ we will be in the position to apply our results from Section \ref{sec: multi-period}. In particular, 
\begin{align}\label{eq: wave}
\mathcal{I}_{\rm wave} = \lbrace i_j   \,| \ j = 1,\ldots,J-1, \ i_{j+1} - i_j \ge 2 \rbrace
\end{align}
 denotes the indices of the first \UUU particles \EEE  of these parts of the chain. Similarly to \eqref{eq: sgndef}, we let 
\begin{align}\label{eq: sgn, def, without l}
\mathcal{I}_{\rm sgn} = \big\{ i = 3,\ldots, n-2  \,|\  i-1,i,i+1 \notin \mathcal{I}_{\rm def}, \   \varphi_{i} > \pi, \varphi_{i+1} < \pi \big\} \cup \mathcal{I}_{\rm wave}.
\end{align}
 We also introduce a decomposition of $\mathcal{I}_{\rm sgn}$   by (compare to \eqref{eq: sgndef-l})
\begin{align}\label{eq: sgn,def}
\mathcal{I}^l_{\rm sgn} = \lbrace i \in \mathcal{I}_{\rm sgn} \, | \ i+k \notin \mathcal{I}_{\rm sgn} \cup \mathcal{I}_{\rm def}  \text{ for } k=1,\ldots,l-1, \ i+l \in \mathcal{I}_{\rm sgn} \cup \mathcal{I}_{\rm def} \cup \lbrace n \rbrace \rbrace
\end{align}
for $l \in \Nz$, $l \ge 2$. We will also use the notation
\begin{align}\label{eq: sgn,j}
\mathcal{I}_{{\rm sgn},j} = \mathcal{I}_{\rm sgn} \cap [i_j,i_{j+1}-2], \ \ \ \ \ \ \mathcal{I}^l_{{\rm sgn},j} = \mathcal{I}^l_{\rm sgn} \cap [i_j,i_{j+1}-2]
\end{align}
for $i_j \in \mathcal{I}_{\rm wave}$. As $i_{j}-1  \notin \mathcal{I}_{\rm sgn}$ for $i_j \in \mathcal{I}_{\rm def}^*$ and $i_j \notin \mathcal{I}_{\rm sgn}$ for $i_j \in \mathcal{I}_{\rm def}^* \setminus \mathcal{I}_{\rm wave}$,  we get  $\bigcup_{i_j \in \mathcal{I}_{\rm wave}} \mathcal{I}_{{\rm sgn},j} = \mathcal{I}_{\rm sgn}$ and $\bigcup_{i_j \in \mathcal{I}_{\rm wave}} \mathcal{I}^l_{{\rm sgn},j} = \mathcal{I}^l_{\rm sgn}$. 
 Moreover, we introduce the number of \UUU particles \EEE related to different \UUU discrete-wave \EEE periods by
\begin{subequations}
\begin{align}
& n_{\rm good}' =  \# \mathcal{I}^{l'}_{\rm sgn}l', \ \ n_{\rm good}'' = \# \mathcal{I}^{l''}_{\rm sgn} l'',  \ \ n_{\rm good} = n_{\rm good}' + n_{\rm good}'', \ \ \  N_{\rm good} = \#(\mathcal{I}_{\rm sgn}^{l'}\cup \mathcal{I}_{\rm sgn}^{l''} )\label{subeq: good} \\
& n_{\rm odd} = \sum_{l \in 2\Nz+1}  \# \mathcal{I}^{l}_{\rm sgn}l, \ \ \ \ N_{\rm odd} = \sum_{l \in 2\Nz+1}  \# \mathcal{I}^{l}_{\rm sgn} \label{subeq: mix}\\
& n_{\rm bad} =  \sum_{l \in 2\Nz, l \neq l',l''}  \# \mathcal{I}^{l}_{\rm sgn}l, \ \ \ \ N_{\rm bad} =  \sum_{l \in 2\Nz, l \neq l',l''}  \# \mathcal{I}^{l}_{\rm sgn} \label{subeq: bad}\\
& n_{\rm def} = \# \mathcal{I}_{\rm def}.\label{subeq: def}
\end{align}
\end{subequations}
We indicate the waves with \UUU discrete-wave \EEE period $l'$ and $l''$ as \emph{good} since they are expected to appear in a configuration minimizing the energy \eqref{eq: general energy}, cf. Theorem \ref{th:4}. On the other hand, the other even  \UUU discrete-wave \EEE periods are called \emph{bad}. We also recall that in Section \ref{sec: single} and Section \ref{sec: multi-period} we have seen that waves with odd \UUU discrete-wave \EEE period have to be treated in a different way. Below we will show that the numbers $n_{\rm odd}$,  $ n_{\rm bad}$, and $n_{\rm def}$ are negligible with respect to $n_{\rm good}$. From the definitions in  \eqref{eq: wave,def}, \eqref{eq: wave}, and \eqref{eq: sgn,def} we also get
\begin{align}\label{subeq: def2} 
n_{\rm def}+1  \ge \# \lbrace i \in \mathcal{I}_{\rm def}^* \, | \ i+1 \in \mathcal{I}_{\rm def}^* \rbrace& = (n-1) - \sum_{i_j \in \mathcal{I}_{\rm wave}} (i_{j+1} -i_j) \notag \\ &=   (n-1)  -n_{\rm good} - n_{\rm bad} - n_{\rm odd}.
\end{align}   
Finally, we introduce the mean \UUU discrete-wave \EEE periods associated to the different parts. First, for the even \UUU discrete-wave \EEE periods we set 
\begin{align}\label{eq: atomic lengths}
l_*^{\rm good} = n_{\rm good}  ^{-1}  \big( n_{\rm good}'  l' + n_{\rm good}'' l'' \big), \ \ \ \ \ \ l^{\rm bad}_* =  n_{\rm bad}^{-1}  \sum_{l \in 2\Nz, l \neq l',l''}  \# \mathcal{I}^{l}_{\rm sgn}l^2.  
\end{align}
On the other hand, for the odd \UUU discrete-wave \EEE periods we define
\begin{align}\label{eq: atomic lengths-odd}
l_*^{\rm odd} =  \Big(\sum_{l \in 2\Nz} r_l l\Big)^{-1}  \sum_{l \in 2\Nz} r_l l^2,
\end{align}
where $(r_l)_{l \in 2\Nz}$ is some admissible sequence in \eqref{eq: odd-representation} with   $\sum_{l \in 2\Nz} r_l l\Lambda(l) = \mathcal{L}\big( (\#  \mathcal{I}^l_{\rm sgn})_{l \in 2\Nz +1 }\big)$.

\emph{Step  2:   Length of the chain.}   We consider   the indices  $\mathcal{I}_{\rm sgn}$ and estimate the length of the various contributions related to `good', `bad', and `odd' \UUU discrete-wave \EEE periods. We start with the bad \UUU discrete-wave \EEE periods. Using the fact that $\Lambda$ is  concave (see Lemma \ref{lemma: Lambda}) and applying   Jensen's inequality, we deduce by   \eqref{eq: atomic lengths}
\begin{align*}
\sum_{ l \in 2\Nz, l \neq l',l''}  \#   \mathcal{I}_{\rm sgn}^l \  l\Lambda(l)  
 & \le n_{\rm bad} \Lambda \Big(n_{\rm bad}^{-1}    \sum_{ l \in 2\Nz, l \neq l',l''}   \#   \mathcal{I}_{\rm sgn}^l \  l^2\Big) = n_{\rm bad} \Lambda(l_*^{\rm bad})\notag \\ 
 &\le n_{\rm bad} \Lambda  ( l_* )  +  n_{\rm bad} \Lambda'(l_* )(l_*^{\rm bad} - l_*),
\end{align*}
 where $\Lambda'(l_*)$ denotes the right derivative of $\Lambda$ at $l_*$.
More precisely, if $l^{\rm bad}_* \in [l'-1/2,l''+1/2] $, the strict concavity of $\Lambda$ on $[2,l_{\rm mid}]$ (see Lemma   \ref{lemma: Lambda} and Remark \ref{rem: concavity}) imply
\begin{align*}
\sum_{ l \in 2\Nz, l \neq l',l''}  \#   \mathcal{I}_{\rm sgn}^l \  l\Lambda(l)   
 & \le  n_{\rm bad} \Lambda  ( l_*^{\rm bad} )  - n_{\rm bad} c_\Lambda \le    n_{\rm bad} \Lambda  ( l_* )  +  n_{\rm bad} \Lambda'(l_* )(l_*^{\rm bad} - l_*) - n_{\rm bad} c_\Lambda 
\end{align*}
for a constant $c_\Lambda=c_\Lambda(\rho)>0$. On the other hand, if $l^{\rm bad}_* \notin [l'-1/2,l''+1/2] $, using again the strict concavity of $\Lambda$ and $l_* \in [l',l'']$, we get
$$\Lambda( l_*^{\rm bad} ) \le  \Lambda(l_*)  +  \Lambda'(l_* )(l_*^{\rm bad} - l_*) -c_\Lambda, $$
possibly passing to a smaller constant $c_\Lambda$.  Summarizing, in both cases   we get
\begin{align}\label{eq: sum-bad}
 \sum_{ l \in 2\Nz, l \neq l',l''} \#   \mathcal{I}_{\rm sgn}^l \  l\Lambda(l)      \le   n_{\rm bad}\Lambda( l_*) + n_{\rm bad}\Lambda'(l_* )(l_*^{\rm bad} - l_*) -n_{\rm bad}c_\Lambda.
 \end{align}
 Likewise, again using  \eqref{eq: atomic lengths} and the fact that
 $\Lambda$ is affine on $[l',l'']$,  we get for the good \UUU discrete-wave \EEE
 periods  that 
\begin{align}\label{eq: do not remove the label!}
\sum_{l =l',l''} \# \mathcal{I}_{\rm sgn}^{l} l\Lambda(l) &= n_{\rm good}' \Lambda(l') + n_{\rm good}'' \Lambda(l'')  = n_{\rm good}\Lambda(l_*^{\rm good})\notag \\
& \le  n_{\rm good}\Lambda(l_*) + n_{\rm good}\Lambda'(l_*)(  l_*^{\rm good} - l_*).
\end{align}
We now   address  odd \UUU discrete-wave \EEE periods. Recalling the definition of the maximal length of odd \UUU discrete-wave \EEE periods in \eqref{eq: odd-representation} and using \eqref{eq: sgn,j}, we derive
$$\sum_{i_j \in \mathcal{I}_{\rm wave}} \mathcal{L}\big( (\#  \mathcal{I}^l_{{\rm sgn},j})_{l \in 2\Nz +1 }\big) \le  \mathcal{L}\big( (\#  \mathcal{I}^l_{\rm sgn})_{l \in 2\Nz +1 }\big). $$
 Recall that $\Lambda$ is concave.  Then from \eqref{eq: atomic lengths-odd},  the fact that $\sum_{l \in 2\Nz} l r_l \le n_{\rm odd}$ (see  \eqref{eq: odd-representation} and \eqref{subeq: mix}),  and Jensen's inequality we get
\begin{align}\label{eq: sum-odds}
\sum_{i_j \in \mathcal{I}_{\rm wave}} \mathcal{L}\big( (\#  \mathcal{I}^l_{{\rm sgn},j})_{l \in 2\Nz +1 }\big) &\le \sum_{l \in 2\Nz} r_l l\Lambda(l)  \le \Big(\sum_{l \in 2\Nz} l r_l\Big) \  \Lambda\Big(  \Big(\sum_{l \in 2\Nz} l r_l\Big)^{-1} \sum_{l \in 2\Nz} r_l l^2 \Big)   \notag\\ & \le   n_{\rm odd} \Lambda(l_*^{\rm odd})   
 \le n_{\rm odd}\Lambda(l_*) + n_{\rm odd}\Lambda'(l_*)(l_*^{\rm odd} -l_*).
\end{align}
Now combining \eqref{eq: sum-bad}-\eqref{eq: sum-odds} and using \eqref{eq: sgn,j}  as well as $n_{\rm good} + n_{\rm bad} + n_{\rm odd} \le n-1$ (see \eqref{subeq: def2})   we derive
\begin{align}\label{eq: length combination}
L& := (n-1)\Lambda(l_*) + n_{\rm good}\Lambda'(l_*)(  l_*^{\rm good} - l_*)  + n_{\rm bad}\Lambda'(l_*)(  l_*^{\rm bad} - l_*)  +  n_{\rm odd}\Lambda'(l_*)(  l_*^{\rm odd} - l_*)\notag\\
& \ge \sum_{i_j \in \mathcal{I}_{\rm wave}} \Big( \sum_{l \in 2\Nz} \# \mathcal{I}_{{\rm sgn},j}^l   l\Lambda(l) +   \mathcal{L}\big( (\#  \mathcal{I}^l_{{\rm sgn},j})_{l \in 2\Nz +1 }\big) \Big) + n_{\rm bad} c_\Lambda.
\end{align}

We close this step with an estimate about the contribution of $\mathcal{I}_{\rm def}$. Recall the definitions of $\mathcal{I}_{\rm def}^*$ and $\mathcal{I}_{\rm wave}$  in \eqref{eq: wave,def}-\eqref{eq: wave}. For   each $j \in \lbrace 1,\ldots, J-1\rbrace$, we define
\begin{align}\label{eq: length-lambda}
  \lambda_j = (y_{i_{j+1}} - y_{i_j}) \cdot e_1.  
 \end{align}
 In view of the boundary conditions  $(y_n-y_1) \cdot e_1 = (n-1)\mu$ (see \eqref{eq: admissible conf-bdy}) and the fact that the length of each bond is bounded by $3/2$ (see  \eqref{eq: admissible conf}), we find by \eqref{subeq: def2}
\begin{align}\label{eq: length rest}
   \Big| (n-1)\mu - \sum_{i_j \in \mathcal{I}_{\rm wave}} \lambda_j\Big| = \Big| \sum_{i\in \mathcal{I}^*_{\rm def}: \ i+1 \in \mathcal{I}_{\rm def}^* } (y_{i+1} - y_{i}) \cdot e_1\Big| \le   3/2(n_{\rm def}+1).
\end{align}

 \emph{Step   3:   Energy estimates.} First, recalling   \eqref{eq: general energy},   \eqref{eq: wave,def}-\eqref{eq: wave} and defining $n_j = i_{j+1} - i_{j}+ 1 \ge 3$ for $i_j \in \mathcal{I}_{\rm wave}$, we get
\begin{align}\label{eq: energy1}
E_n(y) = \sum_{j=2}^{J-1} E_3( (y_{i_j-1},y_{i_j} , y_{i_j+1} ) )  + \sum_{i_j \in \mathcal{I}_{\rm wave}} E_{n_j} \big( (y_{i_j}, \ldots, y_{i_{j+1}}) \big) - 2\bar{\rho} n_{\rm def},
\end{align}
where we used that, by the decomposition at each defect two longer-range contributions are neglected and $v_2 \ge -1$. We consider the first sum. In view of the fact that the cell energy $\tilde{E}_{\rm cell}$ is minimized exactly for $(\bar{b},\bar{b},\pi + \bar{\psi})$ and $(\bar{b},\bar{b},\pi - \bar{\psi})$ by Lemma \ref{lemma: cell energy},  \eqref{eq: defect definition} implies for $j \in \lbrace 2 ,\ldots, J-1\rbrace$
\begin{align}\label{eq: energy2}
 \sum_{j=2}^{J-1}  E_3( (y_{i_j-1},y_{i_j} , y_{i_j+1} ) ) =  \sum_{j=2}^{J-1}  E_{\rm cell}(y_{i_j-1}, y_{i_j}, y_{i_j+1} ) \ge  n_{\rm def}  (e_{\rm cell} + c_{\rm def}) 
\end{align}
for a constant $c_{\rm def}=c_{\rm def}(\delta)>0$. As $\delta$ depends only on $\rho$, also $c_{\rm def}$ depends only on $\rho$. On the other hand, if $i_j \in \mathcal{I}_{\rm wave}$, we can apply   Lemma \ref{lemma: multi-general}   and get 
\begin{align}\label{eq: energy3}
E_{n_j} \big((y_{ i_j  }, \ldots, y_{i_{j+1}}) \big)  &\ge  (n_j-3)e^{\rm gen}_{\rm cell}  + e_{\rm cell} + (2 \# \mathcal{I}_{{\rm sgn},j}  - 3  ) \bar{\rho} e_{\rm per} - n_jc_{\rm range} \bar{\rho}^2   +   \frac{E_j^{\rm red}}{2},
\end{align}
where for brevity we have set  $E_j^{\rm red} = E_{n_j}^{\rm red}\big((y_{  i_j  }, \ldots, y_{i_{j+1}}) \big) - (n_j-2)e_{\rm cell} $. Here, we have also used that the set $\mathcal{I}_{{\rm sgn},j} $ coincides with the one considered in Section \ref{sec: multi-period}, see \eqref{eq: sgndef}  and \eqref{eq: sgn,j}. 

Our goal is to estimate the sum in \eqref{eq: energy1}. As a
preparation, we recall  that  $|e_{\rm cell} - e^{\rm gen}_{\rm cell} | \le c\bar{\rho}$, as observed below \eqref{eq: cellrange}, and  we calculate
\begin{align*}
\sum_{i_j \in \mathcal{I}_{\rm wave}} \big((n_j-3)e^{\rm gen}_{\rm cell}  + e_{\rm cell}\big) + n_{\rm def} e_{\rm cell} &\ge \Big(\sum_{i_j \in \mathcal{I}_{\rm wave}} (n_j-2 ) +  n_{\rm def} \Big) e^{\rm gen}_{\rm cell}  -  (n_{\rm def} + \# \mathcal{I}_{\rm wave}) c\bar{\rho}.
\end{align*}
Recalling $n_j = i_{j+1} - i_{j}+ 1$, by an elementary computation, using \eqref{eq: wave} and \eqref{subeq: def2}, we find that $\sum_{i_j \in \mathcal{I}_{\rm wave}} (n_j-2)  = (n-1) - \# \lbrace i \in \mathcal{I}_{\rm def}^* \, | \ i+1 \in \mathcal{I}_{\rm def}^* \rbrace - \#\mathcal{I}_{\rm wave}= n-2 - n_{\rm def}$. Thus, we obtain
\begin{align}\label{eq: energy4}
\sum_{i_j \in \mathcal{I}_{\rm wave}} \big((n_j-3)e^{\rm gen}_{\rm cell}  + e_{\rm cell}\big) + n_{\rm def} e_{\rm cell}  
& \ge (n-2)e^{\rm gen}_{\rm cell} - (2n_{\rm def} + 1) c\bar{\rho},
\end{align}
where we used that $\# \mathcal{I}_{\rm wave} \le n_{\rm def} +1$, see  \eqref{eq: wave,def}-\eqref{eq: wave}.  Similarly, we compute
\begin{align}\label{eq: energy5}
- \sum_{i_j\in \mathcal{I}_{\rm wave}} \big( n_jc_{\rm range}\bar{\rho}^2 +  3  \bar{\rho}e_{\rm per}\big)& \ge  - nc_{\rm range}\bar{\rho}^2 - \#\mathcal{I}_{\rm wave}(2c_{\rm range}\bar{\rho}^2  + 3  \bar{\rho}e_{\rm per}) \notag\\
& \ge  - nc_{\rm range}\bar{\rho}^2 - (n_{\rm def}+1)(2c_{\rm range}\bar{\rho}^2  + 3  \bar{\rho}e_{\rm per}).
\end{align}
Now combining \eqref{eq: energy1}-\eqref{eq: energy5}  and using $\sum_{i_j \in \mathcal{I}_{\rm wave}}  \# \mathcal{I}_{{\rm sgn},j} =\#\mathcal{I}_{\rm sgn}$,  we derive
 \begin{align*}
 E_n(y)& \ge (n-2)e^{\rm gen}_{\rm cell} + 2 \#\mathcal{I}_{\rm sgn}\bar{\rho}e_{\rm per}   - nc_{\rm range}\bar{\rho}^2
 + n_{\rm def}(c_{\rm def} - 2 c\bar{\rho} -2c_{\rm range}\bar{\rho}^2  -  3  \bar{\rho}e_{\rm per} - 2\bar{\rho}) \notag  \\
 & \ \ \   - 2c_{\rm range}\bar{\rho}^2  - 3  \bar{\rho}e_{\rm per}  -  c\bar{\rho}  + \sum_{i_j \in \mathcal{I}_{\rm wave}}  \frac{1}{2}E^{\rm red}_j.
 \end{align*}
As $c_{\rm def} = c_{\rm def}(\rho)$  and $c_{\rm range} = c_{\rm range}(\rho)$ are  independent of $\bar{\rho}$, we can select $\bar{\rho}$ so small that the last term in the first line can be bounded from below by $n_{\rm def}c_{\rm def}/2 + n_{\rm def}\bar{\rho}e_{\rm per}$. Thus,
 we derive
 \begin{align}\label{eq: energy bound}
 E_n(y)& \ge (n-2)e^{\rm gen}_{\rm cell} + 2 \#\mathcal{I}_{\rm
         sgn}\bar{\rho}e_{\rm per} -  nc_{\rm range} \bar{\rho}^2
         \nonumber \\
&+ n_{\rm def}(c_{\rm def}/2 +\bar{\rho}e_{\rm per}) -c_{\rm rest}\bar{\rho}  + \hspace{-0.1cm} \sum_{i_j \in \mathcal{I}_{\rm wave}} \frac{E^{\rm red}_j}{2} 
 \end{align}
for $c_{\rm rest} =   2c_{\rm range}\bar{\rho} +     3  e_{\rm per}  +  c$.  The next steps will be to derive suitable lower bounds for $2 \#\mathcal{I}_{\rm sgn}\bar{\rho}e_{\rm per}$ and $\sum_{i_j \in \mathcal{I}_{\rm wave}}  E^{\rm red}_j/2$ . We first estimate the latter. Recalling \eqref{eq: sgn,j}, \eqref{eq: length-lambda}, and the definition of $E_j^{\rm red}$ in \eqref{eq: energy3}, we find by  Lemma \ref{lemma: multi-reduced}  
$$E_{j}^{\rm red}  \ge  \frac{c_{\rm el}}{n_j} \Big(\lambda_j - \sum_{l \in 2\Nz} \# \mathcal{I}_{{\rm sgn},j}^l   l\Lambda(l) -\mathcal{L}\big( (\#  \mathcal{I}^l_{{\rm sgn},j})_{l \in 2\Nz +1 }\big) + n^j_{\rm odd}c_{\rm odd}   - { 4  } l_{\rm max} \Big)^2_+,$$
where $n_{\rm odd}^j = \sum_{l \in 2\Nz+1}  \# \mathcal{I}^{l}_{{\rm sgn},j} l $. Here, we have again used that the sets $\mathcal{I}_{{\rm sgn},j} $ and $\mathcal{I}_{{\rm sgn},j}^l$ coincide with the ones considered in Section \ref{sec: multi-period}. By a computation similar to the one before \eqref{eq: energy4}, using $\#\mathcal{I}_{\rm wave} \le n_{\rm def} + 1$, we   get   $\sum_{i_j \in \mathcal{I}_{\rm wave}} n_j = n-2 - n_{\rm def} + 2\# \mathcal{I}_{\rm wave}  \le n+n_{\rm def} \le 2n$. Then, taking the sum over all $i_j \in \mathcal{I}_{\rm wave}$ and using Jensen's inequality, we derive
$$\sum_{i_j \in \mathcal{I}_{\rm wave}} E^{\rm red}_j \ge \frac{c_{\rm el}}{2n} \Big( \sum_{i_j \in \mathcal{I}_{\rm wave}} \Big( \lambda_j - \sum_{l \in 2\Nz} \# \mathcal{I}_{{\rm sgn},j}^l   l\Lambda(l) -\mathcal{L}\big( (\#  \mathcal{I}^l_{{\rm sgn},j})_{l \in 2\Nz +1 }\big) + n^j_{\rm odd}c_{\rm odd}   -  4  l_{\rm max}\Big) \Big)^2_+.   $$
Consequently, in view of  \eqref{eq: length combination},   \eqref{eq: length rest},  and $\#\mathcal{I}_{\rm wave} \le n_{\rm def} + 1$,  we find
\begin{align}\label{eq: lastneu1}
 &\sum_{i_j \in \mathcal{I}_{\rm wave}} E^{\rm red}_j \nonumber\\
&\ge \frac{c_{\rm el}}{2n} \Big( (n-1)\mu - 3/2(n_{\rm def}+1) - L + n_{\rm bad} c_\Lambda  + n_{\rm odd}c_{\rm odd}   -  4  (n_{\rm def} +1)l_{\rm max} \Big)^2_+,
\end{align}
where   we have also used $\sum_{i_j \in \mathcal{I}_{\rm wave}} n_{\rm odd}^j = n_{\rm odd}$, see \eqref{eq: sgn,j} and \eqref{subeq: mix}.
 
 We now  consider the term $2 \#\mathcal{I}_{\rm sgn}\bar{\rho}e_{\rm per}$.  Recall that $\Upsilon$, defined in \eqref{eq: upsilon}, is  convex. By Jensen's inequality we compute with \eqref{subeq: bad} and \eqref{eq: atomic lengths}  
\begin{align}\label{eq: 1}
\sum_{ l \in 2\Nz, l \neq l',l''} 2 \#     \mathcal{I}_{\rm sgn}^l &  = n_{\rm bad} \sum_{ l \in 2\Nz, l \neq l',l''} n_{\rm bad}^{-1} \  \#  \mathcal{I}_{\rm sgn}^l \    l  \Upsilon(l) \ge n_{\rm bad} \Upsilon \Big(n_{\rm bad}^{-1} \sum_{ l \in 2\Nz, l \neq l',l''}  l^2  \#   \mathcal{I}_{\rm  sgn  }^l     \Big)\notag \\& =  n_{\rm bad} \Upsilon(l_*^{\rm bad})\ge n_{\rm bad} \big( \Upsilon(l_*) + \Upsilon'(l_*)(l_*^{\rm bad} - l_*)\big),
\end{align}
where $\Upsilon'(l_*)$ denotes the right derivative of $\Upsilon$ at
$l_*$. Likewise, for the good \UUU discrete-wave \EEE periods using \eqref{subeq:
  good}, \eqref{eq: atomic lengths}, and the fact that that $\Upsilon$
is affine on $[l',l'']$ we obtain 
\begin{align}\label{eq: 2}
 2  N_{\rm good} &= 2\# (\mathcal{I}^{l'}_{\rm sgn} \cup \mathcal{I}^{l''}_{\rm sgn})  = n_{\rm good}' \Upsilon(l') + n_{\rm good}'' \Upsilon(l'') = n_{\rm good} \Upsilon\big(n_{\rm good}^{-1} (n_{\rm good}' l'  + n_{\rm good}''  l'') \big)\notag \\
& = n_{\rm good} \Upsilon(l_*^{\rm good})  \ge n_{\rm good} \big( \Upsilon(l_*) + \Upsilon'(l_*)(l_*^{\rm good} - l_*) \big).
\end{align}  
Finally, using \eqref{subeq: mix}, \eqref{eq: atomic lengths-odd}, and the  facts  that $\sum_{l \in 2\Nz} r_l = \sum_{l \in 2\Nz +1} \#    \mathcal{I}^l_{\rm sgn}$,  $\sum_{l \in 2\Nz} l r_l \ge n_{\rm odd}-1$  (see \eqref{eq: odd-representation}) we obtain for the odd \UUU discrete-wave \EEE periods  by $\Upsilon(l^{\rm odd}_*) \le 1$ and Jensen's inequality   
\begin{align}\label{eq: 3}
\sum_{ l \in 2\Nz+1} 2 \#     \mathcal{I}_{\rm sgn}^l = \sum_{l \in 2\Nz} 2r_l = \sum_{l \in 2\Nz} r_l l\Upsilon(l) &\ge  (n_{\rm odd}-1)  \Upsilon(l_*^{\rm odd})  \ge n_{\rm odd} \Upsilon(l_*^{\rm odd}) - 1 \notag \\ &  \ge    n_{\rm odd}   \big( \Upsilon(l_*) + \Upsilon'(l_*)(l_*^{\rm odd} - l_*) \big)  - 1. 
\end{align}
In view of \eqref{subeq: def2},  \eqref{eq: energy bound}-\eqref{eq:
  3},  the fact that  $\mu = \Lambda(l_*)$, ,  and the definition of $L$ in \eqref{eq: length combination}, we now obtain the energy estimate
 \begin{align}\label{eq: energy bound2}
 E_n(y)& \ge (n-2)e^{\rm gen}_{\rm cell} + (n-2)\Upsilon(l_*)\bar{\rho}e_{\rm per}  -  nc_{\rm range} \bar{\rho}^2    + n_{\rm def}c_{\rm def}/2 -(c_{\rm rest} +  e_{\rm range})\bar{\rho}\notag  \\
 & \ \ \  + \Big( n_{\rm good}\Upsilon'(l_*)(l_*^{\rm good} - l_*) + n_{\rm bad}\Upsilon'(l_*)(l_*^{\rm bad} - l_*) + n_{\rm odd}\Upsilon'(l_*)(l_*^{\rm odd} - l_*) \big)   \Big)\bar{\rho}e_{\rm per} \notag\\
& \ \ \  +  \frac{c_{\rm el}}{4n} \Big(   (n_{\rm odd} + n_{\rm bad}) \min \lbrace c_{\rm odd},c_\Lambda \rbrace     -  6  (n_{\rm def} +1)l_{\rm max}\notag \\
& \ \ \ \ \ \ \ \ \ - n_{\rm good}\Lambda'(l_*)(l_*^{\rm good} - l_*)  - n_{\rm bad}\Lambda'(l_*)(  l_*^{\rm bad} - l_*)  -  n_{\rm odd}\Lambda'(l_*)(  l_*^{\rm odd} - l_*)\Big)^2_+.
 \end{align}

\emph{Step 4:  Conclusion:} We are now in  the position
of establishing  the lower bound for the energy. We will show
 that 
\begin{align}\label{eq: to show2}
\frac{1}{n-2} E_n(y) -  \big( e^{\rm gen}_{\rm cell}  + \Upsilon(l_*) \bar{\rho} e_{\rm per}\big)  \ge - c \Big(\bar{\rho}^2 + \frac{\bar{\rho}}{n} \Big)
\end{align}
for a constant $c=c(\rho)>0$. In view of the definition \eqref{eq: erange} and $l_* = \Lambda^{-1}(\mu)$, this yields the claim.   For brevity, we set $\beta_1 = n_{\rm good}/(n-2)$,  $\beta_2 = (n_{\rm bad} + n_{\rm odd})/(n-2)$, $\beta_3 = n_{\rm def}/(n-2)$, and  $\tilde{l} = (n_{\rm bad} l_*^{\rm bad} + n_{\rm odd} l_*^{\rm odd})/(n_{\rm bad} + n_{\rm odd})$. Moreover, we let
\begin{align}\label{eq: to show3}
H = \big( \beta_1 \Upsilon'(l_*)(l_*^{\rm good} - l_*) +  \beta_2 \Upsilon'(l_*)(\tilde{l} - l_*)\big) \bar{\rho} e_{\rm per} \ \ \ \text{ and } \ \ \ \kappa = \Lambda(l_*)/(\Upsilon'(l_*) \bar{\rho} e_{\rm per}).
\end{align}
Dividing \eqref{eq: energy bound2} by $n-2$, we see that,  in
order to derive  \eqref{eq: to show2},  it suffices to show that    
\begin{align}\label{eq: to show}
\begin{split}
G&:=   \beta_3c_{\rm def}/2 + H  +\frac{c_{\rm el}}{8} \Big(\beta_2 \min \lbrace c_{\rm odd},c_\Lambda \rbrace    -  6  (\beta_3 +2/n)l_{\rm max} - \kappa H \Big)_+^2 \\
& \ge - C(\bar{\rho}^2 +\bar{\rho}/n)
\end{split}
\end{align}
for $C=C(\rho)>0$. (Without restriction, we have supposed that  $n \ge 4$ such that $n-2 \ge n/2$.) To this end, we minimize the term on the right with respect to $H$ and observe that the minimum is attained when $H$ satisfies 
$$1-  \kappa c_{\rm el}/4 \big(\beta_2 \min \lbrace c_{\rm odd},c_\Lambda \rbrace    -  6  (\beta_3 +2/n)l_{\rm max} - \kappa H \big)_+ =0,$$
which leads to
$$H  = \beta_2 \min \lbrace c_{\rm odd},c_\Lambda \rbrace/\kappa -  4/(\kappa^2 c_{\rm el})  -  6 (\beta_3 +2/n)l_{\rm max} /\kappa .$$
 Thus, we obtain 
$$
G  \ge \beta_3c_{\rm def}/2 +   \beta_2 \min \lbrace c_{\rm odd},c_\Lambda \rbrace/\kappa -  4/(\kappa^2 c_{\rm el})  - {  6  } (\beta_3 +2/n)l_{\rm max} /\kappa + 2/(\kappa^2 c_{\rm el}).
$$
We recall from \eqref{eq: cellrange2} and \eqref{eq: to show3} that $1/\kappa \le c\bar{\rho}$ for a constant $c=c(\rho)>0$. Consequently, $ 6   l_{\rm max} /\kappa  \le c_{\rm def}/4$ when $\bar{\rho}$ is chosen sufficiently small. (Recall that $c_{\rm def} = c_{\rm def}(\rho)$,  see  \eqref{eq: energy2}.)   Since $1/\kappa \le c\bar{\rho}$ and the constants $c_{\rm el}$, $l_{\rm max}$ depend only on $\rho$,  this gives
\begin{align}\label{eq: Gestimate}
G  & \ge \beta_3c_{\rm def}/4 +   \beta_2 \min \lbrace c_{\rm odd},c_\Lambda \rbrace/\kappa -  2/(\kappa^2 c_{\rm el})  -   12  l_{\rm max} /(n\kappa) \notag \\
& \ge   \beta_3c_{\rm def}/4 +   \beta_2 \min \lbrace c_{\rm odd},c_\Lambda \rbrace/\kappa - C(\bar{\rho}^2 +\bar{\rho}/n) 
\end{align}
 for $C(\rho)> 0$.   The minimum is attained for $\beta_2 = \beta_3 = 0$  and thus  \eqref{eq: to show} holds. This concludes the proof. 
\end{proof}

We close with the characterization of almost minimizers.

\begin{proof}[Proof of Theorem \ref{th:3} and Theorem \ref{th:4}]
We treat the general case $\mu \in M$ considered in Theorem \ref{th:4}, from which the proof of Theorem \ref{th:3} can be deduced directly. Choose $\mu', \mu'' \in M_{\rm res}$ such that $(\mu',\mu'') \cap M_{\rm res} = \emptyset$ and $\mu = \nu \mu' + (1- \nu)\mu''$ for some $\nu \in [0,1]$. Let $l' = \Lambda^{-1}(\mu')$, $l'' = \Lambda^{-1}(\mu'') = l'+2$ and recall from \eqref{eq: lambda-l} that $\lambda(\mu') = l'\Lambda(l')$ and $\lambda(\mu'') = l''\Lambda(l'')$.  We also let $l_* = \nu l' + (1-\nu)l''$ and observe that $l_* = \Lambda^{-1}(\mu)$ since $\Lambda^{-1}$ is piecewise affine.   Suppose that $n\bar{\rho}^2 \ge 1$.  In the following, $C>0$ denotes a generic constant which may always depend on $\rho$ and $l_{\rm max}$ (and thus only on $\rho$, cf. Remark \ref{rem: psi}).

 Suppose that $y \in \mathcal{A}_n(\mu)$ is an almost minimizer, see \eqref{eq: almost minimizer}.  From the proof of the lower bound of Theorem \ref{th:1} we derive that   \eqref{eq: energy bound} and \eqref{eq: energy bound2}  hold. In particular, we recall the notations $G,H$, $\beta_1$, $\beta_2$, and $\beta_3$ in \eqref{eq: to show3}-\eqref{eq: to show}. By \eqref{eq: energy bound2}  and \eqref{eq: erange} we get
$$\frac{1}{n-2}E_n(y) \ge  e^{\rm gen}_{\rm cell}  + \Upsilon(l_*) \bar{\rho} e_{\rm per}  - c \big( \bar{\rho}^2+ \bar{\rho}/n\big) + G  = e^{\rm gen}_{\rm cell}  + \bar{\rho} e_{\rm range}(\mu)  - c \big( \bar{\rho}^2+ \bar{\rho}/n\big) + G.  $$ 
 Theorem   \ref{th:1},  the fact that $n\bar{\rho}^2 \ge 1$,
and \eqref{eq: almost minimizer}  imply 
\begin{align}\label{eq: newG}
E_{\rm min}^{n,\mu} + c  \bar{\rho}^2\ge \frac{1}{n-2}E_n(y) \ge E_{\rm min}^{n,\mu} - c  \bar{\rho}^2   + G  
\end{align}
which together with \eqref{eq: Gestimate}   gives $ \beta_3  c_{\rm
  def}/4   +  \beta_2 \min\lbrace c_{\rm odd},
c_\Lambda\rbrace/\kappa         \le          C  \bar{\rho}^2.$ As
$\delta=\delta(\rho)$ and $\kappa \le c'/\bar{\rho}$ for some
$c'=c'(\rho)>0$ (see \eqref{eq: to show3}), this  yields 
\begin{align}\label{eq: beta23}
(n_{\rm bad}+n_{\rm odd})/(n-2) &=\beta_2 \le C \bar{\rho},
                                  \nonumber\\
\ \ \  n_{\rm def}/(n-2) &=\beta_3 \le C  \bar{\rho}^2, \nonumber\\
 \ \  n_{\rm good}/(n-2) &= \beta_1 \ge 1 - C\bar{\rho}.
\end{align}
The last estimate follows from the fact that $\beta_1 + \beta_2 +\beta_3 \ge (n-2)/(n-2)=1$, see   \eqref{subeq: def2}.  
Recall  $N_{\rm good}$ from \eqref{subeq: good} and $\mathcal{I}_{\rm sgn}$ from \eqref{eq: sgn, def, without l}.  By \eqref{subeq: mix}, \eqref{subeq: bad}, and  \eqref{eq: beta23}, we  deduce
\begin{align}\label{eq: IIgood}
(\# \mathcal{I}_{\rm sgn} - N_{\rm good}) / n_{\rm good} = (N_{\rm bad} + N_{\rm odd}) / n_{\rm good} \le (n_{\rm bad} + n_{\rm odd}) / n_{\rm good} \le C   \bar{\rho}.
\end{align}
From \eqref{eq: to show},  \eqref{eq: newG}, and \eqref{eq: beta23} we get $H + (C'\bar{\rho}- \kappa H)_+^2 c_{\rm el}/8\le C \bar{\rho}^2$, where $C'=C'(\rho) \in \Rz$. Since $1/\kappa \le  c'  \bar{\rho}$ by \eqref{eq: to show3}, this implies $|H| \le C  \bar{\rho}^2$ after some computations. Recalling the definition of $H$  in \eqref{eq: to show3}  and using \eqref{eq: beta23},  we find
$$|\beta_1\Upsilon'(l_*)(l_*^{\rm good} - l_*) {  \bar{\rho}e_{\rm per}  } |  \le  C \bar{\rho}^2 + |\beta_2 \Upsilon'(l_*)(\tilde{l}-l_*)\bar{\rho}e_{\rm per}| \le C \bar{\rho}^2 $$
 where $l_*^{\rm good}$ was defined in \eqref{eq: atomic lengths}.  With \eqref{eq: beta23} we get $|l_*^{\rm good} - l_*| \le C \bar{\rho}$, which by \eqref{eq: atomic lengths} and   $l_* = \nu l' + (1-\nu)l''$ implies
\begin{align}\label{eq: lgood est}
|l_*^{\rm good} - l_*|  =  | ( n_{\rm good}'  l' + n_{\rm good}'' l'') /n_{\rm good}  - \big(\nu l' + (1-\nu)l''\big)  |  \le  C\bar{\rho}.  
\end{align}
Using $l'' - l' =2$ we obtain by a short computation  
\begin{align}\label{eq: n good est}
| n_{\rm good}'/n_{\rm good} - \nu|  + |n_{\rm good}''/n_{\rm good} - (1-\nu)| \le C\bar{\rho}.  
\end{align}
We introduce the parameter $\sigma = 2\nu/(l' \Upsilon(l_*))$ appearing in \eqref{eq: good portion2}. Note that $\sigma$ only depends on $\mu$, but is independent of $y$. Moreover, we have  $1-\sigma = 2(1-\nu)/(l'' \Upsilon(l_*))$.  This follows after some computations taking $\Upsilon(l_*) = (2\nu l'' + 2(1-\nu)l')/(l'l'')$ into account, where the latter is due  to $l_* = \nu l' + (1-\nu)l''$ and the fact that $\Upsilon$, defined in \eqref{eq: upsilon}, is affine on $[l',l'']$. For later purpose, we also note that by  $2N_{\rm good} =  n_{\rm good} \Upsilon(l_*^{\rm good})$ (see  \eqref{eq: 2}),    \eqref{eq: IIgood}, and \eqref{eq: lgood est} we get
\begin{align}\label{eq: I-N}
|2 \#\mathcal{I}_{\rm sgn} - n_{\rm good}\Upsilon(l_*)| &\le  |2N_{\rm good} - n_{\rm good}\Upsilon(l_*)| + Cn_{\rm good}\bar{\rho}\notag\\
& = n_{\rm good}| \Upsilon(l_*^{\rm good}) - \Upsilon(l_*)| + Cn_{\rm good}\bar{\rho} 
\le Cn_{\rm good}\bar{\rho}
\end{align}
 for  a constant $C$ depending also on the Lipschitz constant of $\Upsilon$.   We now show that
\begin{align}\label{eq: first-sigma}
| \# \mathcal{I}_{\rm sgn}^{l'} - \sigma \#\mathcal{I}_{\rm sgn}|/n  + | \# \mathcal{I}_{\rm sgn}^{l''} - (1-\sigma)\#\mathcal{I}_{\rm sgn}|/n \le C \bar{\rho}.
\end{align}
Indeed,  by  using  $l' \# \mathcal{I}_{\rm sgn}^{l'}=n_{\rm good}' $,  $\sigma = 2\nu/(l' \Upsilon(l_*))$,    \eqref{eq: n good est}, and \eqref{eq: I-N} we calculate  
\begin{align*}
\frac{1}{n}| \# \mathcal{I}_{\rm sgn}^{l'} - \sigma \#\mathcal{I}_{\rm sgn}| & =  \frac{n_{\rm good}}{nl'} | n_{\rm good}'/n_{\rm good} - \sigma \# \mathcal{I}_{\rm sgn} l'/n_{\rm good}| \le  | n_{\rm good}'/n_{\rm good} - \nu | + C\bar{\rho}  \le C\bar{\rho}.
\end{align*}
For $\# \mathcal{I}_{\rm sgn}^{l''}$ we argue likewise taking $1-\sigma = 2(1-\nu)/(l'' \Upsilon(l_*))$ into account.

Now \eqref{eq: first-sigma} is the starting point to prove  \eqref{eq: wavelength2}-\eqref{eq: good portion2}.  We have to show that most of the waves satisfy \eqref{eq: wavelength2}. To this end, fix  $\eps > 0$  and recalling $l'\Lambda (l') = \lambda(\mu')$, $l''\Lambda (l'') = \lambda(\mu'')$ we define 
$$ \mathcal{K}^{\rm bad} = \bigcup_{l = l',l''} \lbrace i \in \mathcal{I}_{\rm sgn}^{l} \, | \ \big||y_{i+l}-y_i| - l\Lambda(l)\big| { >  } \eps \rbrace,$$
as well as $\mathcal{K}' =  \mathcal{I}_{\rm sgn}^{l'} \setminus
\mathcal{K}^{\rm bad}$ and $\mathcal{K}'' =  \mathcal{I}_{\rm
  sgn}^{l''} \setminus \mathcal{K}^{\rm bad}$. To conclude the proof
of \eqref{eq: wavelength2}-\eqref{eq: good portion2}, it now remains
to show that 
\begin{align}\label{eq:LLL}
\# \mathcal{K}^{\rm bad}/n \le C_\eps\bar{\rho}
\end{align}
for a
constant $C_{\eps}=C_\eps(\eps,\rho)$ additionally depending on
$\eps$. Indeed, the  claim follows  from the definition of $\mathcal{K}^{\rm bad}$, \eqref{eq: first-sigma}, and the fact that $|\# \mathcal{C}(y) - \# \mathcal{I}_{\rm sgn}| \le Cn\bar{\rho}^2$ (see \eqref{eq: Cdef}, \eqref{eq: sgn, def, without l}, \eqref{subeq: def}, and \eqref{eq: beta23}).

 We now prove \eqref{eq:LLL}.  Recalling   \eqref{eq: wave}, \eqref{eq: sgn,def}, and applying Lemma \ref{lemma: Lambda-stretching} we get 
\begin{align*}
\sum_{l=l',l''}\sum_{i \in \mathcal{I}_{\rm sgn}^{l}} (|y_{i+l}-y_i| - l\Lambda(l))^2_+  \le C\sum_{i_j \in \mathcal{I}_{\rm wave}} E^{\rm red}_j
\end{align*}
for $C=C(\rho)$, where the abbreviation $E^{\rm red}_j$ was introduced
in \eqref{eq: energy3}. Then,  by  using  \eqref{eq: energy bound},  we find
\begin{align*}
\sum_{l=l',l''}\sum_{i \in \mathcal{I}_{\rm sgn}^{l}} (|y_{i+l}-y_i| - l\Lambda(l))^2_+  &\le  C\big(E_n(y) - (n-2)e^{\rm gen}_{\rm cell} - 2 \#\mathcal{I}_{\rm sgn}\bar{\rho}e_{\rm per} \big) +Cn\bar{\rho}^2 + C\bar{\rho}.  
\end{align*}
Then by   \eqref{eq: erange}, \eqref{eq: beta23}, \eqref{eq: I-N},  $l_* = \Lambda^{-1}(\mu)$, the fact that   $y$ is an almost minimizer \eqref{eq: almost minimizer}, Theorem \ref{th:1}, and $n \bar{\rho}^2 \ge 1$ we derive
\begin{align*}
\sum_{l=l',l''}\sum_{i \in \mathcal{I}_{\rm sgn}^{l}} (|y_{i+l}-y_i| - l\Lambda(l))^2_+  &\le    Cn\bar{\rho}^2.
\end{align*}
By H\"older's inequality we also derive
\begin{align}\label{eq: last Holder}
\sum_{l=l',l''}\sum_{i \in \mathcal{I}_{\rm sgn}^{l}} (|y_{i+l}-y_i| - l\Lambda(l))_+    \le C \sqrt{N_{\rm good}} \sqrt{n\bar{\rho}^2} \le Cn\bar{\rho}.
\end{align}
 In view of the boundary conditions  $(y_n-y_1) \cdot e_1 = (n-1)\mu$ (see \eqref{eq: admissible conf-bdy}) and the fact that the length of each bond is bounded by $3/2$ (see  \eqref{eq: admissible conf}), we find by \eqref{subeq: def2} and \eqref{eq: beta23}
\begin{align}\label{eq: lower bound}
\sum_{l=l',l''} \sum_{i \in \mathcal{I}_{\rm sgn}^l} |y_{i+l} - y_i| &\ge (n-1)\mu - \frac{3}{2}((n-1) - n_{\rm good}) \ge  (n-1)\mu - \frac{3}{2}(n_{\rm def} + n_{\rm odd} + n_{\rm bad} +1)  \notag \\
& \ge (n-1) \mu - C ( n\bar{\rho}+ 1 ) \ge (n-1) \mu - Cn  \bar{\rho},
\end{align}
 where we again used that $n \bar{\rho}^2 \ge 1$.  By \eqref{subeq: good}, \eqref{eq: atomic lengths}, and \eqref{eq: lgood est} we find
\begin{align}\label{eq: last length}
\sum_{l=l',l''} \sum_{i \in \mathcal{I}_{\rm sgn}^l} l\Lambda(l) = n_{\rm good} \Lambda(l^{\rm good}_*) \le (n-1)\Lambda(l_*) + Cn  \bar{\rho}   =  (n-1)\mu + Cn  \bar{\rho} 
\end{align}
for a constant $C$ depending on the Lipschitz constant of
$\Lambda$. In the first  equality  we again used that $\Lambda$ is affine on $[l',l'']$. Combining \eqref{eq: lower bound} and \eqref{eq: last length} we get
\begin{align*}
 - Cn  \bar{\rho} \le \sum_{l=l',l''} \sum_{i \in \mathcal{I}_{\rm sgn}^l} (|y_{i+l} - y_i| - l\Lambda(l))  
\end{align*}
This together with \eqref{eq: last Holder} shows $\sum_{l=l',l''} \sum_{i \in \mathcal{I}_{\rm sgn}^l} \big||y_{i+l} - y_i| - l\Lambda(l)\big| \le cn\bar{\rho}$ and yields \eqref{eq:LLL}. This   concludes the proof of \eqref{eq: wavelength2}-\eqref{eq: good portion2}. 

Finally, we recall that $\mu = \nu l' + (1-\nu)l''$ and $\sigma = 2\nu/\big(l' \Upsilon(\nu l' + (1-\nu)l'') \big)$. Thus, in case $\nu=1$ we have $\sigma = 1$ and in case $\nu = 0$ we have $\sigma = 0$. Consequently, also the special case described in Theorem \ref{th:3} follows.
\end{proof}

 
\section*{Acknowledgements}
\UUU 
The support by the Vienna Science and Technology Fund (WWTF)
Project MA14-009, by the Austrian Science Fund (FWF)
projects F\,65 and I\,4354, and by the von Humboldt
Foundation is gratefully acknowledged. \EEE


\bibliographystyle{alpha}

\end{document}